\newtheorem{thm}{Theorem}[section] %(If you want theorem numbered
\newtheorem{lemma}{Lemma}[section] %%    with section number.
\newtheorem{cor}{Corollary}[section]
\newtheorem{remark}{Remark}
\newenvironment{assumption}[1]
{\innercustomthm}
{\endinnercustomthm}
\newcommand\independent{\protect\mathpalette{\protect\independenT}{\perp}}
\def\independenT#1#2{\mathrel{\rlap{$#1#2$}\mkern2mu{#1#2}}}
\def\blfootnote{\xdef\@thefnmark{}\@footnotetext}
\newcommand{\rn}[1]{%
	\textup{\expandafter{\romannumeral#1}}%
}
\newcommand{\RN}[1]{%
	\textup{\uppercase\expandafter{\romannumeral#1}}%
}
\newcommand{\E}{\mathbb{E}}
\newcommand{\Pb}{\mathbb{P}}
\newcommand{\var}{\text{Var}}
\newcommand{\Qb}{\mathbb{Q}}
\newcommand{\Pn}{\mathbb{P}_n}
\newcommand{\stout}[1]{\ifmmode\text{\sout{\ensuremath{#1}}}\else\sout{#1}\fi}
\begin{document}
\doparttoc % Tell to minitoc to generate a toc for the parts
\faketableofcontents % Run a fake tableofcontents command for the partocs

% \part{} % Start the document part
% \parttoc % Insert the document TOC
	
	%%%%%%%%%%%%%%%%%%%%%%%%%%%%%%%%%%%%%%%%%%%%%%%%%%
	
	%\title{\bf Nonparametric Causal Effects based on Stochastic Interventions for Right-Censored Longitudinal Data}
	\title{\bf {Incremental Intervention Effects \\ in Studies with Dropout \\ and Many Timepoints}}
	\author{Kwangho Kim\thanks{
			\{Department of Statistics \& Data Science, Machine Learning Department\}, Carnegie Mellon University, 5000 Forbes Ave, Pittsburgh, PA 15213. Email: \texttt{kwanghk@cmu.edu}}, \quad
		Edward H. Kennedy\thanks{
			Department of Statistics \& Data Science, Carnegie Mellon University, 5000 Forbes Ave, Pittsburgh, PA 15213. Email: \texttt{edward@stat.cmu.edu}}, \quad
		Ashley I. Naimi\thanks{
			Department of Epidemiology, Rollins School of Public Health, Emory University, Atlanta, GA, USA; E-mail: \texttt{ashley.naimi@emory.edu}}
	}
	\date{}
	\maketitle
	\thispagestyle{empty}

	%%%%%%%%%%%%%%%%%%%%%%%%%%%%%%%%%%%%%%%%%%%%%%%%%%
	
	\begin{abstract}
  {
	    Modern longitudinal studies collect feature data at many timepoints, often of the same order of sample size. Such studies are typically affected by {dropout} and positivity violations. We tackle these problems by generalizing effects of recent incremental interventions (which shift propensity scores rather than set treatment values deterministically) to accommodate multiple outcomes and subject dropout. We give an identifying expression for incremental intervention effects when dropout is conditionally ignorable (without requiring treatment positivity), and derive the nonparametric efficiency bound for estimating such effects. Then we present efficient nonparametric estimators, showing that they converge at fast parametric rates and yield uniform inferential guarantees, even when nuisance functions are estimated flexibly at slower rates. We also study the variance ratio of incremental intervention effects relative to more conventional deterministic effects in a novel infinite time horizon setting, where the number of timepoints can grow with sample size, and show that incremental intervention effects yield near-exponential gains in statistical precision in this setup. Finally we conclude with simulations and apply our methods in a study of the effect of low-dose aspirin on pregnancy outcomes. \blfootnote{{Implementation of our method is publicly available at \url{https://github.com/kwangho-joshua-kim/Incremental-dropout}}}
 }
\end{abstract}
	
	\noindent%
	{\it Keywords:}  causal inference, time-varying confounding, right-censoring,  longitudinal study, positivity
	\vfill
	
	\clearpage
    \setcounter{page}{1}
	\newpage
	
	\section{Introduction}
	Causal inference has long been an important scientific pursuit, and understanding causal relationships is essential across many disciplines. However, for practical and ethical reasons, causal questions cannot always be evaluated via experimental methods (i.e., randomized trials), making observational studies the only viable alternative. Further, when individuals can be exposed to varying treatment levels over time, collecting appropriate longitudinal data is important. To that end, recent technological advancements that facilitate data collection are making longitudinal studies with a very large number of time points (sometimes of the same order of sample size) increasingly common \citep[e.g.,][]{kumar2013mobile, eysenbach2011consort, klasnja2015microrandomized}.

    The increase in observational studies with detailed longitudinal data has also introduced numerous statistical challenges that remain unaddressed. For longitudinal causal studies, two analytic frameworks are often invoked: effects of {deterministic fixed interventions} \citep{robins1986, robins2000marginal, hernan2000marginal}, in which all individuals are assigned to a fixed exposure level over all time-points; and effects of {deterministic dynamic interventions} \citep{murphy2001marginal, robins2004optimal} in which, at each time, treatment is assigned according to a fixed rule that depends on past history. In the real world, fixed deterministic interventions might not be of practical interest since the treatment  typically cannot be applied uniformly across a population \citep{Kennedy17}.
    
    Generally, deterministic interventions (fixed or dynamic) rely on a {positivity assumption}, which requires every unit to have a nonzero chance of receiving each of the available treatments at every time point. If the positivity assumption is violated, the causal effect of deterministic (fixed or dynamic) interventions will be no longer identifiable. Even under positivity, longitudinal studies are especially prone to the curse of dimensionality, since exponentially many samples are needed to learn about all treatment trajectories. These issues only worsen when the number of timepoints or covariates increases. Thus, due to a lack of sufficiently flexible analytic methods for longitudinal data, researchers are often forced to either rely on strong parametric assumptions, or forego the estimation of causal effects altogether \citep[e.g.][]{kumar2013mobile}.
    
    {
    One strategy to address such issues in deterministic interventions is to consider stochastic interventions that depend on the observational treatment process and thus are random at each timepoint \citep[e.g.,][]{van2007causal, young2014identification,munoz2012population, haneuse2013estimation, moore2012causal}. Recently, \citet{Kennedy17} proposed novel \textit{incremental intervention effects} which quantify effects of shifting treatment propensities, rather than effects of setting treatment to fixed values. Importantly, incremental effect estimators do not require positivity, and can still achieve $\sqrt{n}$ rates with flexible nonparametric methods. Despite these strengths, the method has not yet been adapted to general longitudinal studies where multiple right-censored outcomes are common (as is common in studies with human subjects). The right-censored outcomes can result in biased estimates of incremental intervention effects unless properly adjusted. This is akin to the well-known concept of confounding bias, and will likely be amplified over time in our case. However, extension to the right censoring setup for incremental intervention effects is not straightforward as, for example, it requires computing new remainder terms to construct the estimators.
    } 
    % Additionally, the relative efficiency of such incremental intervention effects over traditional deterministic effects has never been formally assessed - neither theoretically nor empirically - especially for very dense longitudinal data with a large number of timepoints.
    
    In this paper we propose a more comprehensive form of incremental intervention effects that accommodate not only time-varying treatments, but time-varying outcomes subject to right censoring (i.e., dropout). We provide an identifying expression for incremental intervention effects when dropout is conditionally ignorable, still without requiring (treatment) positivity, and derive the nonparametric efficiency bound for estimating such effects. We go on to present efficient nonparametric estimators, showing that they converge at fast rates and give uniform inferential guarantees, even when nuisance functions are estimated at much slower rates with flexible machine learning tools. Importantly, we study the variance ratio of incremental effects to more conventional deterministic effects in a novel infinite time horizon setting, where the number of timepoints can grow with sample size to infinity. We specifically show that incremental intervention effects can reduce the variance near exponentially, thus yielding extraordinary gains in statistical precision in this setup. Finally, we conduct a simulation study and show that our proposed methods can successfully adjust for subject dropout in incremental intervention effects, and apply our methods to a longitudinal study of the effect of low-dose aspirin on pregnancy outcomes.
% Hereafter, we use ``dropout" synonymously with ``right-censoring".
	
	\section{Setup}
	
	 We consider a study where for each subject we observe covariates $X_t \in \mathbb{R}^d$, treatment $A_t \in \mathbb{R}$, and outcome $Y_t \in \mathbb{R}$, with all variables allowed to vary over time $t$, but where subjects can drop out or be lost to follow-up. In particular, we consider the case where we want to observe a sample of i.i.d observations $(Z_1, ... , Z_n)$ from a probability distribution $\Pb$ with, for those subjects who remain in the study up to the final timepoint $t=T$, 
    \[ Z = (X_1, A_1, Y_1, X_2, A_2, Y_2, ..., X_T, A_T, Y_T). \]
    But in general we only get to observe
    \begin{align} \label{setup:causal-process}
    Z = \left(X_1, A_1, R_2, R_2(Y_1, X_2, A_2), ... , R_T, R_T(Y_{T-1}, X_T, A_T), R_{T+1}, R_{T+1}Y_{T} \right)
    \end{align}
    where 
    $R_t = \mathbbm{1} \text{\{ still in the study at time t\}}$ is 
     an indicator for whether the subject contributes data at time $t$. We write $R_t(Y_{t-1}, X_t, A_t)$ as a shorthand for $(R_tY_{t-1}, R_tX_t, R_tA_t)$, so in the missingness process that we consider, subjects can drop out at each time after the measurement of covariates/treatment. This is motivated by the fact that this is likely the most common type of dropout, since outcomes $Y_t$ at time $t$ are often measured together with or just prior to covariates $X_{t+1}$ at time $t+1$. As we consider a monotone dropout (i.e., right-censoring) process, $R_t$ is non-increasing in time $t$, i.e.,
    \begin{align*}
    & 
    \begin{cases}
    R_t = 1 \ \Rightarrow & {(R_1,...,R_{t-1}) = \bm{1} } \\
    R_t = 0 \ \Rightarrow & {(R_{t+1},...,R_T) = \bm{0} }, 
    \end{cases}
    \end{align*}
    where $\bm{0}, \bm{1}$ are vectors of zeros and ones respectively. Thus our data structure $Z$ is a chain with $t$-th component
    $$
    \left\{R_t, R_t(Y_{t-1}, X_t, A_t)\right\}
    $$
    for $t = 1,...,T+1$, where $R_1=1$ and we do not use $Y_{0}$ or $X_{T+1}, A_{T+1}$. Although we suppose each subject's dropout will occur before the $t$-th stage, our data structure also covers the case when the dropout will occur after the $t$-th stage because in that case we can write 
    $$
    \left\{R_t(Y_{t-1}, X_t, A_t), R_{t+1} \right\}
    $$
    as the $t$-th component of our chain. % and the general results remains the same.
    
    For simplicity, we consider binary treatment in this paper, so that the support of each $A_t$ is $\mathcal{A} = \{0,1\}$. We use overbars and underbars to denote all the past history and future event of a variable respectively, so that $\overline{X}_t = (X_1, ... , X_t)$ and $\underline{A}_t = (A_t, ... , A_T)$ for example. We also write $H_t = ( \overline{X}_t, \overline{A}_{t-1}, \overline{Y}_{t-1})$ to denote all the observed past history just prior to receiving treatment at time $t$, with support $\mathcal{H}_t$. Finally, we use lower-case letters $a_t, h_t, x_t$ to represent realized values for $A_t, H_t, X_t$, unless stated otherwise.
    
    Now that we have defined our data structure we turn to our estimation goal, i.e., which treatment effects we aim to estimate. We use  $Y_t^{\overline{a}_t}$ to denote the potential (counterfactual) outcome at time $t$ that would have been observed under a treatment sequence $\overline{a}_t=(a_1,...,a_t)$ (note that we have $Y_t^{\overline{a}_T}=Y_t^{\overline{a}_t}$ as long as the future cannot cause the past). In longitudinal causal problems it is common to pursue quantities such as $\E(Y_t^{\overline{a}_{t}})$, i.e., the mean outcome at a given time under particular treatment sequences $\overline{a}_t$; for example one might compare the mean outcome under $\overline{a}_t=\bm{1}$ versus $\overline{a}_t=\bm{0}$, which represents how outcomes would change if all versus none were treated at all times. However identifying these effects requires strong positivity assumptions (i.e., that all have some chance at receiving every treatment at every time), and estimating these effects often requires untenable parametric assumptions especially when $t \gg 1$. % especially when we have many timepoints.
    
    Following \citet{Kennedy17} we instead consider incremental intervention effects, which represent how mean outcomes would change if the odds of treatment at each time were multiplied by a factor $\delta$ (e.g., $\delta=2$ means odds of treatment are doubled). Incremental interventions shift propensity scores rather than impose treatments themselves; they represent what would happen if treatment were gradually more or less likely to be assigned, relative to the natural/observational treatment, in the population. Since they are `population-level' effects, they are useful for giving an interpretable picture to understand the overall societal effects, but will likely be less useful than classical deterministic effects for making specific recommendations about optimal treatment. Nonetheless, there are a number of benefits of studying incremental intervention effects: for example, positivity assumptions can be entirely and naturally avoided; complex effects under a wide range of intensities can be summarized with a single curve in $\delta$, no matter how many timepoints $T$ there are; and they more closely align with actual intervention effects than their fixed treatment regime counterparts. We refer to \citet{Kennedy17} for more discussion and details on the tradeoff between deterministic and incremental intervention effects.

    Formally, incremental interventions are dynamic stochastic interventions where treatment is assigned based on new interventional propensity scores defined by
    \begin{align}  \label{eqn:incr-intv-ps}
    q_t(h_t; \delta,\pi_t) = \frac{\delta\pi_{t}(h_t)}{\delta\pi_{t}(h_t) + 1 - \pi_{t}(h_t)},
    \end{align}    
    not the observational propensity scores $\pi_t(h_t) = \Pb(A_t=1 \mid H_t=h_t)$. In other words, $q_t$ is a shifted version of $\pi_t$ obtained by multiplying the odds of receiving treatment by $\delta$.
    We denote potential outcomes under the above intervention as $Y_{t}^{\overline{Q}_t(\delta)}$ where $\overline{Q}_t(\delta) = \{Q_1(\delta), ... , Q_{t}(\delta)\}$ represents a sequence of draws from the conditional distributions $Q_s(\delta) \mid H_s=h_s \sim \text{Bernoulli}\{q_s( h_s; \delta, \pi_s)\}$, $s=1,...,t$. We often drop $\delta$ and write $Q_t = Q_t(\delta)$ when the dependence is clear from the context. Note here we use capital letters for the intervention indices since they are random, as opposed to $Y_t^{\overline{a}_t}$ where the intervention is deterministic. Therefore in this paper, we aim to estimate the mean counterfactual outcome 
    $$ \psi_t(\delta) = \E\left(Y_t^{\overline{Q}_t(\delta) } \right) $$
    for any $t\leq T$. In the next section we describe the necessary conditions for identifying $\psi_t(\delta)$ in the presence of dropout.
	
	{
	\begin{remark}
	 To be precise, the incremental effect $\psi_t(\delta)$ is the compounding effect by the two different changes. Consider only the first two timepoints. In this case the propensity score under the incremental intervention at the later timepoint will be different from its observational value for two reasons: 1) $\delta$ multiplied to the propensity scores, and 2) covariates at the earlier timepoint that have been changed by the resultant (incremental) intervention. With many timepoints, in a long term, these effects are compounded over time and just manifested as a single number of the incremental effect. This nuance stems from the nature of incremental interventions, i.e., the way they depend on the observational treatment process through $q_t$. 
	\end{remark}
	}

	\section{Identification}
	In this section, we will give assumptions under which the entire marginal distribution of the  counterfactual outcome $Y_t^{\overline{Q}_t(\delta)}$ is identified. Specifically, we require the following assumptions for all $t \leq T$.
	
	\begin{assumption}{A1} \label{assumption:A1} 
	$Y_t = Y_t^{\overline{a}_t}$ if $\overline{A}_t = \overline{a}_t$
	\end{assumption}
	\begin{assumption}{A2-E} \label{assumption:A2-E} 
	$A_{t^\prime} \independent Y_{t}^{\overline{a}_{t}} \mid H_{t^\prime}$, $\forall t^\prime \leq t$
	\end{assumption}
	\begin{assumption}{A2-M} \label{assumption:A2-M} 
	$R_t \independent (\underline{X}_t, \underline{A}_t, \underline{Y}_{t-1}) \mid H_{t-1}, A_{t-1}, R_{t-1}=1$
	\end{assumption}
	\begin{assumption}{A3} \label{assumption:A3}
	$\Pb(R_{t} = 1 \mid H_{t-1}, A_{t-1}, R_{t-1}=1) > \epsilon_\omega$ for some $\epsilon_\omega > 0$ a.s.
	\end{assumption}

	Assumptions (\ref{assumption:A1}) and (\ref{assumption:A2-E}) correspond to consistency and exchangeability (or sequential ignorability) respectively, which are commonly adopted in the literature. Consistency means that the observed outcomes are equal to the corresponding potential outcomes under the observed treatment sequence, and would be violated in settings with interference for example. Exchangeability means that the treatment and counterfactual outcome are independent, conditional on the observed past (if there were no dropout), i.e., that treatment is as good as randomized at each time conditional on the past. Experiments ensure that exchangeability holds by construction. %, but in observational studies it cannot be justified so in general we instead require sufficiently many relevant adjustment covariates to be collected.
	
	In our work, we additionally require assumptions (\ref{assumption:A2-M}) and (\ref{assumption:A3}) because of the missingness/dropout. (\ref{assumption:A2-M}) is the standard time-varying missing-at-random (MAR) assumption for monotone missingness, ensuring that dropout is independent of the future conditioned on the observed history up to the current time point \citep[e.g.,][]{national2010prevention, robins1995analysis, van2003unified}. One may think of this type of MAR assumption as a sequentially random dropout process, where the decision to drop out at time $t$ is like the flip of a coin, with probability of ‘heads’ (dropout) depending only on the measurements recorded through time $t-1$ \citep[][Chapter 4]{national2010prevention}. This would be a reasonable assumption if we can collect enough data to explain the dropout process, so we can ensure that those who dropout look like those who do not, given all past observed data. (\ref{assumption:A3}) is a positivity assumption for missingness, meaning that each subject in the study has some non-zero chance at staying in the study at the next timepoint. This would be expected to hold in many studies, but may not if some subjects are `doomed' to drop out based on their specific measured characteristics. 
	
	%{Note that assumptions (\ref{assumption:A2-M}) and (\ref{assumption:A3}) also appear in classical works on missing data theory \citep[e.g.][]{robins1995analysis, robins1994estimation}}.
	
	Importantly, here we do not require positivity conditions on the propensity scores as we are targeting the effects $\psi_t(\delta)$ with the incremental intervention $q_t$ defined in (\ref{eqn:incr-intv-ps}), not deterministic effects. The next result gives an identifying expression for $\psi_t(\delta)$ under the above assumptions.
	
	\begin{thm} \label{thm:ident-exp}
		Suppose identification assumptions (\ref{assumption:A1}) - (\ref{assumption:A3}) hold. Then for all $t \leq T$, the incremental effect on outcome $Y_t$ with given value of $\delta \in [\delta_l, \delta_u]$,   $0 < \delta_l \leq \delta_u < \infty$, equals
		\begin{equation} \label{eqn:ident-exp}
			\begin{aligned}
				& \psi_t(\delta) 
				& = \underset{\overline{\mathcal{X}}_{t}\times \overline{\mathcal{A}}_{t}}{\int} \mu(h_{t},a_{t}, R_{t+1}=1)  \prod_{s=1}^{t} q_s(a_s \mid h_s, R_s=1) d\nu(a_s) \ d\Pb(y_{s-1}, x_s \mid h_{s-1},a_{s-1}, R_s=1), 
			\end{aligned}
		\end{equation}
		where $\overline{\mathcal{X}}_{t} = \mathcal{X}_1 \times \cdots \times \mathcal{X}_t$, $\overline{\mathcal{A}}_{t} = \mathcal{A}_1 \times \cdots \times \mathcal{A}_t$, \\
		$\mu(h_{t},a_{t}, R_{t+1}=1) = \E(Y_{t} \mid H_{t} = h_{t}, A_{t} = a_{t}, R_{t+1}=1)$, and
		\begin{equation} \label{eqn:incremental-density}
			q_s(a_s \mid h_s, R_s=1)=\frac{a_s\delta\pi_{s}(h_s, R_s=1) + (1-a_s) \{ 1 - \pi_s(h_s, R_s=1) \} }{\delta\pi_{s}(h_s, R_s=1) + 1 - \pi_{s}(h_s, R_s=1)}.
		\end{equation}
		Here, $\pi_{s}(h_s, R_s=1)=\Pb(A_s=1 \mid H_s=h_s,R_s=1)$ and $\nu$ is some dominating measure for the distribution of $A_s$.
	\end{thm}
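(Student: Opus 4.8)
The plan is to prove the identity by induction over the time points, reducing the counterfactual mean to a nested sequence of conditional expectations and peeling off one layer $(X_s, A_s, R_s)$ at a time. Since the incremental intervention is a \emph{dynamic stochastic} intervention in which, at each time $s$, treatment is redrawn from the Bernoulli law $q_s(\cdot\mid H_s)$ while covariates continue to follow their natural transition, $Y_t^{\overline{Q}_t(\delta)}$ is a mixture of the deterministic counterfactuals $Y_t^{\overline{a}_t}$ weighted by the interventional propensities. The first move is therefore to write $\psi_t(\delta) = \E(Y_t^{\overline{Q}_t(\delta)})$ as an iterated expectation in which, reading from the innermost layer outward, $Y_t^{\overline{a}_t}$ is averaged against $\prod_{s} q_s(a_s\mid h_s)\,d\nu(a_s)$ and the natural covariate law $dP(x_s\mid h_{s-1},a_{s-1})$.

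Two conceptual ingredients drive the recursion. For the treatment side I would use consistency (A1) to replace $Y_t^{\overline{a}_t}$ by the observed $Y_t$ on $\{\overline{A}_t=\overline{a}_t\}$, and sequential exchangeability (A2-E) to exchange the counterfactual for an ordinary conditional expectation given the observed past $H_s$; in the absence of dropout this already reproduces the full-population incremental g-formula of \citet{Kennedy17}, i.e.\ the claimed expression but with every conditional taken over the entire population and without the events $\{R_s=1\}$. The genuinely new work is the passage from these full-population conditionals to the in-study conditionals that appear in (\ref{eqn:ident-exp}).

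That passage is carried out by the missingness assumptions, but \emph{sequentially}, layer by layer, rather than as a single global substitution: A2-M only asserts independence of $R_s$ from the present-and-future $(\underline{X}_s,\underline{A}_s,Y)$ \emph{conditional on the observed past} $(H_{s-1},A_{s-1},R_{s-1}=1)$, which is exactly the conditioning set available at the $s$-th layer of the recursion. Combined with monotone dropout ($R_s=1\Rightarrow R_{s-1}=1$), this shows that, at each layer, inserting the event $\{R_s=1\}$ does not alter the relevant conditional law; matching each variable to the index at which it first becomes observable then identifies the natural covariate transition as $dP(x_s\mid h_{s-1},a_{s-1},R_s=1)$ and the observational propensity as $\pi_s(h_s,R_s=1)$ (hence the interventional density as $q_s(\cdot\mid h_s,R_s=1)$), while a single application of A2-M at time $t+1$ to $Y_t$ (observed precisely when $R_{t+1}=1$) produces the outcome regression $\mu(h_t,a_t,R_{t+1}=1)$. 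Positivity of missingness (A3) enters only to keep $\Pb(R_s=1\mid\cdot)$ bounded away from zero, so that none of these conditional quantities is defined on a null event.

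The main obstacle I anticipate is precisely the bookkeeping of this sequential replacement: because A2-M conditions on the past, stripping the conditioning on $\{R_s=1\}$ from a full-population conditional cannot be done in one step but must be unwound recursively, consuming at each stage exactly the covariate, treatment, and outcome information that becomes available, and all of this must be interleaved consistently with the stochastic-intervention weighting inside every level of the iterated expectation. A secondary point worth checking is that no treatment positivity is ever invoked: since $q_s$ in (\ref{eqn:incremental-density}) is well-defined for every $\pi_s\in[0,1]$, the stochastic intervention absorbs the lack of treatment overlap, and only the missingness positivity (A3) is required.
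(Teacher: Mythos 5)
Your proposal is correct and follows essentially the same route as the paper: it first invokes the no-dropout incremental g-formula of \citet{Kennedy17} (via consistency and sequential exchangeability), and then argues layer by layer that, under (\ref{assumption:A2-M}), (\ref{assumption:A3}), and monotone dropout, each full-population conditional—the covariate transition, the propensity score, and the outcome regression—can be replaced by its in-study counterpart conditional on $\{R_s=1\}$, which is exactly the content of the paper's Lemma \ref{lem:identification} and its telescoping density-factorization proof.
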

	
	When we derive the identification result in Theorem \ref{thm:ident-exp}, as in \citet{Kennedy17} we use the g-formula \citep{robins1986} where we put in the incremental intervention for the treatment distribution and a point mass for the right-censoring indicator of $1$, followed by applying the identification lemma (Lemma \ref{lem:identification} in the appendix) under the additional assumptions (\ref{assumption:A2-M}) and (\ref{assumption:A3}). The next corollary illustrates what this identification result gives in the simple point-exposure study. %, where dropout amounts to mere missing outcomes.
	
	\begin{cor} \label{cor:ident-exp-pt-trt}
		When $T=1$, the data structure reduces to
		\begin{align*}
			Z=(X, A, R, RY),
		\end{align*}    
		thus in this case $R=1$ means the outcome is not missing. Then the identifying expression simplifies to
		\begin{equation*} 
			\begin{aligned}
				&\psi(\delta) = \E \left[ \frac{\delta\pi(X)\mu(X,1,1) + \{1 - \pi(X)\}\mu(X,0,1) }{\delta\pi(X) + \{1 - \pi(X)\}} \right]
			\end{aligned}
		\end{equation*}
		where $\pi(X) = \Pb(A=1 \mid X)$ and $\mu(x,a,1) = \E(Y \mid X = x, A = a, R=1)$.
	\end{cor}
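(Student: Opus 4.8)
The plan is to obtain Corollary~\ref{cor:ident-exp-pt-trt} as a direct specialization of the identifying expression~\eqref{eqn:ident-exp} in Theorem~\ref{thm:ident-exp} to the single-timepoint case $T=t=1$, so that the only work is to track the indexing conventions and evaluate the remaining one-dimensional integral over treatment explicitly.

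First I would read off what each ingredient of~\eqref{eqn:ident-exp} becomes when $t=1$. Since there is no prior treatment or outcome, the history collapses to $H_1 = X_1 = X$, and by the setup convention $R_1 \equiv 1$ so that conditioning on $R_1=1$ is vacuous; in particular $\Pb(x_1 \mid R_1=1) = \Pb(x)$ and $\pi_1(h_1, R_1=1) = \Pb(A=1 \mid X=x) = \pi(X)$. The outcome $Y_1 = Y$ is observed exactly when $R_{T+1} = R_2 = 1$, which is the quantity relabeled $R$ in the corollary, so $\mu(h_1, a_1, R_2=1) = \E(Y \mid X=x, A=a, R=1) = \mu(x,a,1)$. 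The product $\prod_{s=1}^{t}$ reduces to a single $s=1$ factor, leaving $\psi(\delta) = \int \mu(x,a_1,1)\, q_1(a_1 \mid x, R_1=1)\, d\nu(a_1)\, d\Pb(x)$.

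Next I would carry out the inner integral over $a_1$. Taking $\nu$ to be counting measure on $\mathcal{A}=\{0,1\}$, formula~\eqref{eqn:incremental-density} gives $q_1(1 \mid x) = \delta\pi(x)/\{\delta\pi(x)+1-\pi(x)\}$ and $q_1(0 \mid x) = \{1-\pi(x)\}/\{\delta\pi(x)+1-\pi(x)\}$, so summing the two terms yields $[\delta\pi(x)\mu(x,1,1) + \{1-\pi(x)\}\mu(x,0,1)]/[\delta\pi(x)+1-\pi(x)]$. Recognizing the outer integral $\int \cdot \, d\Pb(x)$ as $\E[\cdot]$ over $X$ then reproduces exactly the claimed expression, completing the derivation.

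I do not anticipate a genuine obstacle here, since the statement is a corollary by substitution rather than a new argument; the only point requiring care is the bookkeeping of the censoring index, namely that the indicator governing whether $Y$ is seen is $R_{T+1}=R_2$, which the corollary writes as $R$, so that the general $\mu(h_t,a_t,R_{t+1}=1)$ correctly matches $\mu(x,a,1)$ in the point-exposure notation.
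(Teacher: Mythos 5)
Your proposal is correct and matches the paper's treatment: the corollary is obtained exactly as the $t=T=1$ specialization of the identifying expression in Theorem~\ref{thm:ident-exp}, with $H_1=X$, $R_1\equiv 1$ making that conditioning vacuous, and the integral over $\mathcal{A}=\{0,1\}$ under counting measure producing the stated weighted average of $\mu(X,1,1)$ and $\mu(X,0,1)$. Your bookkeeping of the censoring index ($R=R_2=R_{T+1}$) is also the right point to be careful about, and you handle it correctly.
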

	
	Therefore when $T=1$, the effect $\psi(\delta)$ is simply a weighted average of the two regression functions $\mu(X,1,1)$, $\mu(X,0,1)$ among those with observed outcomes, with weights depending on the  propensity scores and $\delta$.

	\section{Efficiency Theory} \label{sec:efficiency-theory}
	
	In the previous section, we showed that the incremental intervention effect adjusted for subject dropout can be identified without requiring any positivity conditions on the treatment process. Our main goal in this section is to develop a nonparametric efficiency theory based on the efficient influence function for $\psi_t(\delta)$.
	
	The efficient influence function plays a crucial role in non/semiparametric efficiency theory because 1) its variance gives an asymptotic efficiency bound, and 2) its form indicates how to do an appropriate bias correction in order to construct estimators that attain such efficiency bound. Mathematically, given a target parameter $\psi$ an \textit{influence function} $\phi$ acts as the derivative term in a distributional analog of a Taylor expansion, which can be seen to imply
	\begin{equation} \label{def:influence-function-1}
    \frac{\partial \psi(\Pb_\epsilon)}{\partial\epsilon} \Big\vert_{\epsilon=0} = \int \phi(z;\Pb) \left( \frac{\partial \log d\Pb_\epsilon(z)}{\partial\epsilon} \right)\Big\vert_{\epsilon=0} \ d\Pb(z)
	\end{equation}
	for all smooth parametric submodels $\Pb_\epsilon$ containing the true distribution at $\epsilon=0$, i.e., $\Pb_{\epsilon=0}=\Pb$. Of all the influence functions, the \textit{efficient influence function} is defined as the one which gives the greatest lower bound of all parametric submodel $\Pb_\epsilon$, so giving the efficiency bound for estimating $\psi$. For more details we refer to Section \ref{sec:ifdetails} in the appendix and references therein \citep{Bickel98,vaart98, van2003unified, Tsiatis06, Kennedy16}.
	
	The next theorem gives an expression for the efficient influence function for our incremental effect $\psi_t(\delta)$, under a nonparametric model. %, which is the main result in this section.
	
	\begin{thm} \label{thm:eif}
		The (uncentered) efficient influence function for the intervention effect $\psi_t(\delta)$, $\forall t \leq T$, is given by
		\begin{equation*} 
			\begin{aligned}	
				& \sum_{s=1}^{t} \left( \frac{1}{\delta A_s + 1-A_s}  \right) \Bigg[ \frac{\left\{m_{s}(H_{s}, 1) -m_{s}(H_{s}, 0)\right\}\delta(A_s- \pi_{s}(H_{s}))\omega_s( H_s, A_s)}{\delta\pi_{s}(H_{s}) + 1 - \pi_{s}(H_{s})} \\ 
				& \quad + \begin{pmatrix}
                \delta m_{s}(H_{s}, 1)\left\{\pi_{s}(H_{s})\omega_s( H_s, A_s)-A_sR_{s+1}\right\}  \\
                + m_{s}(H_{s}, 0)\left\{(1-\pi_{s}(H_{s}))\omega_s( H_s, A_s) - (1-A_s)R_{s+1} \right\}
                \end{pmatrix} \Bigg] \\
                & \quad \times \prod_{k=1}^{s} \left\{ \frac{\delta A_k + 1-A_k}{\delta\pi_k(H_k) + 1-\pi_k(H_k)} \cdot\frac{R_{k}}{\omega_{k}( H_{k}, A_{k})} \right\} 
				+ \prod_{s=1}^{t} \left\{ \frac{\delta A_s + 1-A_s}{\delta\pi_s(H_s) + 1-\pi_s(H_s)}\cdot\frac{R_{s}}{\omega_s( H_s, A_s)} \right\}Y_t R_{t+1},
			\end{aligned}
		\end{equation*}  
		where $\pi_s(h_s) = \Pb(A_s=1 \mid H_s=h_s, R_s=1)$, $\omega_s( H_s, A_s) = d\Pb(R_{s+1}=1 \mid H_s, A_s,R_s=1)$, and
		\begin{align*}
			m_s&(h_s,a_s, R_{s+1}=1) \\
			& = \int_{ \mathcal{R}_s} \mu(h_{t},a_{t}, R_{t+1}=1)  \prod_{k=s+1}^{{t}} q_k(a_k \mid h_k, R_k=1) d\nu(a_k) d\Pb(y_{k-1},x_k|h_{k-1},a_{k-1}, R_k=1) 
		\end{align*} 
		for $\forall s \leq t$. Here $\mathcal{R}_s = (\overline{\mathcal{X}}_{t}\times \overline{\mathcal{A}}_{t}) \setminus (\overline{\mathcal{X}}_{s}\times \overline{\mathcal{A}}_{s})$, $\mu(h_{t},a_{t}, R_{t+1}=1) = \E(Y_t \mid H_{t} = h_{t}, A_{t} = a_{t}, R_{t+1}=1)$, and $\nu$ is a dominating measure for the distribution of $A_k$.
	\end{thm}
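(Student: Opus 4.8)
The plan is to obtain the efficient influence function by pathwise differentiation of the identifying functional in Theorem~\ref{thm:ident-exp} along smooth one-dimensional submodels $\Pb_\epsilon$, and to read off the gradient from the defining identity~\eqref{def:influence-function-1}. The key structural simplification is that under a nonparametric model the tangent space is all of $L_2^0(\Pb)$, so any valid gradient is automatically the \emph{efficient} influence function (it necessarily lies in the tangent space, and its variance is the greatest lower bound over submodels). Hence it suffices to exhibit one function $\phi$ for which $\partial_\epsilon \psi_t(\delta;\Pb_\epsilon)\vert_{\epsilon=0} = \E[\phi \cdot S]$ for every score $S$. I would exploit the sequential factorization of the observed-data likelihood into the conditional laws of $X_s$, $A_s$, $R_{s+1}$, and $Y_t$ given the past: this induces an orthogonal decomposition of the tangent space into blocks indexed by time and variable type, and I would compute the contribution of $\phi$ to each block separately and then sum. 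Organizing the bookkeeping as an induction on $t$, built on the recursion
\begin{equation*}
m_s(h_s,a_s,R_{s+1}=1) = \int q_{s+1}(a_{s+1}\mid h_{s+1},R_{s+1}=1)\, m_{s+1}(h_{s+1},a_{s+1},R_{s+2}=1)\, d\nu(a_{s+1})\, d\Pb(x_{s+1}\mid h_s,a_s,R_{s+1}=1),
\end{equation*}
with $m_t=\mu$, keeps the product rule manageable.

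The main computational device is a change-of-measure identity that converts integrals against the interventional path distribution into $\Pb$-expectations. Since the ratio of the incremental density~\eqref{eqn:incremental-density} to the observational propensity is $(\delta A_s + 1-A_s)/(\delta\pi_s(H_s)+1-\pi_s(H_s))$, and since later covariates and the outcome are observed only when the subject is retained (with retention probability $\omega_s$), for any integrable $g$ one has
\begin{equation*}
\int g\, \prod_{s=1}^{t} q_s(a_s\mid h_s,R_s=1)\, d\nu(a_s)\, d\Pb(x_s\mid h_{s-1},a_{s-1},R_s=1) = \E\left[ g\, \prod_{s=1}^{t} \frac{\delta A_s+1-A_s}{\delta\pi_s(H_s)+1-\pi_s(H_s)}\cdot\frac{R_{s+1}}{\omega_s(H_s,A_s)} \right],
\end{equation*}
which is exactly the cumulative weight appearing in the statement. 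Differentiating the product form of $\psi_t(\delta;\Pb_\epsilon)$ term by term then splits into four families of contributions. The outcome block, from perturbing the law of $Y_t$, pairs against the score $S_{Y_t}$ with the full cumulative weight and yields the final $\prod_s \{\cdots\} Y_t$ summand (more precisely a weighted $(Y_t-\mu)$, with $\mu$ absorbed into the augmentation). The covariate blocks telescope into the definition of $m_s$ via the recursion, contributing the conditional centering. The treatment block, from differentiating $q_s$ through its dependence on $\pi_s$, produces the first line of the bracket: the factor $\{\delta m_s(H_s,1)-m_s(H_s,0)\}(A_s-\pi_s(H_s))$ is precisely $\partial_{\pi_s} q_s$ reweighted and centered so as to be orthogonal to the other blocks.

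The delicate part—and the step I expect to be the main obstacle—is the censoring block. Because the regression functions $\mu$ and $m_s$ are defined conditionally on $R_{s+1}=1$, perturbing the retention law $\omega_s$ simultaneously moves the inverse-probability-of-censoring weight $R_{s+1}/\omega_s$ \emph{and} the conditioning event defining $m_s$. Disentangling these two effects, verifying that the resulting correction equals the stated matrix term $\delta m_s(H_s,1)\{\pi_s(H_s)\omega_s(H_s,A_s)-A_sR_{s+1}\} + m_s(H_s,0)\{(A_s-1)R_{s+1}+\omega_s(H_s,A_s)(1-\pi_s(H_s))\}$, and checking that this term is mean zero given $(H_s,A_s,R_s=1)$ (hence orthogonal to every other block), is where the careful accounting is needed. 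Here assumption~\eqref{assumption:A3} guarantees the weights are well defined, and assumption~\eqref{assumption:A2-M} is what lets the perturbed censoring density factor out cleanly.

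Finally I would reassemble the four contributions. Regrouping the treatment and censoring terms under the common prefactor $\prod_{k=1}^{s-1}\{(\delta A_k+1-A_k)/(\delta\pi_k+1-\pi_k)\}\cdot R_{k+1}/\omega_k$ and collecting the outcome term gives exactly the displayed expression, which is the \emph{uncentered} influence function (the centered gradient plus $\psi_t(\delta)$), equivalently the augmented inverse-probability estimating function. Because any gradient in a nonparametric model is efficient, no separate projection onto the tangent space is required, and the efficiency claim follows immediately once $\phi$ is verified to satisfy~\eqref{def:influence-function-1} for the rich submodel class generated by the four likelihood blocks.
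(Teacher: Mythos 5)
Your overall strategy is the same one the paper uses: compute the pathwise derivative of the identifying functional from Theorem~\ref{thm:ident-exp}, organize the computation along the sequential factorization of the observed-data law into covariate, treatment, censoring, and outcome blocks, exploit the recursion $m_s=\int m_{s+1}\,dQ_{s+1}\,d\Pb(x_{s+1}\mid h_s,a_s,R_{s+1}=1)$, and invoke the fact that in a nonparametric model any gradient is automatically the efficient influence function. The paper formalizes exactly this as a factor-by-factor application of an influence-function ``chain rule'' (parts A--D in the proofs of Lemmas~\ref{lem:eif_1} and~\ref{lem:eif_2}, with Lemma~\ref{lem:eif_dQ} handling the dependence of $q_s$ on $\pi_s$), followed by an algebraic rearrangement into the displayed form. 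So the route is not genuinely different; your score-pairing language and the paper's point-mass Gateaux calculus are two dressings of the same computation.

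The issue is that your proposal stops short of the proof at precisely the step you yourself flag as the obstacle. The censoring corrections do not arise in the paper from perturbing a separate ``retention block'': the identifying functional contains no $\omega_s$ at all (everything is conditioned on $R_s=1$), so the $R_{s+1}/\omega_s$ weights emerge only from the normalizing denominators $d\Pb(h_s,a_s,R_{s+1}=1)$ of the point-mass perturbations of the \emph{conditional} regressions and densities (see part A and part B of the paper's proof, and the identities of Lemma~\ref{lem:identification} that let one rewrite those denominators as products of $\pi_k$ and $\omega_k$). You describe what must be verified but do not carry out this bookkeeping, nor the final regrouping that produces the bracketed expression. Moreover, your stated orthogonality check is off: the augmentation term
\begin{equation*}
\int_{\mathcal{A}_s} m_s(H_s,a_s)\,dQ_s(a_s)\;-\;m_s(H_s,A_s)\,\mathbbm{1}(R_{s+1}=1)\,\frac{dQ_s(A_s)}{d\Pb_s(A_s)\,\omega_s(H_s,A_s)}
\end{equation*}
(which, combined with the $\int m_s\phi_s\,d\nu$ contribution from Lemma~\ref{lem:eif_dQ}, yields the matrix term) is \emph{not} mean zero given $(H_s,A_s,R_s=1)$ --- conditioning on $A_s$ leaves residuals proportional to $(A_s-\pi_s(H_s))$ and $\pi_s(H_s)\omega_s(H_s,A_s)-A_s\omega_s(H_s,A_s)$. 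The correct property is that it has mean zero given $(H_s,R_s=1)$ after averaging over both $R_{s+1}$ (replacing $\mathbbm{1}(R_{s+1}=1)$ by $\omega_s$) and $A_s\sim d\Pb_s$. As written, the proposal is a sound plan that matches the paper's architecture, but the central computation it promises is left undone and the one concrete verification it asserts is stated with the wrong conditioning set.
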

	
	The proof is given in Appendix \ref{proof:thm-eif}. This result will be used to construct an efficient, model-free estimator for our new incremental intervention effects in the next section. In Theorem \ref{thm:eif} all terms are to be estimated via regression tools or simply obtained from the observed data. Note that we have new weighting terms such as $\frac{\mathbbm{1}\left(R_{s}=1\right)}{\omega_s( H_s, A_s)}$ that are used to adjust for dropout effects at each stage $s \leq t$. As one may expect, should all the data be fully observed (i.e., $\Pb[R_t=0]=1$ a.e $[\Pb]$ for all $t\leq T$), both the identifying expression and efficient influence function will reduce to the formulas presented in \citet{Kennedy17}.  
	
	\begin{remark}
	     Although we derived the above efficient influence function from first principles, based on the  pathwise differentiability in \eqref{def:influence-function-1}, it could equivalently be derived %under the same identifying assumption \eqref{assumption:A1}--\eqref{assumption:A3} 
	     using results on mapping complete- to observed-data influence functions under general coarsening at random \citep[e.g.,][]{robins1994estimation, robins1995analysis, van2003unified, Tsiatis06}. However, in either case, computing error bounds requires the derivation of the second-order remainder terms in  von Mises expansion, which is new in our work and not immediate from the earlier results.
	\end{remark}

	The above efficient influence function involves three types of nuisance functions: the treatment propensity scores $\pi_s(H_s)$, the missingness/dropout propensity scores $\omega_s( H_s, A_s)$, and the psuedo outcome regression functions $m_s(H_s,A_s,R_{s+1}=1)$, $\forall s\leq t$. As in \citet{Kennedy17}, each $m_s$ can be estimated through sequential regressions without resorting to complicated conditional density estimation, since they are marginalized versions of the full regression functions $\mu(h_{s},a_{s}, R_{s+1}=1)$ that condition on all in the past. We give the sequential regression formulation for $m_s$ in Appendix \ref{sequential-regression-formulation}.
	
	The efficient influence function corresponding to $T=1$ follows a relatively simple and intuitive form, equaling a weighted average of the efficient influence functions for $\E(Y^1)$ and $\E(Y^0)$ plus contributions from the propensity scores $\omega_s, \pi_s$. We give this result in Appendix \ref{eif-for-T=1} as well.

	\section{Estimation and Inference}
	
	\subsection{Proposed Estimator} 
	\label{subsec:proposed-estimator}
	In this section we develop an estimator that can attain fast $\sqrt[]{n}$ rates, even when other nuisance functions are modeled nonparametrically and estimated at slower rates. 
	
	To begin, let $\varphi(Z;\bm{\eta},\delta, t)$ denote the uncentered efficient influence function from Theorem \ref{thm:eif}, which is a function of $Z$, indexed by a set of nuisance functions
	$$
	\bm{\eta} = (\bm{\pi, m, \omega}) = \left(\pi_1,...,\pi_{t}, m_1,...,m_{t}, \omega_1,...,\omega_{t} \right),
	$$
	$\delta$, and $t \leq T$, where $\pi_t, m_t, \omega_t$ are the same nuisance functions defined in Theorem \ref{thm:eif}. 
	
	Since $\E[\varphi(Z;\bm{\eta},\delta, t)] = \psi_t (\delta)$, a natural estimator would be the naive plug-in $Z$-estimator
	$$
	\hat{\psi}_{inc.pi}(t; \delta) = \Pb_n \{ \varphi(Z;\hat{\bm{\eta}},\delta, t) \} 
	$$
	where $\hat{\bm{\eta}}$ represents a set of nuisance function estimates and $\Pb_n$ denotes the empirical
	measure so that sample averages can be written by $\frac{1}{n}\sum_i f(Z_i) = \Pb_n\{f(Z)\} = \int f(z) d\Pb_n(z)$.
	
	If we assume $\pi_{t}$ and $\omega_{t}$ were correctly parametrically modeled, then one could use the following simple inverse-probability-weighted (IPW) estimator
	$$
	\hat{\psi}_{inc.ipw}(t; \delta) = \Pb_n \left\{\prod_{s=1}^{t} \left( \frac{\delta A_s + 1-A_s }{\delta\hat{\pi}_s(H_s) + 1-\hat{\pi}_s(H_s)} \cdot \frac{\mathbbm{1}\left( R_{s+1}=1 \right)}{\hat{\omega}_{s}(H_s, A_s)} \right)Y_t \right\}.
	$$
	Note that this IPW estimator is a special case of $\hat{\psi}_{inc.pi}$ where $\hat{m}_s$ is set to zero for all $s \leq t$.
	
	However, the above inverse-weighted or plug-in $Z$-estimators typically require both strong parametric assumptions and empirical process conditions (e.g., Donsker-type or low entropy conditions) that restrict the flexibility of the nuisance estimators. Especially, the latter is due to using the data twice (once for estimating the nuisance functions, again for estimating the bias, i.e., the average of the uncentered influence function), thus can cause overfitting. To avoid this downside and make our estimator more practically useful, here we use sample splitting \citep{zheng10,chernozhukov16double,Kennedy17,robins2008estimation}. As will be seen shortly, sample splitting allows us to avoid complex empirical process conditions even when all the nuisance functions $\bm{\eta}$ are arbitrarily flexibly estimated. Further, bias-corrected influence function-based estimators allow us to withstand slower rates for nuisance estimation while attaining faster rates for estimation of the parameter of interest. 
	
	Now we give an algorithm allowing slower than $\sqrt{n}$ rates and non-Donsker complex nuisance estimation as follows. First, we randomly split the observations $(Z_1, ..., Z_n)$ into $K$ disjoint groups, using a random variable $S_i$, $i=1,...,n$, drawn independently of the data, where each $S_i \in \{1,...,K\}$ denotes the group membership for unit $i$. Then our proposed estimator is given by
	\begin{equation} \label{eqn::proposed-estimator}
		\widehat{\psi}_t (\delta) = \Pb_n\left\{\varphi(Z;\hat{\bm{\eta}}_{-S},\delta, t)\right\}  \equiv \frac{1}{K}\sum_{k=1}^{K} \Pb_n^{(k)} \{ \varphi(Z;\hat{\bm{\eta}}_{-k},\delta, t) \}
	\end{equation}
	where we let $\Pb_n^{(k)}$ denote sample averages only over a group $k$, i.e., $\{i : S_i = k\}$, and let $\hat{\bm{\eta}}_{-k}$ denote the nuisance estimator constructed excluding the group $k$. We detail exactly how to compute the proposed estimator $\widehat{\psi}_t(\delta)$ in Appendix \ref{sec:algorithm}. 
	
	Our methods effectively utilize all the observed samples available at each time, without any need for discarding a subset of observed sample in advance. It is also worth noting that our algorithm is amenable to parallelization due to the sample splitting.

	\subsection{Asymptotic Theory}

	This subsection is devoted to characterizing an asymptotic behavior of our proposed estimator, that $\widehat{\psi}_t(\delta)$ is
	$\sqrt[]{n}$-consistent and asymptotically normal even when the nuisance functions are estimated nonparametrically at much slower than $\sqrt{n}$ rates. 
	
	In what follows we denote the $L_2(\Pb)$ norm of function $f$ by $\Vert f \Vert = \left(\int f(z)^2 d\Pb(z) \right)^{1/2}$, to distinguish it from the ordinary $L_2$ norm $\Vert \cdot \Vert_2$ for a fixed vector. Also note that although we used $m_s$ to denote the pseudo-regression function defined in Theorem \ref{thm:eif}, in principle they are indexed by both the time $s$ and increment parameter $\delta$ as in $m_{s,\delta}$. The next theorem shows uniform convergence of $\hat{\psi}_t (\delta)$, which lays the foundation for subsequent statistical inferential and testing procedures.
	
	\begin{thm} \label{thm:convergence}
		Define the variance function as
		$\sigma^2(\delta, t)=\E\left[\left(\varphi(Z;\bm{\eta},\delta, t) - \psi_t(\delta) \right)^2 \right]$ and let $\hat{\sigma}^2(\delta, t)=\Pb_n\left[\left(\varphi(Z;\hat{\bm{\eta}}_{-S},\delta, t) - \hat{\psi}_t(\delta) \right)^2 \right]$ denote its estimator. Assume:
		\begin{itemize}
			\item [1)] The set $\mathcal{D}=[\delta_l, \delta_u]$ is bounded with $0 < \delta_l \leq \delta_u < \infty$.
			\item [2)] $\Pb\left[ \mid m_s(H_s, A_s, R_{s+1}=1) \mid \leq C \right]= \Pb\left[ \mid \hat{m}_s(H_s, A_s, R_{s+1}=1) \mid \leq C \right] = 1$, $\forall s \leq t$, for some constant $C<\infty$.
			\item [3)] $\sup_{\delta \in \mathcal{D}} \big| \frac{\hat{\sigma}^2(\delta, t)}{\sigma^2(\delta, t)} -1 \big| = o_\Pb(1)$, and $\| \sup_{\delta \in \mathcal{D}} \mid \varphi(Z;\bm{\eta},\delta, t) - \varphi(Z;\hat{\bm{\eta}}_{-S},\delta, t) | \|= o_\Pb(1)$.
			\item [4)] $
			\left( \underset{\delta\in \mathcal{D}}{sup}\| m_{\delta,s} -  \widehat{m}_{\delta,s} \| + \| \pi_s -  \widehat{\pi}_{s} \| \right) \Big(  \| \widehat\pi_r - {\pi}_r \| + \| \widehat\omega_r - {\omega}_r \| \Big)  = o_\Pb\left(\frac{1}{\sqrt{n}}\right)
			$, $\forall r \leq s \leq t$.
		\end{itemize}
		Then we have
		$$
		\frac{\hat{\psi}_t (\delta) - \psi_t (\delta)}{\hat{\sigma}(t, \delta)/\sqrt[]{n}} \leadsto \mathbb{G}(\delta, t)
		$$
		in $l^{\infty}(\mathcal{D})$, where $\mathbb{G}$ is a mean-zero Gaussian process with covariance\\ $\E[\mathbb{G}(\delta_1, t_1)\mathbb{G}(\delta_2, t_2)]=\E\left[\widetilde{\varphi}(Z;\bm{\eta},\delta_1, t_1) \widetilde{\varphi}(Z;\bm{\eta},\delta_2, t_2)\right]$ and $\widetilde{\varphi}(Z;\bm{\eta},\delta, t) = \frac{\varphi(Z;\bm{\eta},\delta, t) - \psi_t(\delta)}{\sigma(\delta, t)}$.
	\end{thm}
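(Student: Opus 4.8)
The plan is to follow the standard cross-fitting / von Mises expansion route, but carried out uniformly in $\delta$ so as to produce a Gaussian process limit in $\ell^\infty(\mathcal{D})$. Writing $\Pb f = \int f\, d\Pb$ and recalling that $\Pb\{\varphi(Z;\bm{\eta},\delta,t)\} = \psi_t(\delta)$ because $\varphi$ is the (uncentered) efficient influence function, I would begin from the fold-wise decomposition: for each fold $k$,
\begin{align*}
\Pb_n^{(k)}\{\varphi(Z;\hat{\bm{\eta}}_{-k},\delta,t)\} - \psi_t(\delta)
&= (\Pb_n^{(k)}-\Pb)\{\varphi(Z;\bm{\eta},\delta,t)\} \\
&\quad + \underbrace{(\Pb_n^{(k)}-\Pb)\{\varphi(Z;\hat{\bm{\eta}}_{-k},\delta,t)-\varphi(Z;\bm{\eta},\delta,t)\}}_{T_{2,k}(\delta)} \\
&\quad + \underbrace{\Pb\{\varphi(Z;\hat{\bm{\eta}}_{-k},\delta,t)\}-\psi_t(\delta)}_{T_{3,k}(\delta)}.
\end{align*}
Averaging over the $K$ folds, the leading term assembles to $(\Pb_n-\Pb)\{\varphi(Z;\bm{\eta},\delta,t)\}$. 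The goal is then to show that $\sqrt{n}$ times this leading term converges to the desired Gaussian process, while $\sqrt{n}\sup_{\delta}|T_{2,k}(\delta)|$ and $\sqrt{n}\sup_{\delta}|T_{3,k}(\delta)|$ are $o_\Pb(1)$.

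For the leading term I would invoke a functional central limit theorem for the class $\mathcal{F}=\{z\mapsto \varphi(z;\bm{\eta},\delta,t):\delta\in\mathcal{D}\}$. Since $\mathcal{D}$ is a compact interval (Assumption 1) and the $\delta$-dependence of $\varphi$ enters only through the denominators $\delta\pi_s(H_s)+1-\pi_s(H_s)$ (bounded away from zero on $\mathcal{D}$) and through numerators multiplied by the $m_s$ (bounded under Assumption 2) and the missingness weights (bounded under (A3)), the map $\delta\mapsto\varphi(z;\bm{\eta},\delta,t)$ is Lipschitz with a square-integrable envelope, so $\mathcal{F}$ is a one-parameter smoothly indexed and hence $\Pb$-Donsker class. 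This yields $\sqrt{n}(\Pb_n-\Pb)\varphi(\cdot;\bm{\eta},\cdot,t)\leadsto \mathbb{G}'$ in $\ell^\infty(\mathcal{D})$, a tight mean-zero Gaussian process with covariance $\E[(\varphi(\cdot,\delta_1,t_1)-\psi_{t_1})(\varphi(\cdot,\delta_2,t_2)-\psi_{t_2})]$; because the time index ranges over the finite set $\{1,\dots,T\}$, the joint-in-$t$ covariance entries follow from applying the same FCLT to the finite union of such classes. The empirical-process term $T_{2,k}$ is precisely what sample splitting controls: conditioning on the out-of-fold data, $\hat{\bm{\eta}}_{-k}$ is fixed and $T_{2,k}(\delta)$ is a centered i.i.d.\ average with conditional variance $O(n^{-1}\|\varphi(\cdot;\hat{\bm{\eta}}_{-k},\delta,t)-\varphi(\cdot;\bm{\eta},\delta,t)\|^2)$; bounding the supremum over $\delta$ by the envelope and applying a conditional Chebyshev inequality, the second part of Assumption 3 gives $\sqrt{n}\sup_{\delta}|T_{2,k}(\delta)|=o_\Pb(1)$ with no entropy restriction on the nuisance estimators.

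The crux, and the step I expect to be the main obstacle, is the conditional bias $T_{3,k}$. Here I would expand $\Pb\{\varphi(\cdot;\hat{\bm{\eta}}_{-k},\delta,t)\}-\psi_t(\delta)$ as the second-order remainder of the von Mises expansion of $\psi_t(\delta)$ and show it factorizes into products of nuisance errors. Exploiting the recursive sequential-regression structure of the $m_s$ together with the telescoping product of inverse treatment/missingness weights, and applying iterated expectations repeatedly, I expect the remainder to reduce to a sum over $r\le s\le t$ of terms each bounded, via Cauchy--Schwarz, by $\big(\sup_{\delta}\|m_{\delta,s}-\hat m_{\delta,s}\|+\|\pi_s-\hat\pi_s\|\big)\big(\|\hat\pi_r-\pi_r\|+\|\hat\omega_r-\omega_r\|\big)$. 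Assumption 4 then renders each product $o_\Pb(n^{-1/2})$ uniformly in $\delta$, giving $\sqrt{n}\sup_{\delta}|T_{3,k}(\delta)|=o_\Pb(1)$. The genuinely delicate computation is tracking how errors in $\pi_s$, $\omega_s$, and $m_s$ propagate through the nested products: the first-order (linear-in-error) contributions must cancel exactly—this is forced by $\varphi$ being the efficient influence function—leaving only the cross-products that Assumption 4 bounds.

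Combining the three pieces yields $\sqrt{n}(\hat\psi_t(\delta)-\psi_t(\delta)) = \sqrt{n}(\Pb_n-\Pb)\varphi(\cdot;\bm{\eta},\delta,t)+o_\Pb(1)$ uniformly in $\delta$, hence convergence to $\mathbb{G}'$ in $\ell^\infty(\mathcal{D})$. Finally, I would studentize: the first part of Assumption 3 gives $\sup_{\delta}|\hat\sigma(\delta,t)/\sigma(\delta,t)-1|=o_\Pb(1)$, so writing the studentized process as $\tfrac{\sigma(\delta,t)}{\hat\sigma(\delta,t)}\cdot\tfrac{\sqrt{n}(\hat\psi_t(\delta)-\psi_t(\delta))}{\sigma(\delta,t)}$ and applying a uniform Slutsky argument converts the limit into $\mathbb{G}$ with the standardized covariance $\E[\widetilde\varphi(\cdot,\delta_1,t_1)\widetilde\varphi(\cdot,\delta_2,t_2)]$, as claimed.
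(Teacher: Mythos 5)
Your outline follows essentially the same route as the paper's own proof: the paper likewise splits the argument into (i) a Donsker/bracketing argument for the Lipschitz-in-$\delta$ class $\{\varphi(\cdot;\bm{\eta},\delta,t)\}$ giving the Gaussian-process limit, (ii) a fold-wise decomposition into an empirical-process term killed by sample splitting and a conditional-bias term, and (iii) a Slutsky step using Assumption 3 to pass from $\sigma$ to $\hat\sigma$. The second-order remainder analysis you defer as ``the crux'' is exactly what the paper's Lemmas \ref{lem:remainder_1}, \ref{lem:remainder_2}, and \ref{lem:upperbound_remainder} carry out, and the product-of-errors bound you anticipate is precisely the one they establish, so your plan is correct and complete in structure.
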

	
	A proof of the above theorem can be found in Appendix \ref{proof:thm-6-1}. We also analyze the second order remainders for the efficient influence function, and keep the intervention distribution completely general (see Lemma \ref{lem:eif}, \ref{lem:remainder_1}, \ref{lem:remainder_2} in the appendix). Therefore, one may apply our results to studies of other stochastic interventions under missingness/dropout as well. % However this is beyond the scope of this paper.
	
	Assumptions 1), 2) and 3) in Theorem \ref{thm:convergence} are all quite weak. Assumptions 1) and 2) are mild boundedness conditions, where assumption 2) could be further relaxed at the expense of a less simple proof, for example using bounds on $L_p$ norms. Assumption 3) is also a mild consistency assumption, with no requirement on rate of convergence. The main substantive assumption is Assumption 4), which requires that the product of nuisance estimation errors must vanish at fast enough rates. One sufficient condition for this is that all the nuisance functions are consistently estimated at a rate of $n^{1/4}$ or faster. 

	Lowering the bar from $\sqrt[]{n}$ to $n^{1/4}$ indeed allows us to employ a richer set of modern machine learning tools, since such rates are attainable under diverse structural constraints \citep[e.g.,][]{yang2015minimax, raskutti2012minimax, gyorfi2006distribution}. In this paper, however, we are agnostic about how such rates should be attained. In practice, we may want to consider using different estimation techniques for each of $\bm{\pi, m, \omega}$ based on our prior knowledge and descriptive information, or use ensemble learners.
	
	Based on the result in Theorem \ref{thm:convergence}, we can construct pointwise $1-\alpha$ confidence intervals for $\psi_t (\delta)$ as
	$$
	\widehat{\psi}_t (\delta) \pm z_{1-\alpha/2}\frac{\hat{\sigma}^2(\delta, t)}{\sqrt[]{n}}
	$$
	where $\hat{\sigma}^2(\delta, t)$ is the variance estimator defined in Theorem \ref{thm:convergence}. Following \citet{Kennedy17}, one may use the multiplier bootstrap for uniform inference, by replacing the $z_{1-\alpha/2}$ critical value with $c_\alpha$ satisfying
	$$
	\Pb \left( \underset{\delta\in \mathcal{D}, 1\leq s \leq t}{\sup} \left| \frac{\widehat{\psi}_s (\delta)-\psi_s (\delta)}{\widehat{\sigma}(\delta, s)/\sqrt[]{n}} \right| \leq c_\alpha \right) = 1 - \alpha + o(1) .
	$$
	
	%This is due to the fact that we only add a finite number discrete timepoints into the function class of $\varphi$. 
	We refer to \citet{Kennedy17} for details on how to construct $c_\alpha$ via the multiplier bootstrap.

	\section{Infinite Time Horizon Analysis}
	\label{sec:Inf-time-horizon}
	
	The great majority of causal inference literature considers a finite time horizon where the number of timepoints $T$ is small and fixed, or even just equal to one, a priori ruling out much significant (if any) longitudinal structure. However, in practice more and more studies accumulate data across very many timepoints, due to ever increasing advances in data collection technology. In fact, in many applications $T$ can even be comparable to or larger than sample size $n$. This renders most of the classical methods based on finite time horizons obsolete, as their theoretical results/analysis have not been validated in such time horizon where $T$ can grow to infinity. For example, \citet{kumar2013mobile} describe how new mobile and wearable sensing technologies have revolutionized randomized trials and other health-care studies by providing data at very high sampling rates (10-500 times per second). \citet{klasnja2015microrandomized, qian2020micro} use $210$ timepoints in their study of micro-randomized trials for evaluating just-in-time adaptive interventions via mobile applications. As we collect more granular and fine-grained data, some recent studies explore efficient off-policy estimation techniques in infinite-time horizon settings (e.g., \citet{liu2018breaking} in reinforcement learning). Interestingly, though, there has been no formal analysis for general longitudinal studies.

    Therefore in this section we analyze the behavior of the IPW version of our proposed estimator (relative to the standard IPW estimator in classical deterministic settings), in a more realistic regime where $T$ can scale with sample size. To the best of our knowledge, this is one of the first such infinite-horizon analyses in causal inference, outside of some recent similarly specialized examples involving dynamic treatment regimes \citep{laber2018optimal,ertefaie2018constructing}. Specifically, we study the variance ratio bound, and show how deterministic effects are afflicted by an inflated variance relative to incremental intervention effects as $T$ grows.
	
    We proceed with comparing the variances of estimators of the deterministic effect for the always-treated (receiving treatment at every timepoint) versus the incremental effect for $\delta>1$. For simplicity and concreteness, in what follows we consider a simple setup where the propensity scores are all equal to $p$ (i.e., $\pi_t(H_t)=p$ for all $t$) and there is no dropout (i.e. $d\Pb\{R_{t+1}=1\}=1$ a.e. $[\Pb]$ for all $t=1,...,T$). This makes the pseudo-regression functions $m_s$ in Theorem \ref{thm:eif} equal to zero. In this setup we have unbiased estimators of the always-treated effect $\psi_{at}=\E(Y^{\overline{\bm{1}}_T})$ and the incremental effect $\psi_{inc} = \E(Y^{\overline{Q}_T(\delta)})$ given by
	$$
	\widehat{\psi}_{at} = \prod_{t=1}^{T} \left( \frac{A_t}{ p} \right)Y
	$$
	and
	$$
	\widehat{\psi}_{inc} = \prod_{t=1}^{T} \left( \frac{\delta A_t + 1-A_t }{\delta p + 1-p}  \right)Y 
	$$
	respectively, where $Y=Y_T$. In the next theorem, we analyze the variance ratio of the two estimators and show that one can achieve near-exponential precision gains by targeting $\psi_{inc}$. %instead of $\psi_{at}$.
	
	\begin{thm} \label{thm:inf-time-horizon}
		Consider the estimators and conditions defined above. Further assume that $\left\vert  Y \right\vert \leq {b_u}$ for some constant $ b_u>0$ and $\E\left[\left(Y^{\overline{\bm{1}}_T} \right)^2 \right] > 0$. Then for any $T \geq 1$,
			\begin{align*}
				C_{T}\left[ \left\{ \frac{\delta^2p^2 + p(1-p)}{(\delta p + 1 - p)^2} \right\}^{T} - p^{T} \right]  \leq 
				\frac{\var(\widehat{\psi}_{inc})} {\var(\widehat{\psi}_{at})}
				\leq C_{T}\zeta(T;p)\left\{ \frac{\delta^2p^2 + p(1-p)}{(\delta p + 1 - p)^2} \right\}^{T}
			\end{align*}
			where $C_{T} = \frac{b_u^2}{\E\left[\left(Y^{\overline{\bm{1}}_T} \right)^2 \right]}$ and $\zeta(T;p) = \left( 1+ \frac{c \left(\E\left[Y^{\overline{\bm{1}}_T}\right]\right)^2}{\left( 1/p \right)^{T} \E\left[\left(Y^{\overline{\bm{1}}_T} \right)^2 \right]} \right)$ for any fixed value of $c$ such that $\frac{1}{1-p^T{\left(\E\left[Y^{\overline{\bm{1}}_T}\right]\right)^2}\big/{\E\left[\left(Y^2 \right)^{\overline{\bm{1}}_T} \right]}} \leq {c}$.
	\end{thm}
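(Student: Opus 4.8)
The plan is to evaluate both variances almost in closed form and then bound their ratio factor by factor. Throughout write $D=\delta p+1-p$ and $D'=\delta^2p+1-p$, set $M_1=\E[Y^{\overline{\bm 1}}]$ and $M_2=\E[(Y^{\overline{\bm 1}})^2]$ (so $C_T=b_u^2/M_2$), and let $w_t=(\delta A_t+1-A_t)/D$, $W_1=\prod_{t=1}^T w_t$, with $W_1^2=\prod_{t=1}^T w_t^2=:W_2$. The one algebraic fact the whole argument rests on is that binary $A_t$ obeys $A_t^2=A_t$ and $(\delta A_t+1-A_t)^2=\delta^2A_t+1-A_t$. For $\widehat\psi_{at}$ this gives $(\prod_tA_t)^2=\prod_tA_t=\mathbbm 1\{\overline A_T=\overline{\bm 1}\}$, so the g-formula/IPW identification behind Theorem~\ref{thm:ident-exp}, applied with $Y^2$ in place of $Y$ (licensed because exchangeability (\ref{assumption:A2-E}) also decouples $A_t$ from $(Y^{\overline a_T})^2$), yields $\E[\prod_t(A_t/p)Y^2]=M_2$. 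Hence $\E[\widehat\psi_{at}^2]=p^{-T}M_2$ and, using unbiasedness $\E[\widehat\psi_{at}]=M_1$, $\mathrm{Var}(\widehat\psi_{at})=M_2/p^T-M_1^2=\tfrac{M_2}{p^T}(1-u)$ with $u=p^TM_1^2/M_2\in[0,1)$.

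Next, for $\widehat\psi_{inc}$ I note that $W_1$ is exactly the likelihood ratio $d\Pb^{\delta}/d\Pb$ to the $\delta$-incremental law, so $\E[\widehat\psi_{inc}]=\psi_{inc}$. Using the binary identity, $W_2=D^{-2T}\prod_t(\delta^2A_t+1-A_t)=(D'/D^2)^T\,d\Pb^{\delta^2}/d\Pb$, where $\Pb^{\delta^2}$ is the incremental law with parameter $\delta^2$; since incremental interventions leave the outcome conditionals unchanged, this gives the clean identity $\E[W_2\,g(Z)]=(D'/D^2)^T\,\E^{\delta^2}[g(Z)]$ for any integrable $g$. In particular $\E[\widehat\psi_{inc}^2]=\E[W_2Y^2]=(D'/D^2)^T\E^{\delta^2}[Y^2]$, so $\mathrm{Var}(\widehat\psi_{inc})=(D'/D^2)^T\E^{\delta^2}[Y^2]-\psi_{inc}^2$, and $|Y|\le b_u$ gives $\E^{\delta^2}[Y^2]\le b_u^2$ and $\psi_{inc}^2\le b_u^2$.

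Assembling the ratio and factoring out $(pD'/D^2)^T=\{(\delta^2p^2+p(1-p))/D^2\}^T$ and $C_T$, the upper bound is immediate: dropping $-\psi_{inc}^2$ and using $\E^{\delta^2}[Y^2]\le b_u^2$ gives $\mathrm{Var}(\widehat\psi_{inc})\le b_u^2(D'/D^2)^T$, so with $\mathrm{Var}(\widehat\psi_{at})=\tfrac{M_2}{p^T}(1-u)$ the ratio is at most $C_T(pD'/D^2)^T/(1-u)$; the hypothesis $c\ge 1/(1-u)$ is precisely what forces $1/(1-u)\le\zeta(T;p)=1+cp^TM_1^2/M_2$, producing the stated right-hand bound. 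For the lower bound I would retain the exact $\mathrm{Var}(\widehat\psi_{at})=M_2/p^T-M_1^2$ (its $-M_1^2$ term turns out to be essential), so that the claim reduces to $\mathrm{Var}(\widehat\psi_{inc})\ge b_u^2[(D'/D^2)^T-1]-\tfrac{b_u^2M_1^2}{M_2}[(pD'/D^2)^T-p^T]$; substituting $\mathrm{Var}(\widehat\psi_{inc})=\E[W_2Y^2]-\psi_{inc}^2$ and using $b_u^2-\psi_{inc}^2\ge b_u^2-\E^{\delta}[Y^2]=\E[W_1(b_u^2-Y^2)]$ (Jensen), it suffices to prove
\[
\E\big[(W_2-W_1)(b_u^2-Y^2)\big]\ \le\ \frac{b_u^2M_1^2}{M_2}\big[(pD'/D^2)^T-p^T\big].
\]

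The hard part will be this last inequality. The naive estimate—isolating the all-treated path to write $\E[\widehat\psi_{inc}^2]\ge(\delta/D)^{2T}p^TM_2$—is provably too weak: it only delivers the lower bound with $(\delta^2p^2/D^2)^T$ in place of the sharper $C_T(pD'/D^2)^T$, and one checks that the advertised bound is in fact tight exactly when the treated potential outcome degenerates to $0$ and all mass of the deficit $b_u^2-Y^2$ concentrates on the control trajectory. I would therefore control the displayed inequality by comparing the two reweightings $W_2$ and $W_1$ of the nonnegative deficit $b_u^2-Y^2$ directly, exploiting $0\le b_u^2-Y^2\le b_u^2$ together with the sequential structure $\E[\delta^2A_t+1-A_t\mid H_t]=D'$, and letting the positive $M_1^2/M_2$ term absorb the residual slack. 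Establishing this comparison cleanly—so that the squared-weight excess $W_2-W_1$ paired against $b_u^2-Y^2$ collapses to exactly the $p^T$ correction—is the principal obstacle of the proof; the symmetric statement for the always-control effect would then follow by replacing $\delta$ with $1/\delta$.
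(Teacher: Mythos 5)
Your variance computations and your proof of the upper bound are correct, and in substance they coincide with the paper's argument: the paper expands $\E[\widehat{\psi}_{inc}^2]$ via the g-formula into a weighted sum over all $2^T$ treatment sequences and collapses the weights by the binomial theorem, which is exactly your identity $\E[W_2]=(D'/D^2)^T$ phrased as a change of measure to the $\delta^2$-incremental law; the absorption of the $-\mathbb{V}_{c.ipw.2}(\overline{\bm{1}})$ term into the constant $c$ is also identical. (One orientation note: like the paper's own proof, you are bounding $\mathrm{Var}(\widehat{\psi}_{inc})/\mathrm{Var}(\widehat{\psi}_{at})$, which is the reciprocal of the ratio displayed in the theorem; the displayed ratio appears to be inverted relative to both the proof and the surrounding discussion of exponential efficiency gains, so your convention is the intended one.)

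The genuine gap is the lower bound, which you explicitly leave open; your instinct that it is the hard part is right, but the inequality you reduce it to, $\E[(W_2-W_1)(b_u^2-Y^2)]\le \frac{b_u^2M_1^2}{M_2}\{(pD'/D^2)^T-p^T\}$, is false in general, so it cannot be closed. Take $T=1$, $p=1/2$, $\delta=3$, no covariates, $Y^1=\epsilon$ and $Y^0=1$ deterministic, $b_u=1$: then $\mathrm{Var}(\widehat{\psi}_{at})=\epsilon^2$ and $\mathrm{Var}(\widehat{\psi}_{inc})=(3\epsilon-1)^2/16$, so the ratio tends to $1/(16\epsilon^2)$ as $\epsilon\to 0$, while the claimed lower bound is $\epsilon^{-2}(5/8-1/2)=1/(8\epsilon^2)$; the theorem's lower bound itself fails here, and your displayed inequality fails correspondingly ($0.375\not\le 0.125$). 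The reason you could not find a proof is that the paper's own lower-bound derivation contains a direction error: it writes $RE\ge\{\mathbb{V}_{inc.1}-b_u^2\}/\mathbb{V}_{c.ipw.1}(\overline{\bm{1}})$ and then substitutes $\mathbb{V}_{inc.1}=C_T(pD'/D^2)^T\,\mathbb{V}_{c.ipw.1}(\overline{\bm{1}})$, but that relation was only established as an upper bound on $\mathbb{V}_{inc.1}$ (via $\E[(Y^2)^{\overline{a}_T}]\le b_u^2$ for every sequence), and an upper bound goes the wrong way at this step. Your observation that the naive all-treated-path estimate is "provably too weak" is precisely the symptom: any honest lower bound on $\mathbb{V}_{inc.1}$ must control the second moments $\E[(Y^{\overline{a}_T})^2]$ for sequences other than $\overline{\bm{1}}$, which the stated hypotheses do not do.
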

	The proof of the above theorem is given in Appendix \ref{proof:thm-inf-time} and is based on the similar logic used in deriving the g-formula \citep{robins1986}. Note that we only require two very mild assumptions in the above theorem: the boundedness assumption on $Y$, and $\E[(Y^{\overline{\bm{1}}_T} )^2 ] > 0$, which is equivalent to saying $Y^{\overline{\bm{1}}_T}$ is a non-degenerate random variable. In the proof, we give a more general result for any sequence $\overline{a}_{T} \in \overline{\mathcal{A}}_{T}$ as well. 
	
	Theorem \ref{thm:inf-time-horizon} allows us to precisely quantify the relative statistical certainty in estimating the two effects. Specifically, since $\frac{\delta^2p^2 + p(1-p)}{(\delta p + 1 - p)^2} < 1$ for $\delta > 1$ and $\zeta(T;p)$ is bounded (and converging to one monotonically), the variance ratio decays exponentially in $T$. This implies that we may reap extraordinary gains in statistical precision from targeting ${\psi}_{inc}$ instead of ${\psi}_{at}$, when we intend on incorporating substantial number of timepoints in the study. The same goes for effects for the never-treated versus the incremental interventions with $\delta<1$ (see Appendix \ref{proof:thm-inf-time}).

    \begin{remark} 
    The variance ratio we study in this section is somewhat distinct from usual relative efficiency, since here we are considering two different (but closely related) target parameters. However, when we are indifferent about the inferential target, the variance ratio still can serve as a useful guidance in selecting an estimator. As $\delta \rightarrow \infty$, the gap between two effects monotonically shrinks to zero and the two target parameters $\psi_{at}$ and $\psi_{inc}$ become eventually identical, so the variance ratio goes to $1$. On the other hand, for finite $\delta$, the two effects are not quite the same, but how much one versus the other is of more interest is debatable ($\psi_{inc}$ could indeed be more preferable if we aim to describe how outcomes would vary with more practical gradual changes in treatment intensity). If we do not have a strong reason to prefer one effect over the other, we could choose the one with smaller variance in favor of improved statistical precision. This issue also arises for local effects under positivity violations, instrumental variables, etc. \citep[e.g.,][]{imbens2014instrumental, aronow2016local, crump2009dealing}, where an estimand is adaptively chosen on the basis of smaller variance.
    \end{remark}

	In what follows we refine Theorem \ref{thm:inf-time-horizon} so that one can characterize the minimum number of timepoints to guarantee a smaller variance for ${\psi}_{inc}$.
	
	\begin{cor} \label{cor:inf-time-horizon}
		There exists a finite number $T_{min}$ such that
		\begin{align*}
		\var(\widehat{\psi}_{inc}) < \var(\widehat{\psi}_{at})
		\end{align*}
		for every $T > T_{min}$, where $T_{min}$ is never greater than
		\begin{align*}
		\min \left\{T: \left[\frac{\delta^2p+1-p}{(\delta p + 1 - p)^2}\right]^T - \frac{c_{\bm{1}}}{p^T} + 2 < 0\right\}
		\quad 
		\text{where} \quad c_{\bm{1}}=\frac{\E\left[\left(Y^{\overline{\bm{1}}_T} \right)^2 \right]}{b^2_u}.
		\end{align*} 
	\end{cor}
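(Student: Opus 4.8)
The plan is to reduce the claim to an elementary comparison of two exponentials in $T$, by reusing the exact second-moment computations underlying Theorem \ref{thm:inf-time-horizon} and then bounding crudely. First I would record the moments coming out of the g-formula argument. For the always-treated weight, since $A_t\in\{0,1\}$ forces $(\prod_t A_t/p)^2=\prod_t A_t/p^2$, the identification/change-of-measure step gives $\E[\widehat\psi_{at}^2]=p^{-T}\E[(Y^{\overline{\bm 1}})^2]$. For the incremental weight the key algebraic identity is $(\delta a+1-a)^2=\delta^2 a+1-a$ for $a\in\{0,1\}$, so squaring the incremental weight amounts to running an incremental weight with odds-multiplier $\delta^2$; the same change of measure yields $\E[\widehat\psi_{inc}^2]=\tau^T\,\E[(Y^{\overline{Q}(\delta^2)})^2]$ with $\tau=\frac{\delta^2 p+1-p}{(\delta p+1-p)^2}$. (Equivalently, under $\pi_t\equiv p$ the $A_t$ are i.i.d.\ Bernoulli$(p)$, so $\E[\prod_t(\delta^2A_t+1-A_t)]=(\delta^2p+1-p)^T$.)

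Next I would bound the two variances in opposite directions using only $|Y|\le b_u$ and $\E[(Y^{\overline{\bm 1}})^2]>0$. Since the counterfactuals inherit the bound $b_u$, $\E[(Y^{\overline{Q}(\delta^2)})^2]\le b_u^2$, hence $Var(\widehat\psi_{inc})\le\E[\widehat\psi_{inc}^2]\le b_u^2\tau^T$. On the other side, $(\E Y^{\overline{\bm 1}})^2\le\E[(Y^{\overline{\bm 1}})^2]\le b_u^2$ gives $Var(\widehat\psi_{at})=p^{-T}\E[(Y^{\overline{\bm 1}})^2]-(\E Y^{\overline{\bm 1}})^2\ge p^{-T}\E[(Y^{\overline{\bm 1}})^2]-b_u^2$. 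Subtracting, and writing $c_{\bm 1}=\E[(Y^{\overline{\bm 1}})^2]/b_u^2\in(0,1]$ and $\rho=p\tau=\frac{\delta^2p^2+p(1-p)}{(\delta p+1-p)^2}$, I obtain
\[
Var(\widehat\psi_{at})-Var(\widehat\psi_{inc})\ \ge\ b_u^2\big(c_{\bm 1}p^{-T}-1-\tau^T\big)\ =\ b_u^2\,p^{-T}\big(c_{\bm 1}-p^T-\rho^T\big).
\]
Thus $\tau^T-c_{\bm 1}p^{-T}+1<0$ (equivalently $c_{\bm 1}>p^T+\rho^T$) is sufficient for $Var(\widehat\psi_{inc})<Var(\widehat\psi_{at})$, and the condition $\tau^T-c_{\bm 1}p^{-T}+2<0$ in the statement is strictly stronger, hence also sufficient.

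Finally I would establish existence and the explicit upper bound on $T_{min}$. Because $\rho<1$ for $\delta>1$ (noted after Theorem \ref{thm:inf-time-horizon}) we have $p^T,\rho^T\to 0$, so $c_{\bm 1}-p^T-\rho^T\to c_{\bm 1}>0$ and the sufficient condition holds for all large $T$; in particular the displayed set is nonempty. The step that needs care is showing this set is a genuine upper set, so that its minimum legitimately bounds $T_{min}$ for all larger $T$: viewing $f(x)=\tau^x-c_{\bm 1}p^{-x}+2$ as a function of a real variable, one has $f(0)=3-c_{\bm 1}>0$, $f(x)\to-\infty$, and $f'(x)=0$ admits at most one root (its sign changes only where $(\tau p)^x=\rho^x$ equals a fixed positive constant, and $\rho^x$ is strictly monotone), so $f$ is unimodal and crosses zero exactly once. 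Hence $\{T:\tau^T-c_{\bm 1}p^{-T}+2<0\}=\{T\ge T^\ast\}$ for the finite $T^\ast$ equal to the displayed minimum, and for every $T\ge T^\ast$ the sufficient condition gives $Var(\widehat\psi_{inc})<Var(\widehat\psi_{at})$; taking $T_{min}=T^\ast$ proves the corollary. The main obstacle is precisely this single-crossing/monotonicity bookkeeping for a difference of exponentials; everything else is the moment identities from Theorem \ref{thm:inf-time-horizon} together with the two crude bounds.
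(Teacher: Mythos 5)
Your proof is correct, and it reaches the stated bound by a genuinely different route from the paper. The paper's proof goes through Lemma \ref{lem:inf-time-decomp}, decomposing $Var(\widehat{\psi}_{inc})$ as the variance of a weighted sum of the $2^T$ deterministic IPW estimators $\widehat{\psi}_{c.ipw}(\overline{a}_T)$, then bounds the resulting difference $Var(\widehat{\psi}_{inc})-Var(\widehat{\psi}_{at})$ term by term, absorbing the cross-covariance contribution $A(\delta,p)$ and the squared-mean term $B$ into the constant $+2$ via $|A|\le 1$, $0\le B\le 1$. You instead exploit the identity $(\delta a+1-a)^2=\delta^2a+1-a$ to recognize $\E[\widehat{\psi}_{inc}^2]$ as $\tau^T$ times the second moment of the counterfactual outcome under the $\delta^2$-incremental intervention, with $\tau=\frac{\delta^2p+1-p}{(\delta p+1-p)^2}$, and then use only the one-sided bounds $Var(\widehat{\psi}_{inc})\le b_u^2\tau^T$ and $Var(\widehat{\psi}_{at})\ge p^{-T}\E[(Y^{\overline{\bm{1}}})^2]-b_u^2$. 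This avoids the covariance bookkeeping entirely and in fact yields the slightly sharper sufficient condition $\tau^T-c_{\bm{1}}p^{-T}+1<0$, of which the corollary's displayed condition is a strict strengthening. You also supply a step the paper leaves implicit: the single-crossing/unimodality argument for $f(x)=\tau^x-c_{\bm{1}}p^{-x}+2$ showing that the displayed set is an upper set of integers, so that its minimum genuinely bounds $T_{\min}$ and the variance inequality persists for all larger $T$; the paper only asserts that the bound is eventually negative. One implicit assumption you share with the paper is that the counterfactual outcomes (not just the observed $Y$) are bounded by $b_u$, which is needed both for $\E[(Y^{\overline{Q}(\delta^2)})^2]\le b_u^2$ and for $c_{\bm{1}}\le 1$; this is consistent with how the paper uses the boundedness assumption throughout Section \ref{proof:thm-inf-time}.
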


	The proof is given in Appendix \ref{proof:cor-inf-time}. The proof of the above corollary relies upon the fact that $\var(\widehat{\psi}_{inc})$ can be represented as a variance of the weighted sum of the IPW estimators for $Y^{\overline{a}_{T}}$, $\forall \overline{a}_{T} \in \overline{\mathcal{A}}_{T}$ (see Lemma \ref{lem:inf-time-decomp} in the appendix). % The larger the constant $c_{\bm{1}}$ is, i.e. the larger value of $\vert Y^{\overline{\bm{1}}_T} \vert$ is, the smaller $T_{min}$ is.

	\begin{remark}
	 It may be possible to further tighten the upper bound for $T_{min}$, but considering the focus of our paper this would not be very illuminating and practically meaningful, since the value of $T_{min}$ in the above corollary is already quite small in general. To illustrate, consider $Y \in [0,1]$, $\delta = 2.5, p = 0.5$, and an extreme case of $c_{\bm{1}}=0.05$ (i.e., $Y^{\overline{\bm{1}}_T}$ is mostly concentrated around $0$). Then $T_{min}=6$. If we use $\delta = 5, p = 0.5$, then $T_{min}=9$. 
	\end{remark}
		
	Theorem \ref{thm:inf-time-horizon} and Corollary \ref{cor:inf-time-horizon} can be generalized to the case of observational studies where the nuisance functions need to be estimated, but our view is that the simple case captures the main ideas and the general case would only add complexity. 
	
	To empirically assess the validity of Theorem \ref{thm:inf-time-horizon}, we conduct two simple simulation studies as below. 

    \textit{Simulation 1 (Randomized Trial).}
    We set $p=0.5$ and let $Y \ \mid \ \overline{A}_{t}  \sim N\left( 10 + {A}_{t}, 1 \right)$ truncated at $\pm$ two standard deviations. Based on this data generation process, given a value of $\delta$, we generate 100 different datasets for $t=1,...,50$, $n=500$, where we make sure the positivity assumption is valid in our simulation \footnote{This is done in a similar spirit to Laplace smoothing in Naive Bayes.}. Then we compute the sample variance of each estimator and their ratio correspondingly, and present them in Figure \ref{fig:sim1-inf-time}.

	\begin{figure}[!t] 
		\begin{minipage}{0.45\textwidth}
			\centering
			\includegraphics[width=.9\linewidth]{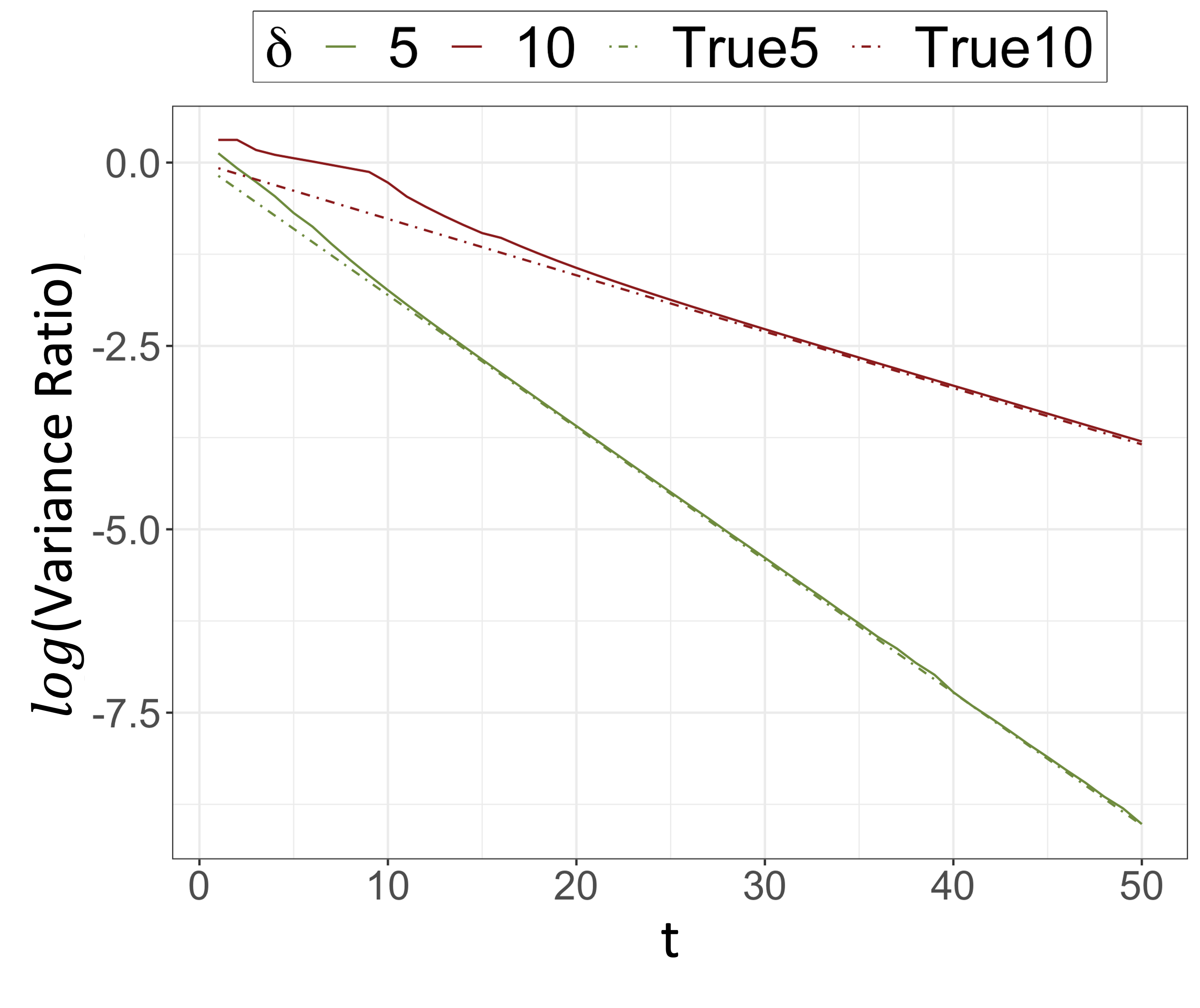}
			%\caption{Always-treated Unit}\label{Fig:always-trted}
		\end{minipage}\hfill
		\begin{minipage}{0.45\textwidth}
			\centering
			\includegraphics[width=.9\linewidth]{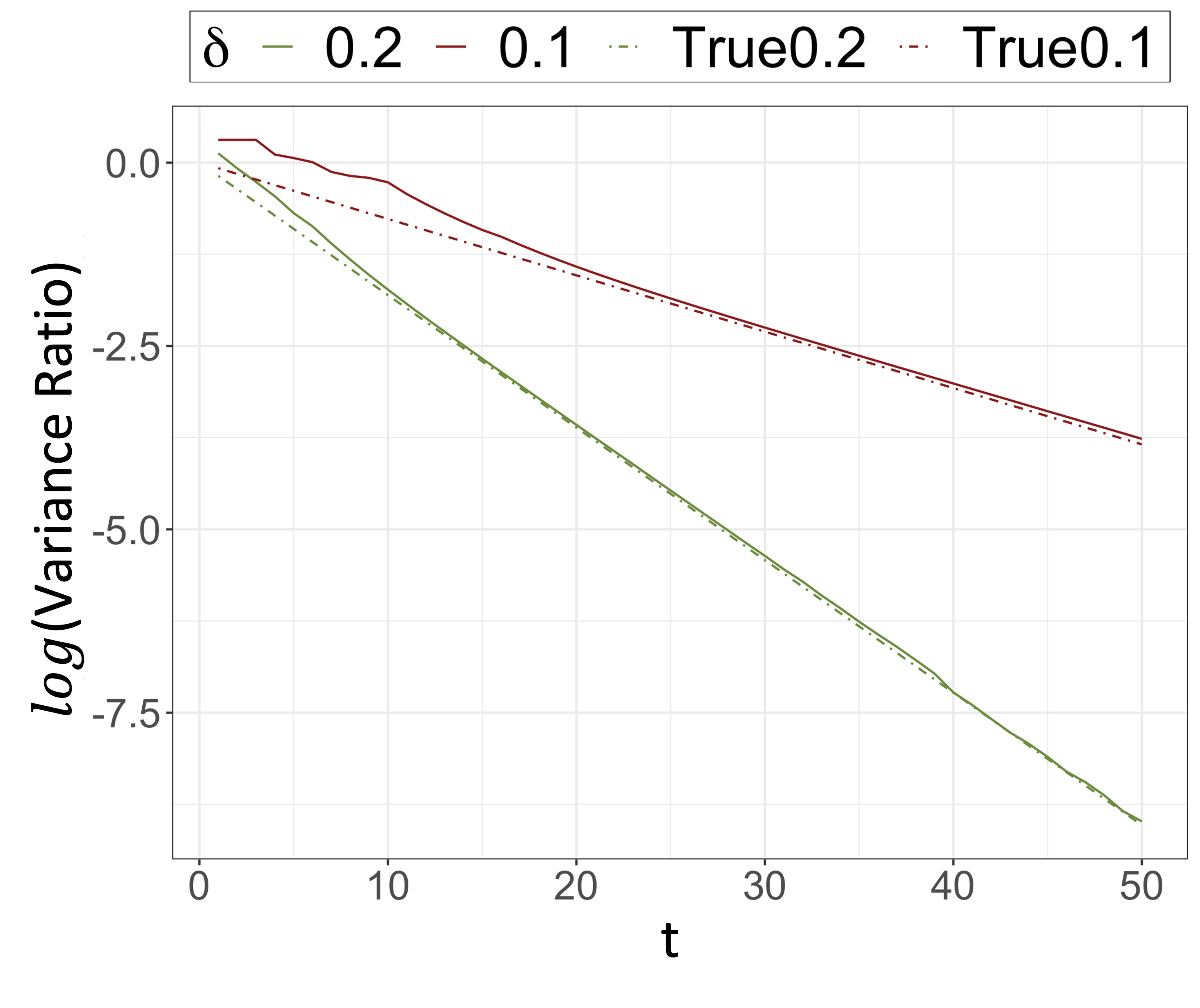}
			%\caption{Never-treated Unit}\label{Fig:never-trted}
		\end{minipage}
		\caption{Variance ratio in log-scale over $t$ for the always-treated with $\delta=5,10$ (Left), and for the never-treated with $\delta=0.2, 0.1$ (Right). The true lower bounds for each $\delta$ are represented by dotted line.}
		\label{fig:sim1-inf-time}
	\end{figure}
 
    \textit{Simulation 2 (Observational Study).}
    Although not directly addressed in Theorem \ref{thm:inf-time-horizon}, here we also consider the setting for  observational studies. Specifically, we consider a model
    $$
    X_t=(X_{1,t}, X_{2,t}) \sim N(0,\textbf{I}) 
    $$
    $$
    \pi_t(H_t) = expit\Big( 2\sum_{s=t-2}^{t-1} \left(A_s-1/2\right) \Big)
    $$
    $$
    \left(Y \big\vert \overline{X}_{t}, \overline{A}_{t} \right) \sim N\big(\mu(\overline{X}_{t}, \overline{A}_{t}),1 \big)
    $$
    for all $t \leq T$, where we let $\mu(\overline{X}_{t}, \overline{A}_{t})= 10 + A_{t} + A_{t-1} +\vert((\bm{1}^\top X_{t}+\bm{1}^\top X_{t-1}) \mid$, $\bm{1}=[1,1]^\top$ and let $expit$ denote the inverse logit function. In this simulation, we assume that it is more (less) likely to receive a treatment if a subject has (not) received treatments recently. The rest of the specification remains the same as \textit{Simulation 1}. The result is presented in Figure \ref{fig:sim2-inf-time}.

	\begin{figure}[!t] 
		\begin{minipage}{0.45\textwidth}
			\centering
			\includegraphics[width=.9\linewidth]{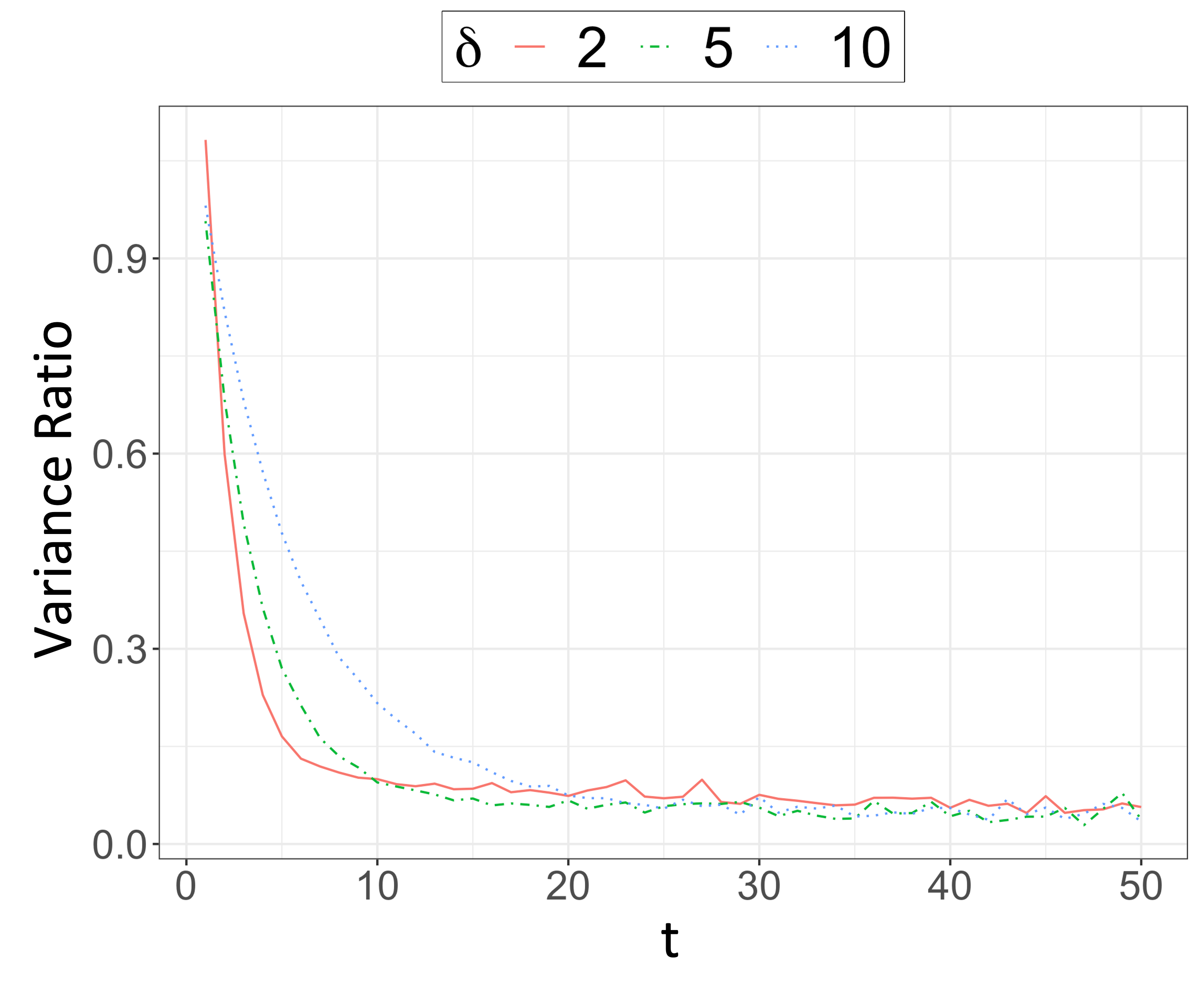}
			%\caption{Always-treated Unit}\label{Fig:always-trted}
		\end{minipage}\hfill
		\begin{minipage}{0.45\textwidth}
			\centering
			\includegraphics[width=.9\linewidth]{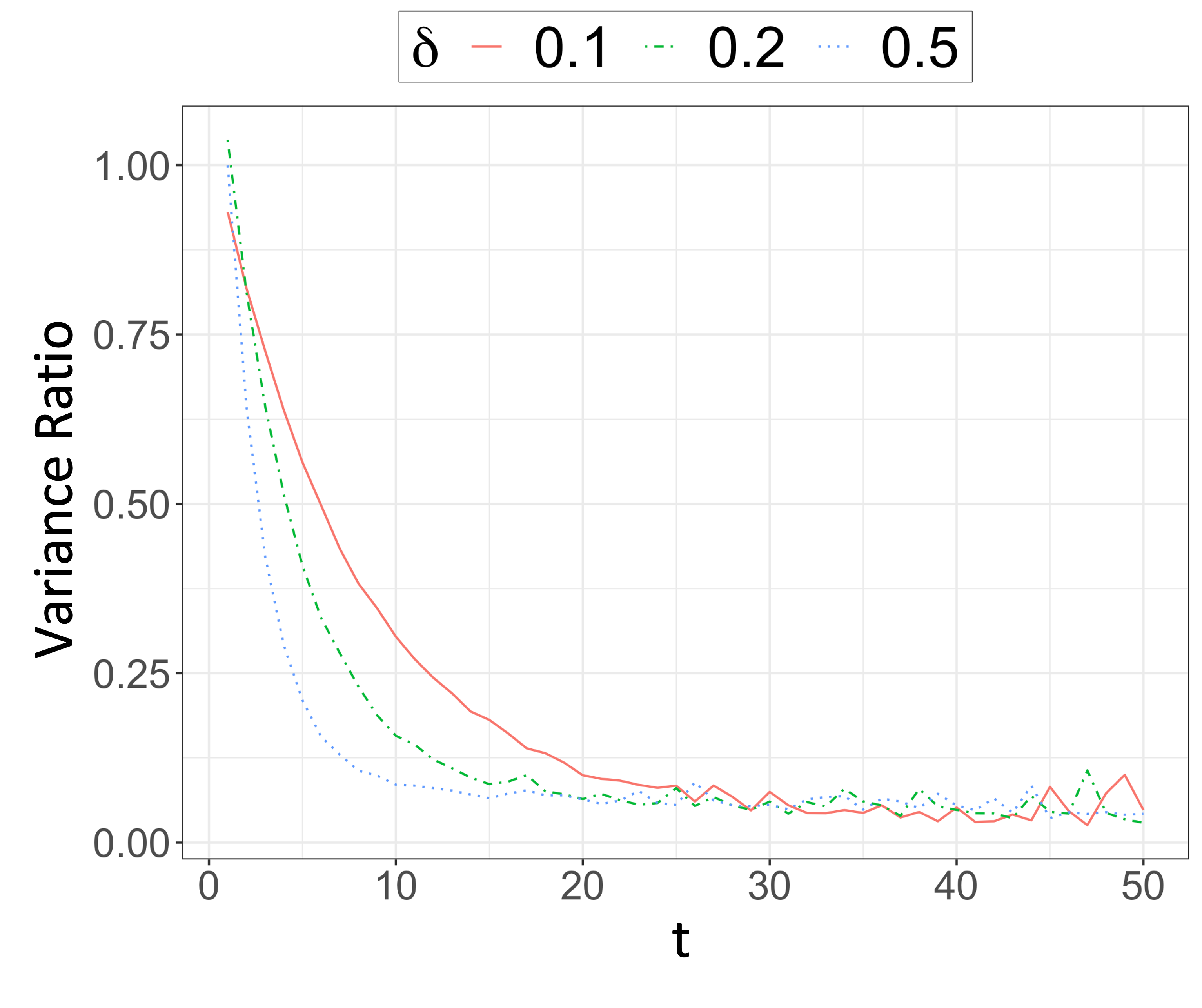}
			%\caption{Never-treated Unit}\label{Fig:never-trted}
		\end{minipage}
		\caption{Variance ratio over $t$ for the always-treated with $\delta=2,5,10$ (Left), and for the never-treated with $\delta=0.5, 0.2, 0.1$ (Right).}
		\label{fig:sim2-inf-time}
	\end{figure}
	
    The simulation results support our theoretical results. 
	Overall, the result in this section provides crucial insight into the longitudinal study with many timepoints, suggesting that massive gains in statistical certainty are possible by studying incremental rather than classical deterministic effects.

	\section{Experiments}
	
	\subsection{Simulation Study} \label{sec:sim-study}
	
	In this section we explore finite-sample performance of the proposed estimator $\hat{\psi}_t(\delta)$ via synthetic simulation. We consider the following data generation model 
	{
	$$
	X_s=[X_{1,s}, X_{2,s}, X_{3,s}]^\top,
	$$
	\[
	    \text{where } [X_{1,s}, X_{2,s}]^\top \sim N([X_{1,s-1}, X_{2,s-1} A_{s-1}]^\top,\textbf{I}), \, X_{1,0}=X_{2,0}=0,
	\]
	\[
	    \text{and }  X_{3,s}= 
    	\begin{cases}
        2.5\{2\times {Bernoulli}(1/2)-1\}, \ \text{if } \ s=1 \\
        X_{3,s-1} + N(0, 0.5^2), \ \text{elsewhere} , 
        \end{cases}
	\]
	$$
	A_s \sim {Bernoulli}\left(1/2 + \xi(X_{s-1},A_{s-1})/2 \right), 
	$$
	$$
	\text{where } \xi(X_{s-1},A_{s-1})= 
	\begin{cases}
    (2A_{s-1}-1) \vert X_{1,s-1} + X_{2,s-1} \vert/16, \ \text{if } \ \vert X_{1,s-1} + X_{2,s-1} \vert < 1 \\
    (2A_{s-1}-1)/16, \ \text{elsewhere } , 
    \end{cases}
	$$
	$$
	\Pb(R_{s+1}=1 \mid H_{s},A_{s}, R_{s}=1) = expit\left(r_d+5(2\times\mathbbm{1}\{X_{3,s} > 0\}-1) + (A_s + A_{s-1})/4 + (X_{1,s}+X_{2,s})/16\right), 
	$$
	$$
	Y_s \big\vert {H}_{s}, {A}_{s}  \sim N\big((X_{1,s} + X_{1,s-1})/4 + A_sX_{3,s} + 1, 1 \big),
	$$
	for $s=1 ,...,t$. $r_d >0$ is a constant used to control an average amount of dropout units. In this setup we assume that the more likely subjects have been treated, the more likely they will receive the treatment in the next timepoint in general. Moreover, the dropout probability at each $s \leq t$ is largely driven by the sign of $X_{3,s}$: the dropout probability will be low (high) if $X_{3,s}>0$ $(<0)$. Therefore, although each $X_{3,s}$ is designed to have a symmetric, bimodal distribution with a mean of $0$, the value of $X_{3,s}$ for surviving subjects will tend to be much greater than $0$. Due to the way $X_{3,s}$ also interact with the outcome in the above model, discarding all the subjects that have dropped out should lead to an upward-biased estimate of the incremental intervention effects. In Appendix \ref{appendix:aux-figures-simulation}, we provide auxiliary figures for the sake of better understanding of our simulation. Variables akin to $X_3$ that considerably affect both outcome and dropout are commonly found in practice (e.g., side effects).
% 	Here, the incremental intervention effects remain constant by construction. 
    }
    
	We estimate the incremental effect at $t=4$. We compare our proposed estimator (${\hat{\psi}_{ours}}$) with three baseline methods: the naive Z-estimator ($\hat{\psi}_{inc.pi}$) and the IPW estimator ($\hat{\psi}_{inc.ipw}$), both of which are defined in Section \ref{subsec:proposed-estimator}, and the original incremental-effect estimator ($\hat{\psi}_{inc.nc}$) proposed by \citet{Kennedy17}. Note that for using $\hat{\psi}_{inc.nc}$ we have to discard samples that have ever dropped out, whereas in other estimators surviving subjects are properly re-weighted for dropout adjustment at each timepoint. Since finite-sample properties of $\hat{\psi}_{inc.nc}$ were already extensively explored in \citet{Kennedy17}, here we primarily focus on the effect of dropout in the longitudinal setting.
	
	To estimate nuisance parameters, following \citet{Kennedy17} we form an ensemble of some widely-used nonparametric models. Specifically, we use the cross-validation-based superleaner ensemble algorithm \citep{van2007super} via the \texttt{SuperLearner} package in R to combine support vector machine, random forest, k-nearest neighbor regression. For ${\hat{\psi}_{ours}}$ and $\hat{\psi}_{inc.nc}$, we use $K = 2$-fold sample splitting. 
	
	We repeat simulation $S=250$ times in which we draw $n$ samples each simulation. We use $D=30$ values of $\delta$ equally spaced on the log-scale within $[0.2, 3]$. As in \citet{Kennedy17}, performance of each estimator is assessed via integrated bias and root-mean-squared error (RMSE) defined by
    \[
    \widehat{bias} = \frac{1}{D} \sum_{d=1}^{D} \left\vert \frac{1}{S} \sum_{s=1}^{S}  \hat{\psi}_s(t;\delta_d) - {\psi}(t;\delta_d) \right\vert , \quad  
    \widehat{RMSE} = \frac{\sqrt{n}}{D} \sum_{d=1}^{D} \left[ \frac{1}{S} \sum_{s=1}^{S} \left\{ \hat{\psi}_s(t;\delta_d) - {\psi}(t;\delta_d) \right\}^2 \right]^{1/2}  
    \]
    where $\hat{\psi}_s(t;\delta_d)$ and ${\psi}(t;\delta_d)$ are the estimate and true value of the target parameter respectively, for $s$-th simulation and $\delta_d$. We present the results in Table \ref{tbl:synthetic-sim1}.

\begin{table}
	\begin{center}
		\begin{tabular}{l l l l l l l l l c}
			\toprule
			\multirow{2}{*}{\, n} &\multicolumn{4}{l}{\qquad \qquad \qquad {$\bm{\widehat{bias}} (\times 10^{-3})$}} &\multicolumn{4}{l}{\qquad \qquad \qquad {$\bm{\widehat{RMSE}}$}} &\multirow{2}{*}{ \thead{Average \\ Dropouts (\%)} } \\ 
			%			\cline{2-5} 
			&$\hat{\psi}_{inc.pi}$ & $\hat{\psi}_{inc.ipw}$ & $\hat{\psi}_{inc.nc}$  & ${\hat{\psi}_{ours}}$ 
			&$\hat{\psi}_{inc.pi}$ & $\hat{\psi}_{inc.ipw}$ & $\hat{\psi}_{inc.nc}$  & ${\hat{\psi}_{ours}}$ &\\
			\midrule
			  & 14.5 & 24.1 &  30.1 & \textbf{9.8} &  1.59 & 2.78 &  2.96 & \textbf{1.37} & 50.5 \\
			1000 & 12.5 & 14.7 &  19.8 & \textbf{8.3} &  1.31 & 1.84 &  2.01 & \textbf{1.14} & 28.0 \\
			  &  10.7 & 11.3 &  9.5 & \textbf{7.2} &  1.17 & 1.35 &  1.13 & \textbf{0.99} & 8.9 \\
			  & 10.3 & 12.0 &  23.8 & \textbf{7.1} &  1.15 & 1.34 &  1.29 & \textbf{1.05} & 49.6 \\
			2500 & 10.2 & 10.9 &  14.1 & \textbf{6.2} &  1.06 & 1.19 &  1.06 & \textbf{0.95} & 27.5 \\
			  & 7.8 & 7.5 &  5.3 & \textbf{4.5} &  0.94 & 1.03 &  {0.93} & \textbf{0.91} & 9.1 \\
			\bottomrule
	\end{tabular}\end{center}
	\caption{{Integrated bias and RMSE across different baselines and simulation settings.}}
	\label{tbl:synthetic-sim1}
\end{table}

As shown in Table \ref{tbl:synthetic-sim1}, when there is a substantial amount of subject dropout, $\hat{\psi}_{inc.nc}$ shows much worse performance than all the other dropout-adjusted estimators, which is expected by the design of our data generation model. However, this gap shrinks as dropout rates decrease. Also in each setting, the proposed estimator ${\hat{\psi}_{ours}}$ appears to perform better and more markedly improve with sample size than $\hat{\psi}_{inc.pi}$ and $\hat{\psi}_{inc.ipw}$. This behavior is indicative of the validity of our theory that ${\hat{\psi}_{ours}}$ is not only able to adjust for the dropout process, but more efficient than $\hat{\psi}_{inc.pi}$ and $\hat{\psi}_{inc.ipw}$. 

% the proposed estimator appears to perform better than the other baselines especially when lots of data are dropped out, as anticipated from the theory. $\hat{\psi}_{inc.pi}$ and $\hat{\psi}_{inc.ipw}$ in general show fairly large RMSE, since they are not expected to converge at $\sqrt{n}$ rates. Under the substantial dropout rates, $\hat{\psi}_{inc.nc}$ occasionally shows even worse performance than $\hat{\psi}_{inc.pi}$ or $\hat{\psi}_{inc.ipw}$ due to the smaller number of effective samples. On the other hand, $\hat{\psi}_{inc.nc}$ shows comparable performance to ${\hat{\psi}_{ours}}$ when there is only a small portion of dropouts. 

	\subsection{Application} \label{sec:application}
	
	Here we illustrate the proposed methods in analyzing the Effects of Aspirin on Gestation and Reproduction (EAGeR) data, which evaluates the effect of daily low-dose aspirin on pregnancy outcomes and complications. The EAGeR trial was the first randomized trial to evaluate the effect of pre-conception low-dose aspirin on pregnancy outcomes \citep{schisterman2014preconception, mumford2016expanded}. However, to date this evidence has been limited to intention-to-treat analyses.

    The design and protocol used for the EAGeR study have been previously documented \citep{schisterman2013randomised}. Overall, 1,228 women were recruited into the study (615 aspirin, 613 placebo) and 11\% of participants chose to drop out of the study before completion. Roughly 43,000 person-weeks of information were available from daily diaries, as well as study questionnaires, and clinical and telephone evaluations collected at regular intervals over follow-up. The dataset is characterized by a substantial degree of non-compliance (more than 50\% at the end of the study), and thereby is susceptible to positivity violation.
	
	% The EAGeR dataset has been compiled as described in (\ref{setup:causal-process}).
	We used our incremental propensity score approach to evaluate the effect of aspirin on pregnancy outcomes in the EAGeR trial, accounting for time-varying exposure and dropout. {We let each variable become a constant equal to its final value after the time point at which no more actual data is collected on the subject, so we have balanced panel data as described in (\ref{setup:causal-process}).} Here, the study terminates at week 89 ($T=89$). We use 24 baseline covariates (e.g., age, race, income, education, etc.) and 5 time-dependent covariates (compliance, conception, vaginal bleeding, nausea and GI discomfort). $A_t$ is a binary treatment variable coded as $1$ if a woman took aspirin at time $t$ and $0$ otherwise. $R_t=1$ indicates that the woman is observed in the study at time $t$. Lastly, $Y_{t}$ is an indicator of having a pregnancy outcome of interest at time $t$. We are particularly interested in two types of pregnancy outcomes: \emph{live birth} and \emph{pregnancy loss} (fetal loss). We perform separate analysis for each of the two cases.
	
	For comparative purposes, we estimate the simple complete-case effect 
	\begin{align*}
		\widehat{\psi}_{CC} = \Pn(Y_T |\overline{A}_T=1, R_T = 1) - \Pn(Y_T | \overline{A}_T=0, R_T = 1).
	\end{align*}
	which relies on both non-compliance and drop-out being completely randomized. The value of $\widehat{\psi}_{CC}$ is 0.052 (5.2\%) for live birth and 0.012 (1.2\%) for pregnancy loss, both of which are close to the intention-to-treat estimates reported in \citet{schisterman2013randomised, schisterman2014preconception}.

	We give a brief discussion on why standard modeling approaches fail here. We found strong evidence of positivity violations in the EAGER dataset; as shown in Figure \ref{fig:ps-always-treated} in Appendix  \ref{appendix:application-other-approaches}, the average propensity score quickly drops to zero as $t$ grows. This suggests that very few patients follow the given protocol of taking aspirin late in the study. Thus, it is unrealistic to use an intervention where \emph{all} participants would take aspirin at every time, as required in many standard models including the popular marginal structural models (MSMs) \citep{robins2000marginal}. In fact, when we modeled the effect curve by $\E[Y^{\overline{a}_T}] = m(\overline{a}_T;\beta) = \beta_0 + \sum_{t=1}^T \beta_{1t}a_t$ so that the coefficient for exposure can vary with time, then the standard inverse-weighted MSM fits failed and no coefficient estimates were found even for moderate values of $T = \sim 10$ (see Figure \ref{fig:ps-always-treated}-(b) in Appendix \ref{appendix:application-other-approaches} for a closer look). This positivity violation precludes other standard approaches for time-varying treatments as well. % Not to mention that the MSM approach relies on parametric models.

	One quick remedy could be to move away from standard ATEs and instead only estimate the mean outcome if no one were treated, comparing to the observed outcome (\emph{not} {all} versus {none} as in the ATE). Then we can apply some other {nonparametric} approaches available in the literature for estimating this one-sided counterfactual. When we use the g-computation (plug-in) estimator \citep{robins1986}, the result seems to suggest that the mean outcome if no one received aspirin is worse than the observed (Figure \ref{fig:aspirin-other-g-computation} in Appendix \ref{appendix:application-other-approaches}). However, when we use the sequential doubly robust (SDR) estimator \citep{luedtke2017sequential}, the huge overlap between 95\% CI intervals prevents us from drawing any firm conclusion (Figure \ref{fig:aspirin-other-SDR} in Appendix \ref{appendix:application-other-approaches}).
	%Above all else, the alternative effect we have estimated here entails the fundamental limitation on the target effect since in many cases the always-taker group is typically of utmost interest for researchers. See Section \ref{appendix:application-other-approaches} of the appendix for more detailed discussion.

	Now, we estimate the incremental effect curve $\psi_T(\delta)$, which represents the probability of having live birth or pregnancy loss at the end of the study ($t=T$) if the odds of taking aspirin for all women were increased by a factor of $\delta$ at all timepoints, across different values of $\delta$. Again, we use the cross-validated superleaner algorithm \citep{van2007super} to combine support vector machine, random forest, and k-nearest neighbor regression to estimate a tuple of nuisance functions $(m_t, \omega_{t}, \pi_{t})$ at each $t \leq T$. We proceed with sample splitting with $K=2$ splits, and use 10,000 bootstrap replications to compute pointwise and uniform confidence intervals. Results are shown in Figure \ref{fig:appl-aspirin}.
	\begin{figure}[!htb] 
		\begin{minipage}{0.49\textwidth}
			\centering
			\includegraphics[width=.95\linewidth]{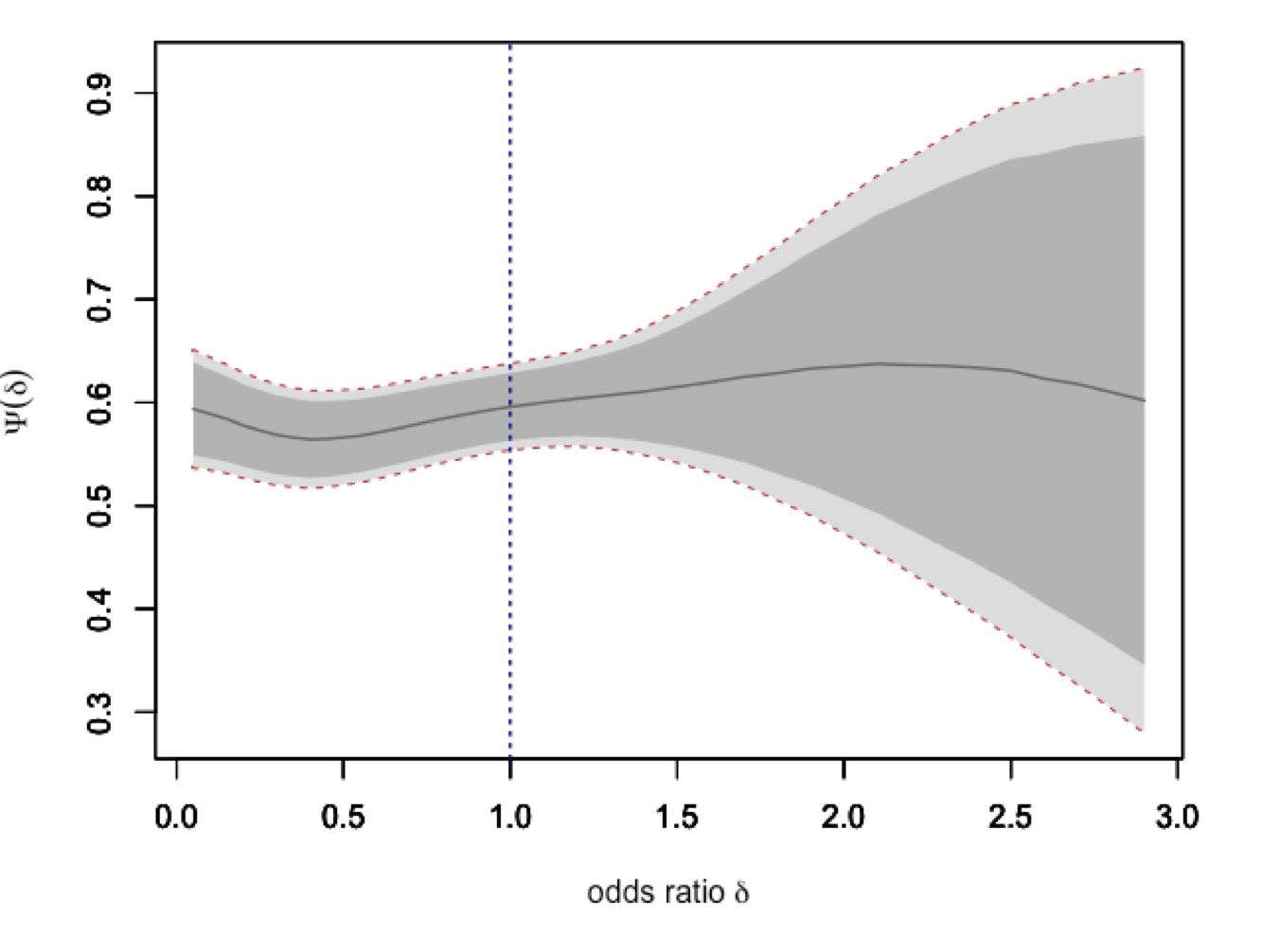}
		\end{minipage}\hfill
		\begin{minipage}{0.49\textwidth}
			\centering
			\includegraphics[width=.95\linewidth]{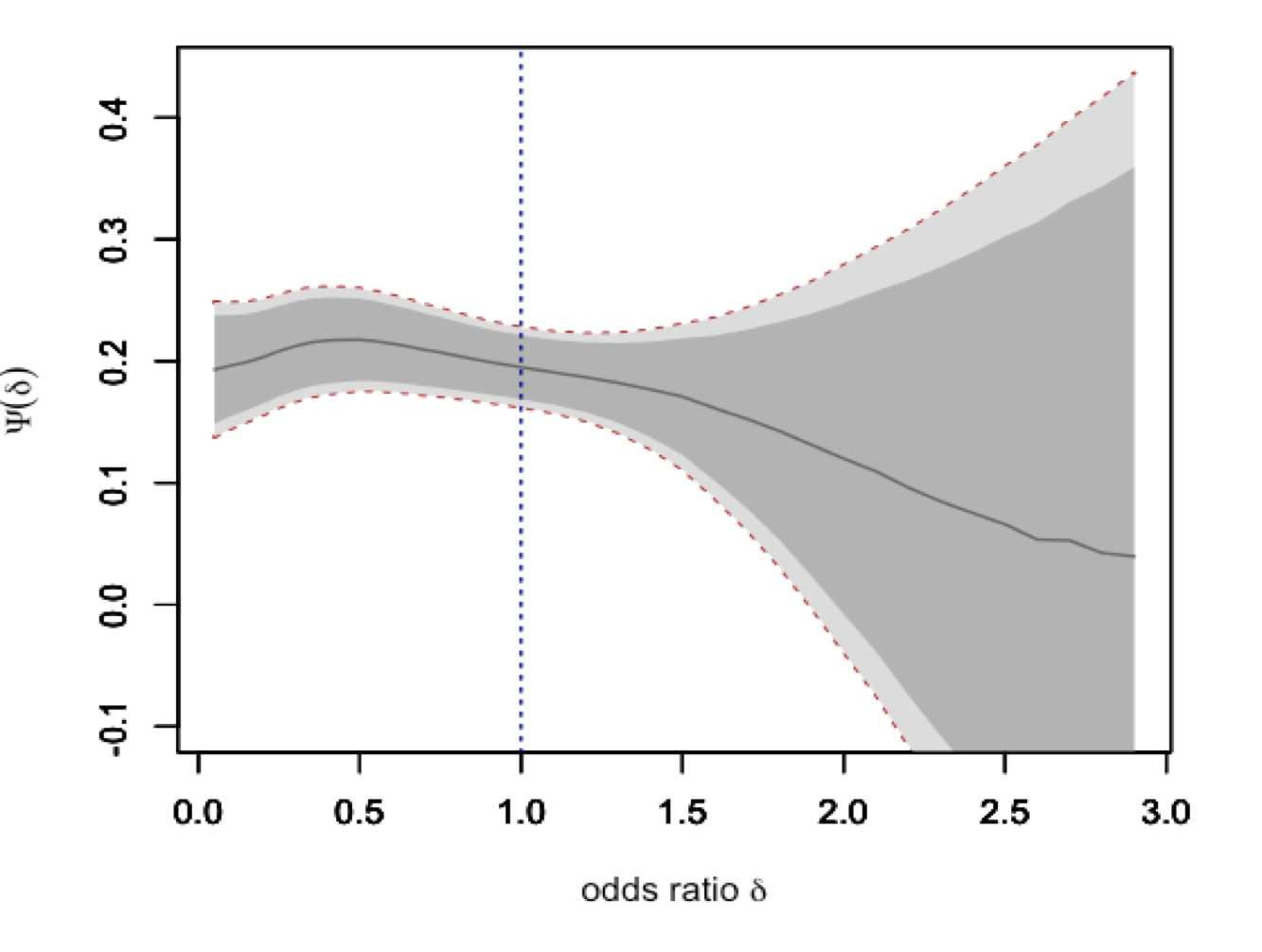}
		\end{minipage}
		\caption{Estimated incremental effect curves which represent the probability of having a live birth (Left) and a pregnancy loss (Right). Lighter grey area with red dotted line represents 95\% uniform bands and darker grey area represents 95\% pointwise bands.}
		\label{fig:appl-aspirin}
	\end{figure}
	
	{
	The estimated curve in Figure \ref{fig:appl-aspirin} appears to be almost flat for live birth, and have a slightly negative gradient with respect to $\delta$ (odds ratio) for pregnancy loss. However, at level $\alpha=.05$ we fail to reject the null of no incremental intervention effects for both cases (as both confidence bands contain a horizontal line). This is mainly due to the noncompliance of aspirin takers that makes the bands too wide for large $\delta$ regimes. Thus, our analysis yielded a similar result to the previous findings of \citet{schisterman2014preconception}, indicating that use of low dose aspirin was not significantly associated with live birth or pregnancy loss. Nonetheless, the estimated incremental intervention effects provide more detailed information with greater nuance, requiring none of the parametric and positivity assumptions.
	}
	
	{
	\begin{remark}
	 In this analysis we have looked into the intervention that does tell us about the effect of overall increase or decrease in treatment at each $t$, but not the optimal timing of treatment (e.g., when aspirin should be prescribed since conception). As pointed out by \citet{Kennedy17}, one could address such timing issues by considering $\delta$ depending on time and covariate history, which will bring added complexity. We leave this to our future work.
	\end{remark}
	}
% 	\textcolor{blue}{
% 	}

	\section{Discussion}

	Incremental interventions are a novel class of stochastic dynamic intervention where positivity assumptions can be completely avoided. However, they had not been extended to repeated outcomes, and without further assumptions do not give identifiability under dropout,  both of which are very common in practice. In this paper we solved this problem by showing how incremental intervention effects are identified and can be estimated when dropout occurs (conditionally) at random. Even in the case of many dropouts, our proposed method efficiently uses all the data without sacrificing robustness. We gave an identifying expression for incremental intervention effects under monotone dropout, without requiring any positivity assumptions. We established general efficiency theory and constructed the efficient influence function, and presented nonparametric estimators which converge at fast rates and yield uniform inferential guarantees, even when all the nuisance functions are estimated with flexible machine learning tools at slower rates. Furthermore, we analyzed the variance ratio of incremental intervention effects to conventional deterministic dynamic intervention effects in a novel infinite time horizon setting in which the number of timepoints can possibly grow with sample size, and showed that incremental intervention effects can yield near-exponential gains in statistical precision. Finally, we showed that the proposed methods can effectively mitigate the bias caused by subject dropout via the simulation study, and applied the methods in study of the effect of low-dose aspirin on pregnancy outcomes.

    There are a number of avenues for future work. The first is application to other  substantive problems in medicine and the social sciences. For example, in a forthcoming paper we analyze the effect of aspirin on pregnancy outcomes with more extensive data. It will also be important to consider other types of non-monotone missingness where the standard time-varying MAR assumption \ref{assumption:A2-M} may not be appropriate \citep{sun2014inverse, tchetgen2016discrete}. We expect that our approach can be extended to other important problems in causal inference; for example, one could develop incremental intervention effects for continuous treatments and instruments \citep{kennedy2017non,kennedy2019robust}, or for mediation in the same spirit as \citep{diaz2019causal}, but generalized to the longitudinal case with dropout. Developing incremental-based sensitivity analyses for the longitudinal MAR assumption would also be an interesting extension. 
	
	\subsection*{Acknowledgement}
	Edward Kennedy and Ashley Naimi gratefully acknowledge financial support from the NSF (Grant \# DMS1810979) and NIH (Grant \# R01HD093602) for this research, respectively. This work was also supported by the Intramural Research Program of the Eunice Kennedy Shriver National Institutes of Child Health and Human Development, National Institutes of Health, Bethesda, Maryland, contract numbers HHSN267200603423, HHSN267200603424, and HHSN267200603426. We are also grateful for useful comments from two anonymous referees. This work was completed while Kwangho Kim was a PhD student at Carnegie Mellon University.
	
% 	\subsection*{Conflict of interest}
% 	The authors have no conflicts of interest to declare.

\bibliographystyle{unsrtnat}
\bibliography{reference}

\newpage

\appendix
% \label{appendix}

% \addcontentsline{toc}{section}{Appendix} % Add the appendix text to the document TOC
% \part*{Appendix} % Start the appendix part
% \parttoc % Insert the appendix TOC

    \begin{center}
    \LARGE \textsc{Appendix}
    \end{center}

	\section{Algorithm} \label{sec:algorithm}
	
	An algorithm detailing how to compute the proposed estimator \eqref{eqn::proposed-estimator} at $t \leq T$ is given in Algorithm \ref{algorithm-1} as below.
	
	\begin{algorithm}
	\caption{Implementation of the proposed estimator (\ref{eqn::proposed-estimator})} 
	\label{algorithm-1}
	
	Let $\delta$ be fixed. For each $k \in \{1,...,K\}$, let $D_0 =\{Z_i : S_i \neq k\}$ and $D_1 =\{Z_i : S_i = k\}$ denote the training and test sets, respectively, and let $D = D_0 \bigcup D_1$.
	
	\begin{enumerate}
		\item For each time $s=1,...,t$ regress $A_s$ on $H_s$ using observable samples at time $s$ (i.e., only if $R_{s}=1$) in $D_0$, then obtain predicted values $\widehat{\pi}_s(H_s)$ only for units with $R_s=1$ in $D$.
		
		\item For each time $s=1,...,t$ regress $R_{s+1}$ on $(H_s,A_s)$ using observable samples at time $s$ in $D_0$, then obtain predicted values $\widehat{\omega}_s(H_s,A_s)$ only for units with $R_s=1$ in $D$.
		
		\item For each time $s=1,...,t$, letting $W_k = \frac{\delta A_k + 1-A_k}{\delta\hat{\pi}_k(H_k) + 1-\hat{\pi}_k(H_k)}\cdot\frac{1}{\hat{\omega}_k( H_k, A_k)}$ and construct following cumulative product weights 
		\begin{itemize}
			\item[$\cdot$] $\widetilde{W}_{s} = \prod_{k=1}^s W_k$ for $1 \leq s < t$
		\end{itemize} 
		for units with $R_{s}=1$ in $D$.
		
		\item Let $M_{t+1} = Y_{t}$. Then for $s=t,t-1,...,1$,
		\begin{itemize}
			\item[a.] Regress $M_{s+1}$ on $(H_s, A_s)$ using observable samples at time $s+1$ (i.e., only if $R_{s+1}=1$) in $D_0$, then obtain predictions $\widehat{m}_s(H_s, 1)$ and $\widehat{m}_s(H_s, 0)$ for units with $R_s=1$ in $D$.
			\item[b.] Compute \begin{align*}
			\scriptsize
			M_{s} &= \left( \frac{1}{\delta A_s + 1-A_s}  \right) \Bigg[ \frac{\left\{m_{s}(H_{s}, 1) -m_{s}(H_{s}, 0)\right\}\delta(A_s- \widehat{\pi}_{s}(H_{s}))\widehat{\omega}_s( H_s, A_s)}{\delta\widehat{\pi}_{s}(H_{s}) + 1 - \widehat{\pi}_{s}(H_{s})} \\ 
				& \quad + \begin{pmatrix}
                \delta m_{s}(H_{s}, 1)\left\{\widehat{\pi}_{s}(H_{s})\widehat{\omega}_s( H_s, A_s)-A_sR_{s+1}\right\}  \\
                + m_{s}(H_{s}, 0)\left\{(1-\widehat{\pi}_{s}(H_{s}))\widehat{\omega}_s( H_s, A_s) - (1-A_s)R_{s+1} \right\}
                \end{pmatrix} \Bigg]
            \normalsize
            \end{align*}
            in $D$.
		\end{itemize}    
		
% 		\item Compute $V_t = \frac{1}{\delta A_t + 1-A_t}$ for units in $D_1$.
		
		\item Compute $\sum_{s=1}^t M_s\widetilde{W}_{s} + \widetilde{W}_{t}Y_{t}R_{t+1}$ for units in $D_1$ and define $\widehat{\psi}^{(k)}_t(\delta)$ to be its average.
	\end{enumerate}
	
	\textbf{Output} : $\widehat{\psi}_t(\delta) = \frac{1}{K}\sum_{k=1}^{K} \widehat{\psi}^{(k)}_t(\delta)$
	\end{algorithm}

	\section{Auxiliary figures for the simulation study}
	\label{appendix:aux-figures-simulation}
	We provide some auxiliary figures to help readers better understand the simulation setup and result presented in Section \ref{sec:sim-study} using a random example. Figures \ref{fig:dist-x3} and \ref{fig:dist-y} illustrate how the dropout process may induce a large upward bias in estimation of incremental effects as shown in Figure \ref{fig:example-bias}. Figure \ref{fig:example-bias} also shows our methods can successfully adjust for dropout. All the results in this example are measured at $t=4$ with the dropout rate of 52.5\%.
	
	\begin{figure}[t!]
	\centering
    \includegraphics[width=0.8\textwidth]{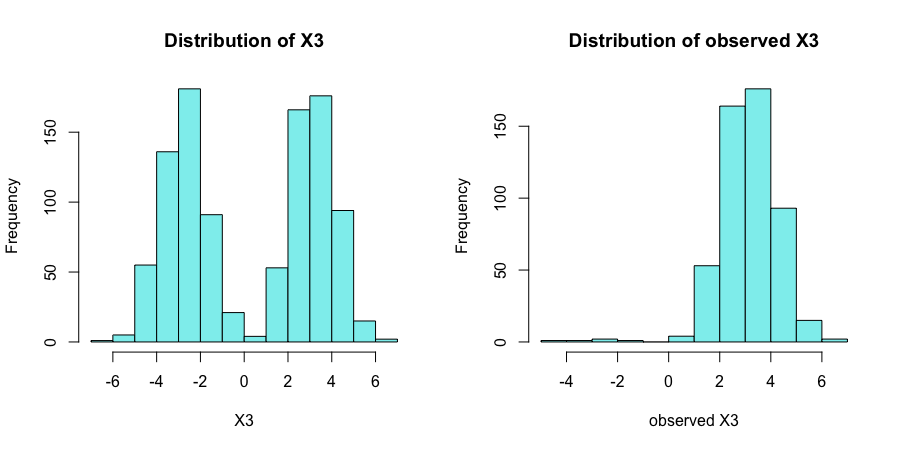}
    \caption{$X_3$ is symmetric about zero. However, sampling distribution of the observed $X_{3}$ is skewed to the left as samples with negative $X_3$ values have dropped out.} \label{fig:dist-x3}
    \end{figure}
	
	\begin{figure}[t!]
	\centering
    \includegraphics[width=0.8\textwidth]{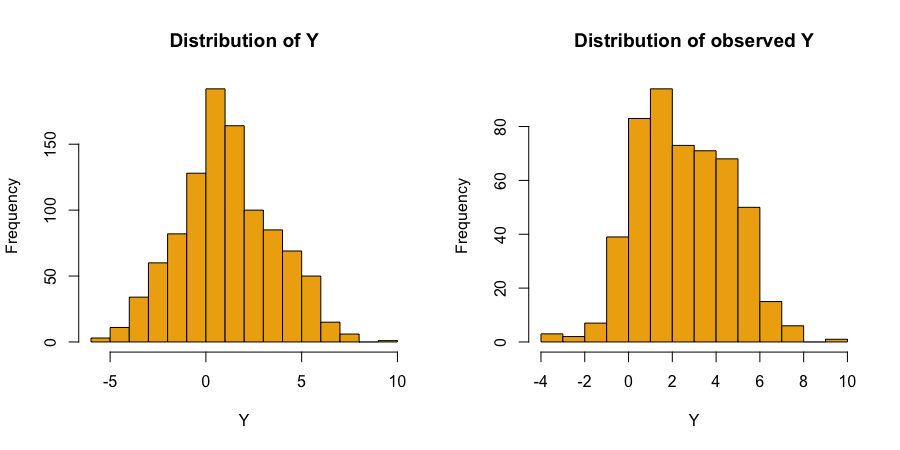}
    \caption{The observed distribution of $X_3$ causes the distribution of observed $Y$ to shift to the right, which results in an upward bias.} \label{fig:dist-y}
    \end{figure}
	
	\begin{figure}[t!]
	\centering
    \includegraphics[width=0.8\textwidth]{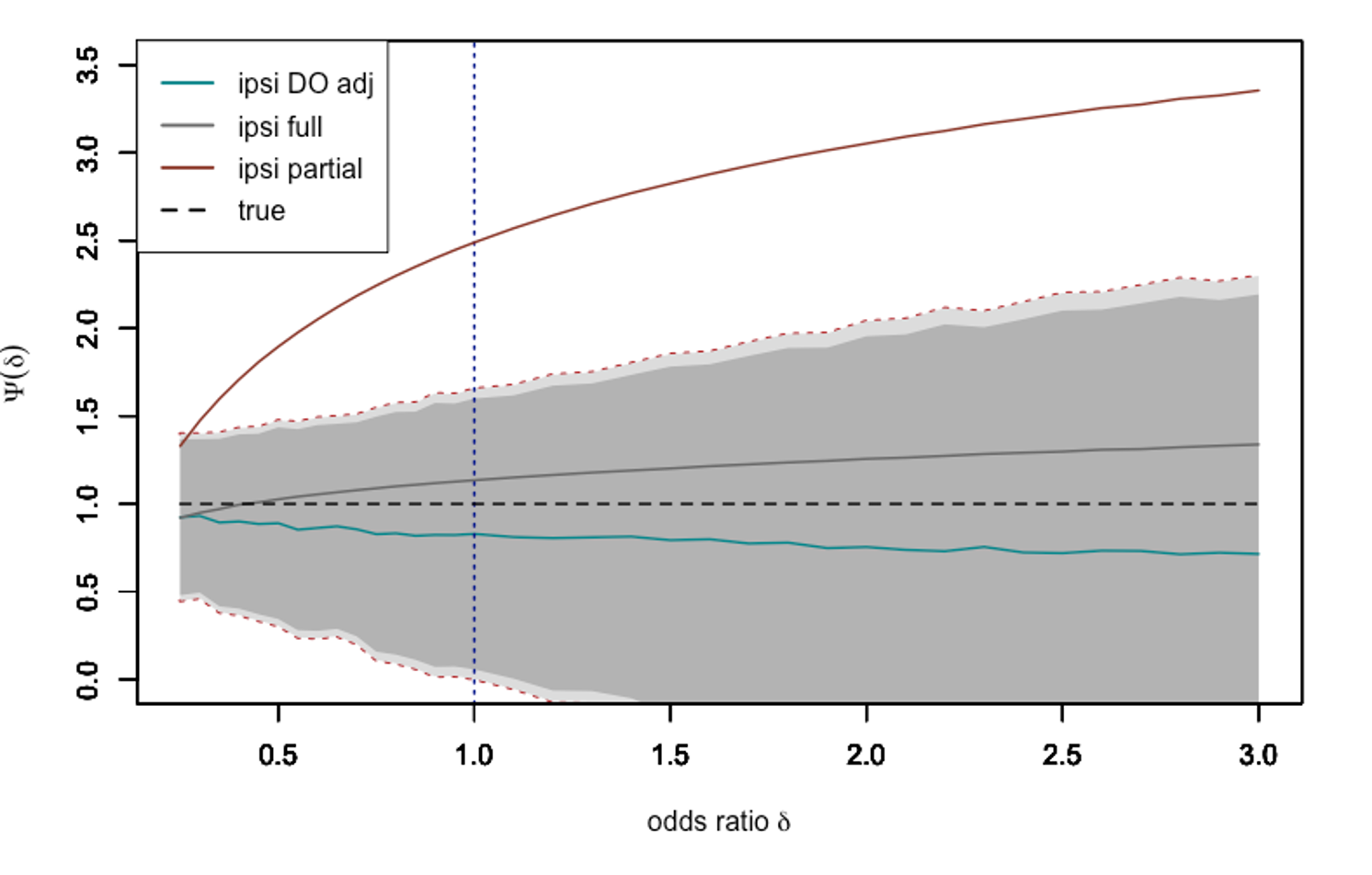}
    \caption{Estimates of the incremental effect using observed data (ipsi partial) are largely deviated upward from the true values, and the ones using full data (ipsi full), whereas our proposed method (ipsi DO adj) successfully adjusts for dropout.} \label{fig:example-bias}
    \end{figure}
	
	\section{Alternative approaches for the EAGeR data analysis} 
	\label{appendix:application-other-approaches}
	
	Here, we discuss why standard approaches fail for our analysis of the EAGER dataset in Section \ref{sec:application} of the main text. For comparative purposes, we alter our target effect and then apply some other nonparametric approaches available in the literature. Then we compare the result with the one we obtained in Section \ref{sec:application}. 
	
	\subsection{Why standard model fails: positivity violation}
	
	All the standard models dealing with time-varying treatments, except on very rare occasions, require treatment positivity. However, as will be elaborated below, positivity is likely violated in the EAGER dataset. Many individuals turned out not to follow the given protocol of taking aspirin and this non-compliance only exacerbates over time. To illustrate this, we present the average propensity score over time in Figure \ref{fig:ps-always-treated}-(a). As shown in Figure \ref{fig:ps-always-treated}-(a), the average propensity score quickly drops to zero as $t$ grows. In other words, Figure \ref{fig:ps-always-treated}-(a) implies that it would be hard to imagine having all of the study participants take aspirin at each time.
% 	As a result, at the end of the study it is almost impossible to find individuals who have been consistently taking aspirin at every timepoint.
	\begin{figure}[!htb] 
		\begin{minipage}{0.44\textwidth}
			\centering
			\includegraphics[width=.9\linewidth]{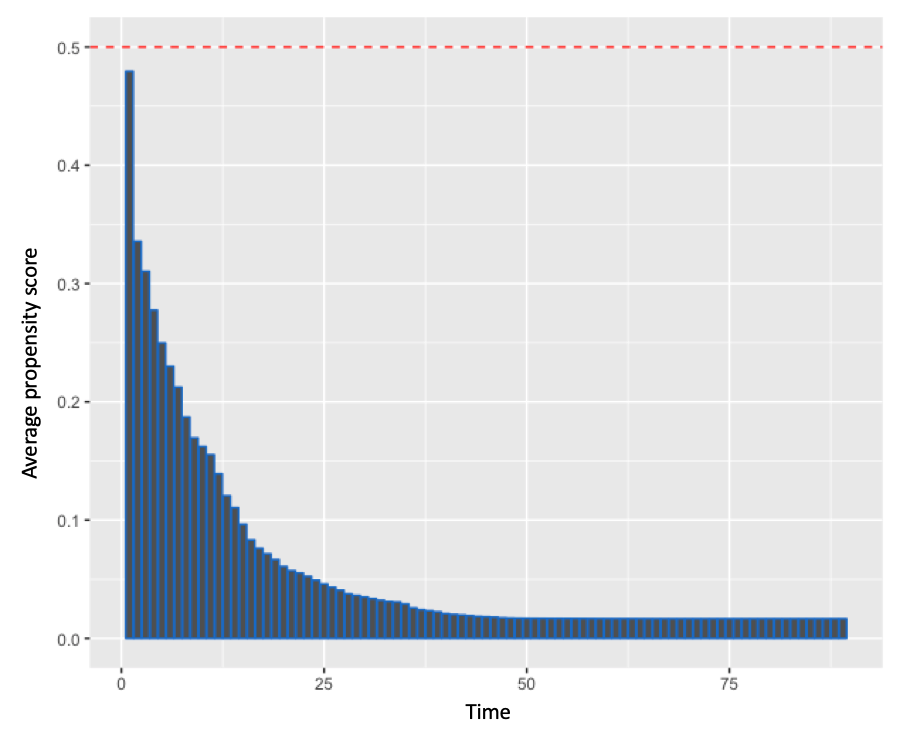}
			\caption*{(a)}
		\end{minipage}\hfill
		\begin{minipage}{0.46\textwidth}
			\centering
			\includegraphics[width=.9\linewidth]{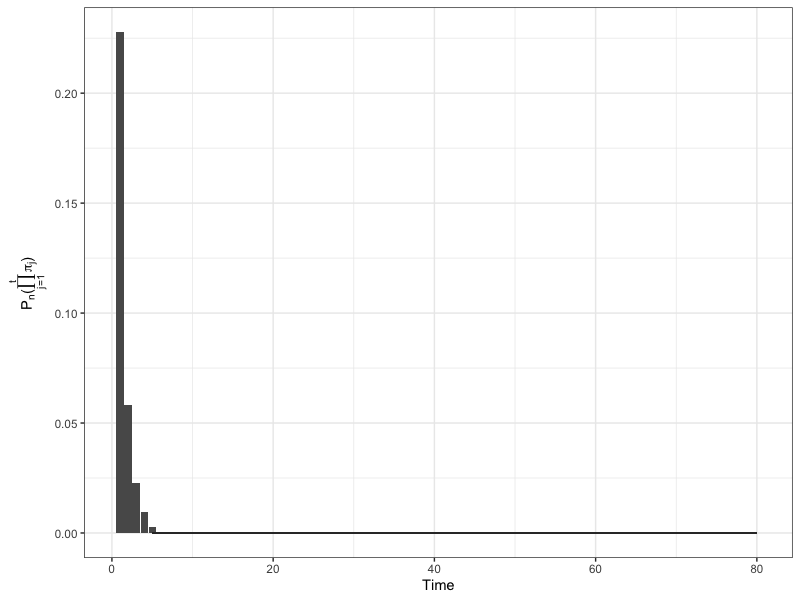}
			\caption*{(b)}
		\end{minipage}
		\caption{(a) The average propensity score over the course of follow-up. We observe that due to the non-complinace, the average propensity score sharply decreases over time, which strongly hints at positivity violation in the EAGeR dataset. (b) $\Pn(\prod_{j=1}^t \widehat{\pi}_j)$ over the course of follow-up. When $t \geq 5$, $\Pn(\prod_{j=1}^t \widehat{\pi}_j)$ becomes less than $5\times10^{-4}$, which makes an IPW estimation in MSMs infeasible. We used Random Forests (via the \texttt{ranger} package in R) to estimate $\pi_t$.}
		\label{fig:ps-always-treated}
	\end{figure}
	
	Even if positivity is only nearly violated, it can pose a serious problem in attempting to estimate our target causal effect. One of the most widely-used approaches to handle time-varying treatments is marginal structural models (MSMs) \citep{robins2000marginal}. In practice, MSMs are often estimated via inverse probability weighting (IPW). The following quantity appears in the IPW (also in the doubly robust) moment condition
	\[
	 h(\overline{A}_T)\left\{ \frac{Y - m(\overline{A}_T;\beta)}{\prod_{t=1}^T \widehat{\pi}_t} \right\},
	\]
    for any choice of $h$ (with matching dimensions) where $\widehat{\pi}_t(a_t)=\widehat{\Pb}(A_t=a_t\mid H_t)$. However, Figure \ref{fig:ps-always-treated}-(b) indicates that on average a cumulative product of propensity score sharply drops to zero even with moderate $t$. This would make standard estimation techniques such as IPW to fail as $\Pn(\prod_{j=1}^T \widehat{\pi}_j)$ easily blows up. 
    
    More specifically, when we parametrically model the effect curve by $\E[Y^{\overline{a}_T}] = m(\overline{a}_T;\beta) = \beta_0 + \sum_{t=1}^T \beta_{1t}a_t$ so that the coefficients for exposure can vary with time, an inverse-weighted MSM estimator that is the solution to 
	\[
	 \Pn \left[h(\overline{A}_T)\left\{ \frac{Y - m(\overline{A}_T;\beta)}{\prod_{t=1}^T \widehat{\pi}_t} \right\} \right] = 0
	\]
	indeed fails since no coefficient estimates can be found in the above equation even for moderate values of $T$, e.g., $T = \sim 10$. Thus, it appears that positivity violation in our dataset precludes the standard MSM-based approach. We remark that these limitations are not at all unique to the analysis of our EAGeR dataset, but also common to many observational studies based on the MSM or other approaches \citep[e.g.,][]{luedtke2017sequential}.

	\subsection{Alternative approach}
	
    Due to the positivity violation, the estimation results, if any, through standard approaches will remain dubious at best. Therefore, we alter our target contrast from the standard ATE to the mean outcome we would have observed in a population if ``observed" versus none (not all versus none) were treated, which is defined by
    \begin{equation} \label{def:ATE-observed-vs-treated}
    \begin{aligned}
    \tau_{\text{obs}}(T) \equiv \mathbb{E}\big[ Y^{\bar{A}_T = \overline{a^{\text{obs}}}, \bar{R}_T=\bar{\mathbf{1}}} \big] -\mathbb{E}\big[ Y^{\bar{A}_T=\bar{\mathbf{0}},  \bar{R}_T=\bar{\mathbf{1}}}\big],
    \end{aligned}
    \end{equation}
    where $\overline{a^{\text{obs}}}$ denotes an observed history of aspirin consumption. This new estimand would tell us how the mean outcome would have changed if no one in the population had taken aspirin throughout the study. In this way, we can avoid estimating the problematic counterfactual $\mathbb{E}\big[ Y^{\bar{A}_T=\bar{\mathbf{1}},  \bar{R}_T=\bar{\mathbf{1}}}\big]$. However, by construction this solution entails the fundamental limitation because we have sacrificed the causal effect of original interest.
    
    In order to estimate our new causal parameter \eqref{def:ATE-observed-vs-treated}, here we use the g-computation \footnote{We also tried a weighting estimator but omitted the result here, since it gives almost the same result as the g-computation, only with wider confidence bands.} (plug-in) estimator \citep{robins1986} and the sequential doubly robust (SDR) estimator proposed by \citet{luedtke2017sequential} which allows right-censored data structures.

    \subsection{Estimation and inference}
    
    \textbf{Estimation.}
    First for the g-computation estimator, we estimate the following g-formula
    \begin{equation*}
        \begin{aligned}
        \mathbb{E}\big[Y^{\bar{A}_T=\bar{a}_T,  \overline{R}_T= \overline{\mathbf{0}} } \big] &= \int \cdots \int \mathbb{E}\big[Y|\overline{X}_T,  \overline{A}_T = \overline{a}_T, \bar{R}_T= \bar{\mathbf{1}}_T \big] \prod_{t=2}^T d\Pb(X_t |\overline{X}_{t-1},  \overline{A}_{t-1} = \overline{a}_{t-1}, \overline{R}_{t-1}= \overline{\mathbf{1}}_{t-1} )\\
        & \qquad \qquad \quad \times  d\Pb({X}_1, {A}_1 = {a}_1, {R}_1 = 1) 
        \end{aligned}
    \end{equation*}
    via plug-in estimators of the pseudo-outcome regression function each time step. Next, for the SDR estimator, we tailor Algorithm 2 of \citet{luedtke2017sequential} for our right-censored data structures (everything remains the same except that we add the condition $\overline{R}_{t-1}= \overline{\mathbf{1}}_{t-1}$ on each pseudo-outcome regression function). For both methods, we use the same nonparametric ensemble we used in Section \ref{sec:application}. 
    
    \textbf{Inference.}
    Confidence intervals are estimated by bootstrapping at 95\% level for both of the estimators. Note that for the SDR estimator, we are guaranteed to consistently estimate standard errors (pointwisely) by bootstrapping due to the following asymptotically property,
    \begin{equation*}
    \begin{aligned}
    \sqrt{n} (\widehat{\tau}_{\text{obs}}(t)- \tau_{\text{obs}}(t)) \leadsto \mathcal{N}(0, Var(\phi_{\tau}(t)))
    \end{aligned}
    \end{equation*}
    for all $t\leq T$, where $\phi_{\tau}(t)$ is the influence function of $\widehat{\tau}_{\text{obs}}(t)$. However, this is no longer guaranteed for the g-computation estimator.

    \subsection{Result}
    
    For the sake of completeness, we estimate each $\tau_{\text{obs}}(t)$ for all $t=2 \sim 89$ and present the cumulative effects over time $t$. The results for the g-computation and the SDR estimators are presented in Figures \ref{fig:aspirin-other-g-computation}, \ref{fig:aspirin-other-SDR}, respectively.
    
    \begin{figure}[!htb]
	\minipage{0.45\textwidth}
	\includegraphics[width=\linewidth]{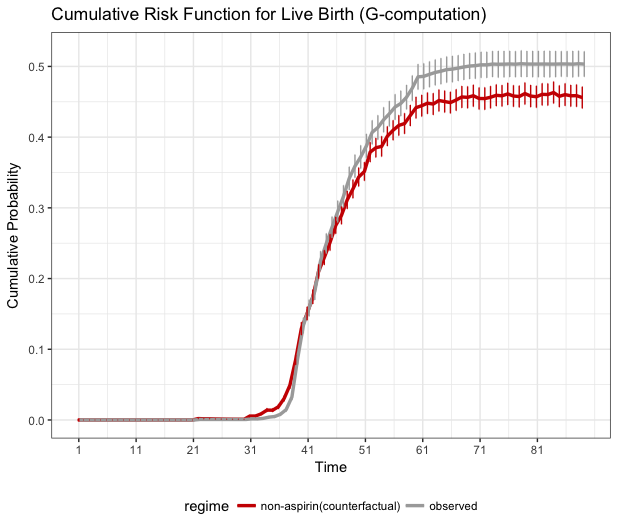}
	\caption*{Live birth} %\label{fig:awesome_image1}
	\endminipage\hfill
	\minipage{0.45\textwidth}
	\includegraphics[width=\linewidth]{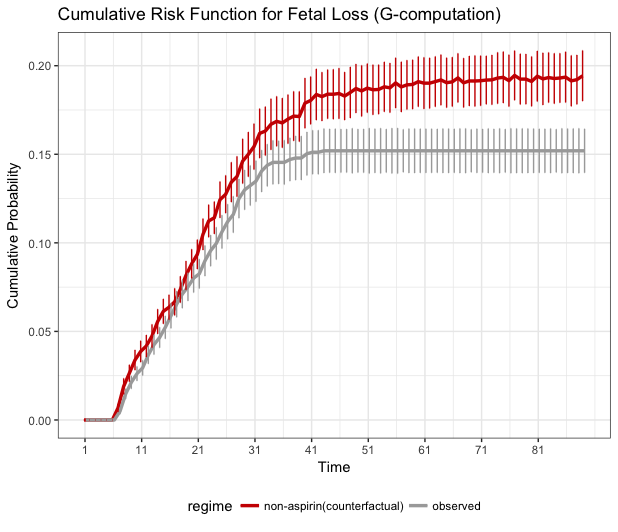}
	\caption*{Pregnancy loss} %\label{fig:awesome_image2}
	\endminipage
	\caption{Cumulative risk curve for live birth and pregnancy loss via the regression based g-computation estimator. Pointwise 95\% confidence interval is estimated by bootstrapping with 1000 resampling.} \label{fig:aspirin-other-g-computation}
    \end{figure}

	\begin{figure}[!htb]
	\minipage{0.45\textwidth}
	\includegraphics[width=\linewidth]{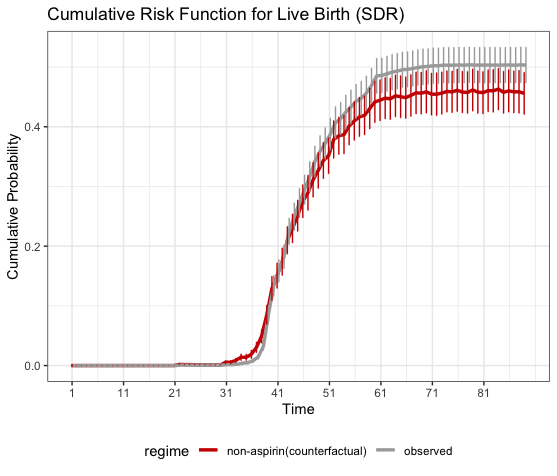}
	\caption*{Live birth}
	\endminipage\hfill
	\minipage{0.45\textwidth}
	\includegraphics[width=\linewidth]{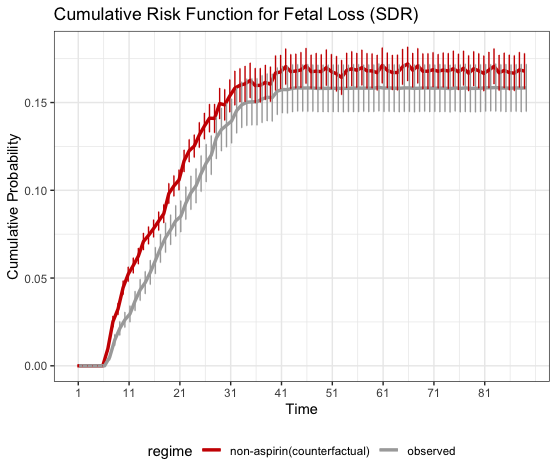}
	\caption*{Pregnancy loss}
	\endminipage
	\caption{Cumulative risk curve for live birth and pregnancy loss via the sequential doubly robust (SDR) estimator. Pointwise 95\% confidence interval is estimated by bootstrapping with 1000 resampling.} \label{fig:aspirin-other-SDR}
    \end{figure}
	
	The result based on the g-computation estimator in Figure \ref{fig:aspirin-other-g-computation} shows that the counterfactual mean outcomes for never-takers (individuals who have never taken aspirin throughout the study) are worse-off than the observed. Specifically, for the never-takers the probability of having live birth has been decreased and the probability of having fetal loss has increased. The result seems to be statistically significant at $T=89$.
	
	On the other hand, the result based on the SDR estimator in Figure \ref{fig:aspirin-other-SDR} indicates that although the mean effects for the never-takers still appear to be worse off than the observed, they look no longer statistically significant. Hence in this case we cannot draw any firm conclusion about the effect of aspirin on pregnancy outcome.
	
	It might be tempting to take the results from Figure \ref{fig:aspirin-other-g-computation} as it seems to deliver more clear messages. However, we do not know if our variance estimates there are correct in the first place. Accuracy of our estimate is further afflicted by the moderate sample size ($n$=1024) due to the slow convergence rates of the plug-in regression. These issues can be mitigated in the SDR estimator. Thus, we should rather resort to the results presented in Figure \ref{fig:aspirin-other-SDR}, which basically tells us that the effect of low-dose aspirin is insignificant and remains dubious, based on the causal effect defined in \eqref{def:ATE-observed-vs-treated}.
	
	After all, it should be noted that due to the positivity violation we end up limiting ourselves to the more narrow notion of causal effects (i.e. observed versus none) which is different from the ATE type estimands that are typically of utmost interest for policy makers. The causal effect in \eqref{def:ATE-observed-vs-treated} might not be practically meaningful as to aspirin prescription for pregnant since we are in general much more interested in the always-taker group than the never-taker group.

    \section{More details on influence functions and efficiency bound}
	\label{sec:ifdetails}
	
	Here, we shall introduce the {influence function}, which is a foundational object of statistical theory that allows us to characterize a wide range of estimators with favorable theoretical properties. There are two notions of the influence function: one for estimators and the other for parameters. To distinguish these two cases we will call the latter, which corresponds to parameters, \emph{influence curves} as in for example, \citet{boos2013essential, Kennedy16} \footnote{However, the terms `influence curve' and `influence function' are used interchangeably in many cases.}. Before we go on, we declare that the primary sources of this section are \citet{kennedy2014semiparametric, Kennedy16, Edward20Slide} from which all the terms, definitions and results are directly borrowed.

    First, we give a definition of influence curves. It was first introduced by \citet{Hampel74} and studied to provide a general solution to find approximation-by-averages representation for a functional statistic. We only consider nonparametric models here.
    
    Suppose that we are given a target functional $\psi$. For a nonparametric model $\Pb$, let $\{ \Pb_\epsilon$, $\epsilon \in \mathbb{R} \}$ denote a smooth parametric submodel for $\Pb$ with $\Pb_{\epsilon=0} = \Pb$. A  typical example of this parametric submodel can be given by $\{\Pb_\epsilon: p_\epsilon(z) = p(z)(1 + \epsilon s(z)) \}$ for some mean-zero, uniformly bounded function $s$. Then the \textit{influence curve} for parameter $\psi(\Pb)$ is defined by any mean-zero, finite-variance function $\phi(\Pb)$ that satisfies the following \textit{pathwise differentiability},
    \begin{equation} \label{def:influence-curve}
    \frac{\partial}{\partial \epsilon}\psi(\Pb_\epsilon) \Bigg\vert_{\epsilon=0} = \int \phi(\Pb) \left(\frac{\partial}{\partial\epsilon}\log d\Pb_\epsilon \right) \Bigg\vert_{\epsilon=0} d\Pb.
    \end{equation}
    
    The above pathwise differentiability implies that our target parameter $\psi$ is smooth enough to admit a von Mises expansion: for two distribution $\Pb, \Qb$
    \begin{equation} \label{def:von-Mises-expansion}
    \psi(\mathbb{Q}) - \psi(\mathbb{P}) = \int \phi (\mathbb{Q}) d(\Qb - \Pb) + R_2(\Qb, \Pb)
    \end{equation}
    where $R_2$ is a second-order remainder. Therefore, the influence curve corresponds to the functional derivative in a Von Mises expansion of $\psi$.
    
    One can obtain the classical Cram\'{e}r-Rao lower bound for each parametric submodel $\Pb_\epsilon$; the Cram\'{e}r-Rao lower bound for $\Pb_\epsilon$ is $\psi^\prime(\Pb_\epsilon)^2 / \E(s_\epsilon^2)$ where $\psi^\prime(\Pb_\epsilon) = \frac{\partial}{\partial \epsilon}\psi(\Pb_\epsilon) \big\vert_{\epsilon=0}$ and $s_\epsilon = s_\epsilon(z) = \frac{\partial}{\partial\epsilon}\log d\Pb_\epsilon \big\vert_{\epsilon=0}$. The asymptotic variance of any nonparametric estimator is no smaller than the supremum of the Cram\'{e}r-Rao lower bounds for all parametric submodel, and it is known that under the above pathwise differentiability condition the greatest such lower bound is given by
    \[
    	\underset{\Pb_\epsilon}{\sup} \frac{\psi^\prime(\Pb_\epsilon)^2}{\E(s_\epsilon^2)} \leq \E(\phi^2).
    \]
    Hence, $\E(\phi^2) = \var(\phi)$ is the nonparametric analog of the Cram\'{e}r-Rao lower bound, and we call the influence curve that attains the above bound the \textit{efficient influence curve}. The efficient influence curve gives the efficiency bound for estimating $\psi$. In parametric models, more than one influence curves may exist. On the other hand in nonparametric model, the influence curve is unique. However, the efficient influence curve is always unique in any cases.
    
    Once the efficient influence curve is known, no estimator can be more efficient than $\hat{\psi}(\Pb)$ such that
    \begin{equation} \label{eqn:EIF-converging-distribution}
    	\sqrt{n}(\hat{\psi} - \psi) \rightsquigarrow N(0, var(\phi))
    \end{equation}
    as $\var(\phi)$ serves to be our nonparametric efficiency bound. In \eqref{eqn:EIF-converging-distribution}, we call $\phi$ the (efficient) \textit{influence function} for the estimator $\hat{\psi}$ \footnote{In fact, influence curves themselves are the putative influence functions.}. For each nonparametric estimator, the efficient influence function, if exists, is almost surely unique, so in this sense the influence function contains all information about an estimator’s asymptotic behavior. In other words, if we know the influence function for an estimator, we know its asymptotic distribution and can easily construct confidence intervals and hypothesis tests.
    
    Characterizing the influence curves is crucial not only to give the efficiency bound for estimating $\psi$, thus providing
    a benchmark against which estimators can be compared, but probably more importantly, to construct estimators with very favorable properties, such as double robustness or general second-order bias. One may can find an (asymptotically linear) estimator that satisfies \eqref{eqn:EIF-converging-distribution} by solving appropriate estimating equations using the influence curves. Section \ref{proof:thm-eif} of the appendix contains an example of developing an efficient, model-free estimator based on the efficient influence curve of the target parameter.
    
    Finally we remark that for complicated functionals, pretending discrete space on $Z$ can facilitate our procedure to characterize influence curves. For example, assuming that our unit space is discrete,  the \textit{influence curve} $\phi(\Pb)$ for the functional $\psi(\Pb)$ can be defined by
    \begin{equation} \label{def:influence-curve-Gateaux-deriv}
    \phi(\Pb) = \frac{\partial}{\partial \epsilon} \psi \left( (1-\epsilon)\Pb + \epsilon\delta_z  \right)  \Big\vert_{\epsilon=0^+} = \underset{\epsilon \rightarrow 0^+}{\lim} \frac{\psi\left( (1-\epsilon)\Pb + \epsilon\delta_z  \right) - \psi(\Pb)}{\epsilon}
    \end{equation}
    where we let $\delta_z$ be the Dirac measure at $Z = z$. This definition is equivalent to the \textit{Gateaux derivative} of $\psi$ at  $\Pb$ in direction of point mass $(\delta_z - \Pb)$ (see, for example, Chapter 5 in \citet{boos2013essential}).
    
    For more details for nonparametric efficiency theory and influence functions,  we refer to \citet{kennedy2014semiparametric, Kennedy16, Edward20Slide, van2003unified, Tsiatis06}.

	\section{Additional Technical Results}
	
	\subsection{Sequential regression formulation} 
	\label{sequential-regression-formulation}
	
	The efficient influence function derived in the previous subsection involves pseudo-regression functions $m_s$. To avoid complicated conditional density estimation, as suggested by \citet{Kennedy17}, one may formulate a series of sequential regressions for $m_s$, as described in the subsequent remark.
	
	\begin{remark} \label{rmk:m_s}
		From the definition of $m_s$, it immediately follows that
		$$m_s = \int_{\mathcal{X}_s \times \mathcal{A}_s} m_{s+1}dQ_{s+1}(a_{s+1} \mid h_{s+1}, R_{s+1}=1) d\Pb(x_{s+1}|h_s,a_s, R_{s+1}=1).$$ 
		Hence, we can find equivalent form of the functions $m_s(\cdot)$ in Theorem \ref{thm:eif} as the following recursive regression:
		\begin{align*}
			& m_s(H_s, A_s, R_{s+1}=1)\\
			& = \E \left[ \frac{m_{s+1}(H_{s+1}, 1, 1) \delta\pi_{s+1}(H_{s+1}) + m_{s+1}(H_{s+1}, 0, 1) \{ 1 - \pi_{s+1}(H_{s+1}) \} }{\delta\pi_{s+1}(H_{s+1}) + 1 - \pi_{s+1}(H_{s+1})} \Bigg\vert H_s, A_s, R_{s+1}=1 \right]
		\end{align*}
		for $s= 1, ... , t-1$, where we use shorthand notation $m_{s+1}(H_{s+1}, a_{s+1}, 1) = m_{s+1}(H_{s+1}, A_{s+1}=a_{s+1}, R_{t+2}=1)$ and $m_s(H_s,A_s,1)=\mu(H_s,A_s, R_{s+1}=1)$.
	\end{remark}
	
	Above  sequential  regression  form  is  practically useful since it allows us to bypass all the conditional density estimations and instead use regression methods that are more readily available in statistical software.
	
	\subsection{EIF for $T=1$}
	\label{eif-for-T=1}
	
	In the next corollary we provide the efficient influence function for the incremental effect for a single timepoint study ($T=1$) whose identifying expression is given in Corollary \ref{cor:ident-exp-pt-trt}. 
	
	\begin{cor} \label{cor:eif-single-exp}
		When $T=1$, the efficient influence function for $\psi(\delta)$ in Corollary \ref{cor:ident-exp-pt-trt} is given by 
		\begin{align*}
			\mathbbm{1}\left(R=1\right)\left[\frac{\delta\pi(1\vert X)\phi_{1,R=1}(Z)+\pi(0\vert X)\phi_{0,R=1}(Z)}{\delta\pi(1\vert X) + \pi(0\vert X)} + \frac{\delta\{\mu(X,1,1) - \mu(X,0,1) \}\left(A-\pi(1\vert X)\right)}{\left\{\delta\pi(1\vert X)+\pi(0\vert X)\right\}^2}\right]
		\end{align*}
		where
		$$
		\mu(x,a, 1) = \E(Y \mid X = x, A = a, R=1),
		$$
		$$
		\pi(a\vert X) = d\Pb(A=a \mid X = x), 
		$$
		$$
		\omega(X,a) = d\Pb(R=1 \mid X = x, A=a), 
		$$
		and 
		$$
		\phi_{a,R=1}(Z) = \frac{\mathbbm{1}\left(A=a\right)\mathbbm{1}\left(R=1\right)}{\pi(a\vert X)\omega(X,a)}\left\{Y- \mu(X,a,1)\right\} + \mu(X,a,1)
		$$
		which is the uncentered efficient influence function for $\E[\mu(X,a, 1)]$.
	\end{cor}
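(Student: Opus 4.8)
The plan is to obtain the stated efficient influence function by specializing the general formula in Theorem \ref{thm:eif} to $t=1$ (equivalently $T=1$) and then algebraically reorganizing the result into the weighted-average-plus-correction form. Setting $t=1$, the outer sum $\sum_{s=1}^{t}$ collapses to the single term $s=1$, the inner cumulative product $\prod_{k=1}^{s-1}$ is empty and hence equal to $1$, and the trailing product $\prod_{s=1}^{t}$ also reduces to its $s=1$ factor. I would then substitute the single-timepoint notation $H_1 = X$, $A_1 = A$, $R_2 = R$, $Y_1 = Y$, $m_1(H_1,a,R_2{=}1) = \mu(X,a,1)$, $\pi_1(H_1) = \pi(1\vert X)$ and $\omega_1(H_1,A_1) = \omega(X,A)$, writing $\pi(0\vert X) = 1-\pi(1\vert X)$ and abbreviating the common denominator by $D(X) = \delta\pi(1\vert X) + \pi(0\vert X)$.

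The key simplification I would exploit is that the awkward middle factor collapses: since $1 + (\delta-1)\pi(1\vert X) = \delta\pi(1\vert X) + (1-\pi(1\vert X)) = D(X)$, the ratio $\frac{1+(\delta-1)\pi(1\vert X)}{D(X)}$ equals $1$, so the matrix block in Theorem \ref{thm:eif} reduces to $\delta\mu(X,1,1)\{\pi(1\vert X)\omega(X,A) - AR\} + \mu(X,0,1)\{(A-1)R + \omega(X,A)\pi(0\vert X)\}$. After multiplying through by the leading weight $\frac{\delta A + 1-A}{D(X)\,\omega(X,A)}$ the factors of $\omega(X,A)$ partially cancel, and I would then regroup the surviving terms by $\mu(X,1,1)$, $\mu(X,0,1)$ and $Y$. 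Using $A = \mathbbm{1}(A{=}1)$ and $1-A = \mathbbm{1}(A{=}0)$, the terms carrying the inverse weight $\frac{\mathbbm{1}(R{=}1)}{\omega(X,A)}$ should assemble into the bias-correction pieces of $\phi_{1,R=1}$ and $\phi_{0,R=1}$ with weights $\frac{\delta\pi(1\vert X)}{D(X)}$ and $\frac{\pi(0\vert X)}{D(X)}$, the remaining regression terms should supply the $\mu(X,a,1)$ parts of these same AIPW functions, and the leftover should condense into the single propensity correction $\frac{\delta\{\mu(X,1,1)-\mu(X,0,1)\}(A-\pi(1\vert X))}{D(X)^2}$.

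As a conceptual cross-check, and an alternative derivation from first principles via \eqref{def:influence-curve}, I would write $\psi(\delta) = \E[w_1(X)\mu(X,1,1) + w_0(X)\mu(X,0,1)]$ with weights $w_1 = \frac{\delta\pi(1\vert X)}{D(X)}$ and $w_0 = \frac{\pi(0\vert X)}{D(X)} = 1-w_1$, and split the pathwise derivative into three sources: perturbing the marginal law of $X$, perturbing the regressions $\mu(X,a,1)$, and perturbing $\pi(1\vert X)$ through the weights. The first two contributions combine, for each $a$, into $w_a(X)\,\phi_{a,R=1}(Z)$, where $\phi_{a,R=1}$ is the standard AIPW influence function for $\E[\mu(X,a,1)]$; the third contribution involves $\partial w_1/\partial\pi = \delta/D(X)^2 = -\partial w_0/\partial\pi$, so that differentiating the weights and attaching the treatment-model score $A - \pi(1\vert X)$ yields exactly $\frac{\delta\{\mu(X,1,1)-\mu(X,0,1)\}(A-\pi(1\vert X))}{D(X)^2}$. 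This transparently reproduces the ``weighted average of the influence functions for $\E(Y^1)$ and $\E(Y^0)$ plus a propensity-score contribution'' structure.

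The main obstacle I anticipate is the bookkeeping in the specialization route: one must track carefully which terms retain the missingness weight $\frac{\mathbbm{1}(R{=}1)}{\omega(X,A)}$ versus which reduce to plain regression terms, and verify that the factors of $\delta$ attach correctly so that the coefficients coalesce into $\phi_{1,R=1}$ and $\phi_{0,R=1}$ rather than mismatched multiples. The alternative derivation sidesteps much of this, but shifts the burden onto a correct evaluation of the weight-derivative $\partial w_a/\partial\pi$ and onto identifying the AIPW correction for $\E[\mu(X,a,1)]$ under both the treatment and the missingness models; once those two ingredients are in hand, the remaining steps are routine.
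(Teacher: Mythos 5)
Your overall strategy---specializing Theorem \ref{thm:eif} to $t=1$ and cross-checking with a first-principles pathwise-derivative calculation---is exactly the route the paper intends (the corollary is stated without proof as a consequence of Theorem \ref{thm:eif}), and your second derivation is sound: the weight derivative $\partial w_1/\partial\pi=\delta/D(X)^2=-\partial w_0/\partial\pi$ is correct, and assembling $w_a(X)\phi_{a,R=1}(Z)$ plus the propensity contribution $\delta\{\mu(X,1,1)-\mu(X,0,1)\}(A-\pi(1\vert X))/D(X)^2$ reproduces the stated form. Your observation that $1+(\delta-1)\pi(1\vert X)=D(X)$, so the middle ratio equals one, is also correct.

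The concrete difficulty is that your first route will not ``condense'' as described if you take the displayed formula of Theorem \ref{thm:eif} literally, because that display is not algebraically consistent with the corollary (nor with the penultimate identity in the paper's own proof in Section \ref{proof:thm-eif}). Two mismatches: (i) the leading weight $\frac{\delta A+1-A}{D(X)\,\omega(X,A)}$ multiplies the \emph{entire} bracket, whereas the correct expression derived from Lemmas \ref{lem:eif_1}--\ref{lem:eif_2} carries the prefactor $\frac{1}{D(X)^2\omega(X,A)}$; at $A=1$ the literal display therefore leaves a spurious factor of $\delta$ on the inverse-weighted pieces, producing $\frac{\delta R(Y-\delta\mu(X,1,1))}{D(X)\omega(X,1)}$ rather than $\frac{\delta R(Y-\mu(X,1,1))}{D(X)\omega(X,1)}$, and the terms cannot then assemble into $\phi_{1,R=1}$; (ii) the score term appears as $\{\delta m_1-m_0\}(A-\pi)$ where the contribution $\int m\,\phi\,d\nu=\delta\{m_1-m_0\}(A-\pi)/D^2$ requires $\delta\{m_1-m_0\}$. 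To make the specialization rigorous you must start from the pre-rearrangement identity
\[
\int m\,dQ \;-\; \frac{R}{\omega}\,m(X,A)\,\frac{dQ(A)}{d\Pb(A)} \;+\; \int m\,\phi\,d\nu \;+\; \frac{(\delta A+1-A)R}{D(X)\,\omega(X,A)}\,Y ,
\]
from which the regrouping you describe does go through and matches your cross-check. Relatedly, note that the correct derivation yields the bracketed expression \emph{without} the global prefactor $\mathbbm{1}(R=1)$: that prefactor can only be read as $\mathbbm{1}(R_1=1)\equiv 1$ rather than the outcome-missingness indicator, since multiplying the $\mu(X,a,1)$ and propensity-correction terms by the missingness indicator would break $\E[\varphi]=\psi(\delta)$. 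Do not contort your algebra to manufacture it.
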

	
	The efficient influence function for the point exposure case has a simpler and more intuitive form. As stated in Corollary \ref{cor:eif-single-exp}, it is a weighted average of the two efficient influence functions $\phi_{0,R=1}, \phi_{1,R=1}$, plus a contribution term due to unknown propensity scores. 
	
	\section{Proofs}
	
	\subsection{Lemma for the identifying expression in Theorem \ref{thm:ident-exp}}
    \label{proof:lem-ident-assumption}
	
	Without assumptions (A2-M) and (A3), our target parameter $\psi_t(\delta) = \E\left(Y_t^{\overline{Q}_t(\delta) } \right)$ would not be identified. The following lemma extends Theorem 1 in \citet{Kennedy17} to our setting.

    \begin{lemma} \label{lem:identification}
    	Under (A2-M) and (A3), and for all $t \leq T$, we have following identities:
    	\begin{itemize}
    		\setlength\itemsep{0.1em}
    		\item[a.] $d\Pb(A_t|H_t) = d\Pb(A_t|H_t, R_t=1)$
    		% 		\item[b.] $dQ_t(A_t|H_t) = dQ_t(A_t|H_t, R_t=1)$
    		\item[b.] $d\Pb(Y_{t-1},X_t|A_{t-1}, H_{t-1}) = d\Pb(Y_{t-1},X_t|A_{t-1}, H_{t-1}, R_t=1)$
    		\item[c.] $\E[Y_{t}|{H}_t, {A}_t] =  \E[Y_{t}|{H}_t, {A}_t, R_{t+1}=1]$ %, $\forall {t'} \geq t$
    	\end{itemize}
    \end{lemma}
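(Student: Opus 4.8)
The plan is to prove each of the three identities by leveraging the monotone dropout structure together with the missing-at-random condition (A2-M), which states that $R_t \independent (\underline{X}_t, \underline{A}_t, Y) \mid H_{t-1}, A_{t-1}, R_{t-1}=1$. The key observation is that monotonicity gives $R_{t-1}=1$ whenever $R_t=1$, so conditioning on $R_t=1$ is equivalent to conditioning on $\{R_{t-1}=1, R_t=1\}$, and this lets us repeatedly insert or remove the event $R_s=1$ from the conditioning set by appealing to (A2-M) at the appropriate time index.

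For part (a), I would start from $d\Pb(A_t \mid H_t)$ and note that $H_t = (\overline{X}_t, \overline{A}_{t-1}, \overline{Y}_{t-1})$ is a function of the variables measured through time $t$. The goal is to show that the conditional law of $A_t$ is unchanged by further conditioning on $R_t=1$. Applying (A2-M) at index $t$ gives $R_t \independent (\underline{X}_t, \underline{A}_t, Y) \mid H_{t-1}, A_{t-1}, R_{t-1}=1$; since $X_t \in \underline{X}_t$ and $A_t \in \underline{A}_t$, the variables $(X_t, A_t)$ are jointly independent of $R_t$ given $(H_{t-1}, A_{t-1}, R_{t-1}=1)$. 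A standard conditional-independence manipulation then shows that the distribution of $A_t$ given $H_t$ (which is built from $H_{t-1}, A_{t-1}, X_t$) does not depend on whether we additionally condition on $R_t=1$. For part (b), the argument is parallel but one notch earlier: $X_t$ sits at the front of $\underline{X}_t$, so the same application of (A2-M) at index $t$ yields $X_t \independent R_t \mid (H_{t-1}, A_{t-1}, R_{t-1}=1)$, from which $d\Pb(X_t \mid A_{t-1}, H_{t-1}) = d\Pb(X_t \mid A_{t-1}, H_{t-1}, R_t=1)$ follows, again using monotonicity to identify the conditioning events.

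For part (c), the statement concerns future outcomes $Y_{t'}$ for $t' > t$ and requires showing that conditioning on $R_{t+1}=1$ does not alter $\E[Y_{t'} \mid \overline{X}_t, \overline{A}_t]$. Here I would apply (A2-M) at index $t+1$, which delivers $R_{t+1} \independent (\underline{X}_{t+1}, \underline{A}_{t+1}, Y) \mid (H_t, A_t, R_t=1)$. Since $Y = (Y_1,\dots,Y_T)$ contains every $Y_{t'}$, and $(\overline{X}_t, \overline{A}_t)$ together with $R_t=1$ determines $(H_t, A_t, R_t=1)$ up to the monotonicity-implied equivalences, the desired equality of conditional expectations follows by averaging the conditional-independence statement. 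The subtle point worth stating carefully is the interplay between monotonicity and the $R_{t-1}=1$ (respectively $R_t=1$) appearing in the conditioning set of (A2-M): because $\{R_t=1\} \subseteq \{R_{t-1}=1\}$, adding $R_s=1$ to a conditioning event that already contains a later survival indicator is vacuous, and this is what makes the substitutions legitimate.

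\textbf{The main obstacle} I anticipate is not any deep idea but rather the bookkeeping needed to move the survival indicators in and out of the conditioning sets cleanly. One must be careful that $H_t$ and $H_{t+1}$ differ by exactly the variables $(X_{t+1}, A_t, Y_t)$ and that each such variable is covered by the $\underline{X}, \underline{A}, Y$ block in the relevant instance of (A2-M); verifying that the indices line up so that positivity (A3) guarantees the conditional distributions are well-defined (so that we never condition on a null event) is the step most prone to slips. Once the correct index of (A2-M) is matched to each identity and monotonicity is used to collapse $\{R_{s}=1\}$ events, each of (a), (b), (c) reduces to a one-line application of the definition of conditional independence.
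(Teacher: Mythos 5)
Your proposal is correct and follows essentially the same route as the paper: both arguments telescope (A2-M) down the time indices and use monotonicity of $R_s$ to collapse the survival events, the paper phrasing this as a product formula for the joint densities whose correction factors cancel in the ratio, while you phrase it as directly inserting and removing each $R_s=1$ from the conditioning set. The one caution is that your detailed sketches of (a) and (c) each invoke (A2-M) at a single index, which by itself only yields, e.g., $d\Pb(A_t\mid H_t, R_{t-1}=1)=d\Pb(A_t\mid H_t, R_t=1)$; the full downward induction over $s=t,t-1,\dots,2$ that your opening paragraph promises is genuinely needed to strip all the remaining survival indicators, since the left-hand sides of the lemma condition on no $R_s$ at all.
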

    
    % Lemma \ref{lem:identification} thus shows that each of the above quantities conditional on the observed data are equivalent to the one conditioned on the full data. 
    
    \begin{proof}
    \hfill
    \begin{enumerate}[label=\alph*.,leftmargin=.15in]
    	\setlength\itemsep{1em}
    	\item $\bm{d\Pb(A_t|H_t) = d\Pb(A_t|H_t, R_t=1)}$ 
    	By abuse of notation, for $s < t$, here we let $\underline{X}_s$, $\underline{A}_s$ represent $(X_s,...,X_t)$, $(A_s,...,A_t)$ respectively, and $\underline{Y}_{s-1}$ represent $(Y_{s-1},...,Y_{t-1})$.
    	First note that
    	\begin{align*}
    	d\Pb(A_t,H_t) &= d\Pb(\overline{X}_t, \overline{A}_t, \overline{Y}_{t-1}) = d\Pb(\underline{X}_2, \underline{A}_2, \underline{Y}_1 \mid X_1, A_1)d\Pb(X_1,A_1) \\
    	&= d\Pb(\underline{X}_2, \underline{A}_2, \underline{Y}_1 \mid X_1, A_1, R_2 = 1)d\Pb(X_1,A_1, R_1 = 1) \\
    	&= d\Pb(\underline{X}_3, \underline{A}_3, \underline{Y}_2 \mid \overline{X}_2, \overline{A}_2, \overline{Y}_1, R_2 = 1) \frac{d\Pb(X_1,A_1, R_1 = 1)}{d\Pb(X_1,A_1,R_2 = 1)} d\Pb(\overline{X}_2, \overline{A}_2,\overline{Y}_1, R_2 = 1) \\
    	&= d\Pb(\underline{X}_3, \underline{A}_3, \underline{Y}_2 \mid \overline{X}_2, \overline{A}_2, \overline{Y}_1, R_3 = 1) \frac{d\Pb(X_1,A_1, R_1 = 1)}{d\Pb(X_1,A_1, R_2 = 1)} d\Pb(\overline{X}_2, \overline{A}_2, \overline{Y}_1, R_2 = 1) \\
    	% &= d\Pb(\underline{X}_4, \underline{A}_4 \mid \overline{X}_3, \overline{A}_3, R_4 = 1) \frac{d\Pb(X_1,A_1, R_1 = 1)}{d\Pb(X_1,A_1, R_2 = 1)} \frac{d\Pb(\overline{X}_2, \overline{A}_2, R_2 = 1)}{d\Pb(\overline{X}_2, \overline{A}_2, R_3 = 1)}  d\Pb(\overline{X}_3, \overline{A}_3, R_3 = 1) \\
    	& = d\Pb({X}_t, {A}_t, {Y}_{t-1} \mid \overline{X}_{t-1}, \overline{A}_{t-1}, \overline{Y}_{t-2}, R_{t} = 1) \\
    	& \quad \times \prod_{s=1}^{t-2}  \frac{d\Pb(\overline{X}_s, \overline{A}_s, \overline{Y}_{s-1}, R_s = 1)}{d\Pb(\overline{X}_s, \overline{A}_s, \overline{Y}_{s-1}, R_{s+1} = 1)}  d\Pb(\overline{X}_{t-1}, \overline{A}_{t-1}, \overline{Y}_{t-2}, R_{t-1} = 1) \\
    	&= d\Pb(\overline{X}_t, \overline{A}_t, \overline{Y}_{t-1}, R_{t} = 1) \prod_{s=1}^{t-1}  \frac{d\Pb(\overline{X}_s, \overline{A}_s, \overline{Y}_{s-1}, R_s = 1)}{d\Pb(\overline{X}_s, \overline{A}_s, \overline{Y}_{s-1}, R_{s+1} = 1)}  
    	\end{align*}
    	where the first equality follows by definition, the second by definition of conditional probability, the third by assumption (A2-M), the fourth again by definition of conditional probability, the fifth by assumption (A2-M), and the sixth by repeating the same step $t-1$ times. The last expression is obtained by simply rearranging terms using the definition of conditional probability.
    	
    	Now we let 
    	$$
    	\mathlarger{\mathlarger{\bm{\Pi}}}_{\Pb}(t-1) \equiv \prod_{s=1}^{t-1}  \frac{d\Pb(\overline{X}_s, \overline{A}_s, \overline{Y}_{s-1}, R_s = 1)}{d\Pb(\overline{X}_s, \overline{A}_s, \overline{Y}_{s-1}, R_{s+1} = 1)}
    	$$
    	so we can write $d\Pb(A_t,H_t) = d\Pb(\overline{X}_t, \overline{A}_t, \overline{Y}_{t-1}, R_{t} = 1) \mathlarger{\mathlarger{\bm{\Pi}}}_{\Pb}(t-1)$.
    	
    	Then, similarly we have
    	\begin{align*}
    	d\Pb(H_t) &= d\Pb(\overline{X}_t, \overline{A}_{t-1}, \overline{Y}_{t-1}) = d\Pb(\overline{X}_t, \overline{A}_{t-1}, \overline{Y}_{t-1}, R_t=1) \mathlarger{\mathlarger{\bm{\Pi}}}_{\Pb}(t-1).
    	\end{align*}
    	Hence, finally we obtain
    	\begin{align*}
    	d\Pb(A_t \mid H_t) &= \frac{d\Pb(A_t,H_t)}{d\Pb(H_t)} = \frac{d\Pb(\overline{X}_t, \overline{A}_t,\overline{Y}_{t-1},  R_{t} = 1)}{d\Pb(\overline{X}_t, \overline{A}_{t-1}, \overline{Y}_{t-1}, R_t=1)} \\
    	&= \frac{d\Pb({A}_t, {H}_t, R_{t} = 1)}{d\Pb({H}_t, R_t=1)} \\
    	& = d\Pb(A_t|H_t, R_t=1)
    	\end{align*}
    	where the second equality comes from the above results. The proof naturally leads to  $\bm{dQ_t(A_t|H_t) = dQ_t(A_t|H_t, R_t=1)}$.
    	
    	\item $\bm{d\Pb(Y_{t-1}, X_t|A_{t-1}, H_{t-1}) = d\Pb(Y_{t-1}, X_t|A_{t-1}, H_{t-1}, R_t=1)}$ 
    	
    	By definition $d\Pb(Y_{t-1}, X_t|A_{t-1}, H_{t-1})= d\Pb(H_t)/d\Pb(A_{t-1}, H_{t-1})$, and from the part a) it immediately follows
    	\begin{align*}
    	& d\Pb(H_t) = d\Pb(\overline{X}_t, \overline{A}_{t-1}, \overline{Y}_{t-1}, R_t=1) \mathlarger{\mathlarger{\bm{\Pi}}}_{\Pb}(t-1), \\
    	& d\Pb(A_{t-1}, H_{t-1}) =  d\Pb(\overline{X}_{t-1}, \overline{A}_{t-1}, \overline{Y}_{t-2}, R_{t-1} = 1) \mathlarger{\mathlarger{\bm{\Pi}}}_{\Pb}(t-2) .
    	\end{align*}
    	Hence, we have
    	\begin{align*}
    	\frac{d\Pb(H_t)}{d\Pb(A_{t-1}, H_{t-1})} &= \frac{d\Pb(\overline{X}_t, \overline{A}_{t-1}, \overline{Y}_{t-1}, R_t=1)}{d\Pb(\overline{X}_{t-1}, \overline{A}_{t-1}, \overline{Y}_{t-2}, R_{t} = 1)} \\
    	&= d\Pb(Y_{t-1}, X_t \mid \overline{H}_{t-1}, {A}_{t-1}, R_{t} = 1)
    	\end{align*}
    	which yields the desired result.
    	
    	\item $\bm{\E[Y_{t}|{H}_t, {A}_t] =  \E[Y_{t}|{H}_t, {A}_t, R_{t+1}=1]}$ 
    	
    	By definition
    	$
    	\E[Y_{t}|{H}_t, {A}_t] = \int y d\Pb(Y_{t} = y\vert {H}_t, {A}_t),
    	$
    	and thereby it suffices to show that $d\Pb(Y_{t} \vert {H}_t, {A}_t) = d\Pb(Y_{t} \vert {H}_t, {A}_t, R_{t+1})$. 
    	
    	By the same logic we used for the first proof, we have
    	\begin{align*}
    	d\Pb(Y_{t}, {H}_t, {A}_t) = d\Pb(Y_t, {H}_t, {A}_t, R_t=1) \mathlarger{\mathlarger{\bm{\Pi}}}_{\Pb}(t-1)
    	\end{align*}
    	and also
    	$$
    	d\Pb({H}_t, {A}_t) =  d\Pb({H}_t, {A}_t, R_{t} = 1) \mathlarger{\mathlarger{\bm{\Pi}}}_{\Pb}(t-1).
    	$$
    	Hence, by Assumption (A2-M) we have that
    	$$
    	d\Pb(Y_{t} \mid {H}_t, {A}_t) = d\Pb(Y_{t} \mid {H}_t, {A}_t, R_t = 1) = d\Pb(Y_{t} \mid {H}_t, {A}_t, R_{t+1} = 1).
    	$$
    \end{enumerate}
    
    \end{proof}
	
	Following the exact same logic used in the proof of \citet[][Theorem 1]{Kennedy17}, under Assumptions \ref{assumption:A1} and \ref{assumption:A2-E}, for all $s < t$ we have the recursion formula
	\[
	\E\{{Y_t}^{(\overline{a}_{s-1},\underline{Q}_s)} \mid H_{s-1}, A_{s-1}\} = \int_{\mathcal{X}_s\times\mathcal{A}_s} \E\{{Y_t}^{(\overline{a}_{s},\underline{Q}_{s+1})} \mid H_{s} = h_s, A_{s}=a_s\} dQ_s(a_s \mid h_s) d\Pb (y_{s-1}, x_s \mid h_{s-1}, a_{s-1} ).
	\]
	Applying the above $t$ times leads to
	\begin{align*}
	    \E\{{Y_t}^{\overline{Q}_t}\} &= \int_{\overline{\mathcal{X}}_t\times\overline{\mathcal{A}}_t} \E\{{Y_t}^{\overline{a}_{t}} \mid H_{t} = h_t, A_{t}=a_t\} \prod_{s=1}^{t} dQ_s(a_s \mid h_s) d\Pb (y_{s-1}, x_s \mid h_{s-1}, a_{s-1}).
	\end{align*}
	
	Finally, Assumption \ref{assumption:A1} and Lemma \ref{lem:identification} give
	\begin{align*}
	    \E\{{Y_t}^{\overline{Q}_t}\} &= \int_{\overline{\mathcal{X}}_t\times\overline{\mathcal{A}}_t} \E\{{Y_t} \mid H_{t} = h_t, A_{t}=a_t, R_{t+1}=1\} \prod_{s=1}^{t} dQ_s(a_s \mid h_s, R_s=1) d\Pb (y_{s-1}, x_s \mid h_{s-1}, a_{s-1}, R_s=1).
	\end{align*}
	
	\subsection{Proof of Theorem \ref{thm:eif}}
	\label{proof:thm-eif}
	
	\subsubsection{Identifying expression for the efficient influence function}
	
	In the next lemma, we provide an identifying expression for the efficient influence function for our incremental effect $\psi_t(\delta)$ under a nonparametric model, which allows the data-generating process $\Pb$ to be infinite-dimensional.
	%  Stemming from these facts, we can apply chain rule arguments to simple efficient influence functions to derive efficient influence function for more complicated functionals. 
	
	\begin{lemma} \label{lem:eif}
		Define
		\begin{align*}
			m_s&(h_s,a_s, R_{s+1}=1) \\
			& = \int_{ \mathcal{R}_s} \mu(h_{t},a_{t}, R_{t+1}=1)  \prod_{k=s+1}^{{t}} dQ_k(a_k \mid h_k, R_k=1) d\Pb(y_{k-1},x_k|h_{k-1},a_{k-1}, R_k=1)
		\end{align*} 
		for $s=0,...,{t}-1$, $\forall t \leq T$,  where we write $\mathcal{R}_s = (\overline{\mathcal{X}}_{t}\times \overline{\mathcal{A}}_{t}) \setminus (\overline{\mathcal{X}}_{s}\times \overline{\mathcal{A}}_{s})$ and $\mu(h_{t},a_{t}, R_{t+1}=1) = \E(Y_t \mid H_{t} = h_{t}, A_{t} = a_{t}, R_{t+1}=1)$. For $s=t$ and $s=t+1$, we set $m_{s}(\cdot)=\mu(h_{t},a_{t}, R_{t+1}=1)$ and $m_{t+1}(\cdot)=Y$. Moreover, let $\frac{\mathbbm{1}(H_s=h_s, R_s=1)}{d\Pb(h_s,R_s=1)} \phi_s(H_s, A_s, R_s=1;a_s)$ denote the efficient influence function for $dQ_s(a_s|h_s,R_s=1)$. 
		
		Then, the efficient influence function for $m_0=\psi_t (\delta)$ is given by
		\begin{equation*} 
			\begin{aligned}
				& \sum_{s=0}^{t} \left\{  \int_{ \mathcal{A}_{s+1}} m_{s+1}(H_{s+1}, A_{s+1}, R_{s+2}=1)dQ_{s+1}(a_{s+1}|H_{s+1},R_{s+1}=1) - m_s(H_{s}, A_{s}, R_{s+1}=1) \right\}  \\
				%    & \qquad \times \prod_{s=0}^{t} \frac{dQ_s(A_s \mid H_s, R_s=1)}{d\Pb(A_s \mid H_s, R_s=1)}  \left( \mathbbm{1}(R_{t+1}=1)  \prod_{s=0}^{t} \frac{1}{d\Pb(R_{s+1}=1 \mid H_s, A_s,R_s=1)} \right) 
				& \qquad \times \mathbbm{1}\left(R_{s+1}=1\right)  \left(  \prod_{k=0}^{s} \frac{dQ_k(A_k \mid H_k, R_k=1)}{d\Pb(A_k \mid H_k, R_k=1)}   \frac{1}{d\Pb(R_{k+1}=1 \mid H_k, A_k,R_k=1)} \right) \\
				& + \sum_{s=1}^{t} \mathbbm{1}(R_s=1) \left(  \prod_{k=0}^{s-1} \frac{dQ_k(A_k \mid H_k, R_k=1)}{d\Pb(A_k \mid H_k, R_k=1)}  \frac{1}{d\Pb(R_{k+1}=1 \mid H_k, A_k,R_k=1)}  \right) \\
				& \qquad \quad  \times \int_{ \mathcal{A}_{s}}  m_s(H_s, a_s,R_{s+1}=1) \phi_s(H_s, A_s, R_s=1;a_s) d\nu(a_s) 
			\end{aligned}
		\end{equation*} 
		where we define $dQ_{t+1} = 1$, $m_{t+1}(\cdot)=Y$, and $dQ_0(a_0|h_0)/d\Pb(a_0|h_0) = 1$, and $\nu$ is a dominating measure for the distribution of $A_s$.
	\end{lemma}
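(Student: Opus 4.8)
The parameter $m_0=\psi_t(\delta)$ is a nested/sequential functional: each $m_s$ is obtained from $m_{s+1}$ by integrating out one covariate transition and one interventional treatment draw, so that, immediately from the definition of $m_s$,
$$m_s(h_s,a_s,R_{s+1}=1)=\int m_{s+1}(h_{s+1},a_{s+1},R_{s+2}=1)\,dQ_{s+1}(a_{s+1}\mid h_{s+1},R_{s+1}=1)\,d\Pb(x_{s+1}\mid h_s,a_s,R_{s+1}=1),$$
with boundary cases $m_{t+1}=Y$ and $m_t=\mu$. The plan is to derive the efficient influence function by pathwise differentiation: fix a smooth one-dimensional submodel $\Pb_\epsilon$ with score $s_\epsilon$, compute $\partial_\epsilon m_0(\Pb_\epsilon)\big|_{\epsilon=0}$, and match it to $\E[\phi\,s_\epsilon]$ for the claimed $\phi$.

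First I would fix the variation-independent factorization of the observed-data likelihood into the treatment laws $d\Pb(A_s\mid H_s,R_s=1)$, the covariate transitions $d\Pb(X_s\mid H_{s-1},A_{s-1},R_s=1)$, the censoring laws $\omega_s=d\Pb(R_{s+1}=1\mid H_s,A_s,R_s=1)$, and the outcome law generating $\mu$; along the submodel the score $s_\epsilon$ then decomposes additively across these blocks. Because $m_0$ is a product of conditional densities (together with the interventional densities $dQ_s$, which are deterministic functions of $\pi_s$) integrated against $\mu$, differentiating produces, by the product rule, one contribution per factor. The substance of the proof is to rewrite each contribution as an inner product $\E[\,\cdot\;s_\epsilon]$ under $\Pb$.

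Concretely, I would organize the contributions into three families. (i) Differentiating a covariate transition $d\Pb(x_{s+1}\mid h_s,a_s,R_{s+1}=1)$, or for the terminal step the outcome regression $\mu$, produces a term whose inner integral reconstructs the centered increment $\int m_{s+1}\,dQ_{s+1}-m_s$, while the outer integral over the past, once rewritten under $\Pb$, supplies the cumulative weight $\prod_{k\le s}\frac{dQ_k}{d\Pb(A_k\mid H_k,R_k=1)}$ together with the inverse-probability-of-censoring weights $\prod_{k\le s}\omega_k^{-1}$ and the indicator $\mathbbm{1}(R_{s+1}=1)$; this yields the first sum. (ii) Differentiating an interventional propensity $dQ_s$ produces the term involving the conditional-density influence function $\phi_s$, which the lemma takes as a known building block; carrying the same cumulative weight (now up to $s-1$) and the indicator $\mathbbm{1}(R_s=1)$ gives the second sum. (iii) The censoring laws $\omega_s$ do not enter $m_0$ at all, since it is a pure full-data functional of the conditional laws given $\{R=1\}$; hence $\partial_\epsilon m_0$ has no component along the censoring scores, and I would verify that the claimed $\phi$ is correspondingly orthogonal to the censoring tangent space by using the fact that each weight $\omega_k^{-1}\mathbbm{1}(R_{k+1}=1)$ has conditional mean one given $(H_k,A_k,R_k=1)$.

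The key technical device throughout is the reweighting identity converting expectations taken under the interventional, conditioned-on-$\{R=1\}$ law into ordinary expectations under $\Pb$; here I would invoke Lemma \ref{lem:identification} to replace each conditional law given $\{R_s=1\}$ by its observational counterpart and to justify introducing the weights $dQ_k/d\Pb(A_k\mid\cdots)$ and $\omega_k^{-1}\mathbbm{1}(R_{k+1}=1)$, with a short induction showing these cumulative weights have conditional expectation one at each stage so that the telescoping collapses correctly. The main obstacle is precisely this bookkeeping across all $t$ stages: tracking how the nested integral's past factors assemble into the cumulative treatment-and-censoring weights while its future factors assemble into the increment $\int m_{s+1}\,dQ_{s+1}-m_s$ and the pseudo-regression $m_s$, and checking the boundary conventions $dQ_0/d\Pb=1$, $dQ_{t+1}=1$, $m_{t+1}=Y$, and the $s=0,\,s=t$ endpoints for consistency. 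Once the reweighting is in place, matching coefficients against the score decomposition is routine. As the Remark notes, an alternative is to map the full-data (no-censoring) influence function of \citet{Kennedy17} through the coarsening-at-random projection, giving a faster high-level route; I would nonetheless confirm the explicit form by the direct differentiation above.
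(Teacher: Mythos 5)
Your proposal is correct and follows essentially the same route as the paper: the authors also compute the influence function by applying the product rule to the nested integral defining $m_0$, obtaining one contribution per factor — the outcome regression and covariate-transition terms assemble (after reweighting via Lemma \ref{lem:identification}) into the increments $\int m_{s+1}\,dQ_{s+1}-m_s$ carrying the cumulative treatment and inverse-censoring weights (their parts A--C, your family (i), i.e.\ Lemma \ref{lem:eif_1}), while the $dQ_s$ terms yield the $\phi_s$ contributions (their part D, your family (ii), i.e.\ Lemma \ref{lem:eif_2}), with the censoring factors never differentiated since the functional does not depend on them. The only cosmetic difference is that the paper phrases the derivative as a Gateaux derivative at point-mass perturbations rather than a general submodel score decomposition, which the paper's own Section \ref{sec:ifdetails} notes is equivalent.
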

	
	The proof of Lemma \ref{lem:eif} involves derivation of efficient influence function for more general stochastic interventions that depend on the both observational propensity scores and right-censoring process. We begin by presenting the following three additional lemmas.
	
	\begin{lemma}[\cite{Kennedy17}] \label{lem:eif_dQ}
		For $\forall t$, the efficient influence function for 
		\begin{equation*} 
			\begin{aligned}
				dQ_t(a_t \mid h_t, R_t=1) = \frac{a_t\delta\pi_{t}(h_t) + (1-a_t) \{ 1 - \pi_t(h_t) \} }{\delta\pi_{t}(h_t) + 1 - \pi_{t}(h_t)} 
			\end{aligned}
		\end{equation*} 
		which is defined in (\ref{eqn:incr-intv-ps}) is given by $\frac{\mathbbm{1}(H_t=h_t, R_t=1)}{d\Pb(h_t,R_t=1)} \phi_t(H_t, A_t, R_t=1;a_t)$, where $\phi_t(H_t, A_t, R_t=1;a_t)$ equals
		\begin{equation*} 
			\begin{aligned}
				\frac{(2a_t-1)\delta\{A_t-\pi_t(H_t)\}}{\left( \delta\pi_t(H_t) + 1 -  \pi_t(H_t) \right)^2}
			\end{aligned}
		\end{equation*} 	
		where $\pi_t(h_t) = \Pb(A_t=1 \mid H_t=h_t, R_t=1)$.
	\end{lemma}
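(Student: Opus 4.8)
The plan is to exploit that for fixed $(a_t,h_t)$ the target $dQ_t(a_t\mid h_t,R_t=1)$ is a smooth function of a single, simpler functional of $\Pb$: the observational propensity score $\pi_t(h_t)=\Pb(A_t=1\mid H_t=h_t,R_t=1)$. Writing $q_t=g(\pi_t(h_t))$ with $g(\pi)=\{a_t\delta\pi+(1-a_t)(1-\pi)\}/(\delta\pi+1-\pi)$, the influence function of $q_t$ then follows from the chain rule for pathwise derivatives (the delta method for influence functions), namely $\phi_{q_t}=g'(\pi_t(h_t))\,\phi_{\pi_t}$, reducing the whole problem to two elementary computations.

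First I would compute the influence function of the pointwise propensity score. Since $\pi_t(h_t)$ is a conditional mean of $A_t$ evaluated at the fixed point $H_t=h_t$ under the conditioning event $R_t=1$, a direct Gateaux-derivative calculation along $\Pb_\epsilon=(1-\epsilon)\Pb+\epsilon\delta_Z$ (using the quotient rule on the ratio $\Pb(A_t=1,H_t=h_t,R_t=1)/\Pb(H_t=h_t,R_t=1)$) gives
\[
\phi_{\pi_t}=\frac{\mathbbm{1}(H_t=h_t,R_t=1)}{d\Pb(h_t,R_t=1)}\{A_t-\pi_t(H_t)\}.
\]
The presence of the conditioning event $R_t=1$ relative to the censoring-free setup of \citet{Kennedy17} simply propagates into both the indicator and the normalizing density, so the argument is otherwise unchanged.

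Next I would differentiate $g$ explicitly. Treating the cases $a_t=1$ and $a_t=0$ separately and simplifying the numerators, both collapse to the single expression $g'(\pi)=(2a_t-1)\delta/(\delta\pi+1-\pi)^2$, where the factor $(2a_t-1)$ encodes the sign flip between treated and control arms. Multiplying $g'(\pi_t(h_t))$ by $\phi_{\pi_t}$ then yields the claimed inner function
\[
\phi_t(H_t,A_t,R_t=1;a_t)=\frac{(2a_t-1)\delta\{A_t-\pi_t(H_t)\}}{\left(\delta\pi_t(H_t)+1-\pi_t(H_t)\right)^2}
\]
together with the leading Dirac-type factor $\mathbbm{1}(H_t=h_t,R_t=1)/d\Pb(h_t,R_t=1)$, matching the statement.

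The only genuinely delicate part is bookkeeping rather than anything deep: one must keep the pointwise, evaluated-at-$(a_t,h_t)$ nature of the functional straight so that the Dirac-type indicator and density land in the correct places, and one must confirm that the chain rule is legitimate, i.e.\ that $q_t$ is pathwise differentiable as a composition. This follows because $g$ is smooth and uniformly Lipschitz in $\pi$ whenever $\delta\in[\delta_l,\delta_u]$ with $0<\delta_l\le\delta_u<\infty$ and $\pi_t\in(0,1)$, so no regularity obstruction arises and the result reduces to the two computations above.
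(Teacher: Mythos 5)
Your proposal is correct and follows essentially the same route the paper (and \citet{Kennedy17}, to which the lemma is attributed) takes: the paper describes these derivations as "a series of chain rules" built on the elementary influence function of a conditional mean, which is exactly your delta-method decomposition $\phi_{q_t}=g'(\pi_t(h_t))\,\phi_{\pi_t}$ with the Gateaux-derivative computation of $\phi_{\pi_t}$ along $(1-\epsilon)\Pb+\epsilon\delta_z$. Your computation of $g'(\pi)=(2a_t-1)\delta/(\delta\pi+1-\pi)^2$ and the handling of the extra conditioning event $R_t=1$ in the indicator and normalizing density both check out.
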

	
	\begin{lemma} \label{lem:eif_1}
		Suppose $\overline{Q}_T$ is not depending on $\Pb$. Recall that for $\forall t \leq T$,
		\begin{align*} 
			m_s&(h_s,a_s, R_{s+1}=1) 
			& = \int_{ \mathcal{R}_s} \mu(h_{t},a_{t}, R_{t+1}=1)  \prod_{k=s+1}^{{t}} dQ_k(a_k \mid h_k, R_k=1) d\Pb(y_{k-1},x_k|h_{k-1},a_{k-1}, R_k=1)
		\end{align*} 
		for $s=0,...,{t}-1$, where we write $\mathcal{R}_s = (\overline{\mathcal{X}}_{t}\times \overline{\mathcal{A}}_{t}) \setminus (\overline{\mathcal{X}}_{s}\times \overline{\mathcal{A}}_{s})$ and $\mu(h_{t},a_{t}, R_{t+1}=1) = \E(Y_t \mid H_{t} = h_{t}, A_{t} = a_{t}, R_{t+1}=1)$. Note that from definition of $m_s$ it immeidately follows $m_s = \int_{\mathcal{X}_{s} \times \mathcal{A}_{s}} m_{s+1}dQ_{s+1}(a_{s+1} \mid h_{s+1}, R_{s+1}=1) d\Pb(x_{s+1}|h_{s},a_{s}, R_{s+1}=1) $. 
		
		Now the efficient influence function for $\psi^*(\overline{Q}_t)=m_0$ is
		\begin{equation*} 
			\begin{aligned}
				& \sum_{s=0}^{t} \left\{  \int_{ \mathcal{A}_{s+1}} m_{s+1}(H_{s+1}, A_{s+1}, R_{s+2}=1)dQ_{s+1}(a_{s+1}|H_{s+1},R_{s+1}=1) - m_s(H_{s}, A_{s}, R_{s+1}=1) \right\}  \\
				& \qquad \times  \left(  \prod_{k=0}^{s} \frac{dQ_k(A_k \mid H_k, R_k=1)}{d\Pb(A_k \mid H_k, R_k=1)}   \frac{\mathbbm{1}\left(R_{k+1}=1\right)}{d\Pb(R_{k+1}=1 \mid H_k, A_k,R_k=1)} \right) 
			\end{aligned}
		\end{equation*} 
		where we define $dQ_{t+1} = 1$, $m_{t+1}(\cdot)=Y_t$, and $dQ_0(a_0|h_0)/d\Pb(a_0|h_0) = 1$.
	\end{lemma}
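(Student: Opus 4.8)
The plan is to establish the stated expression as the efficient influence function by verifying pathwise differentiability directly: introduce a regular one-dimensional submodel $\{\Pb_\epsilon\}$ with $\Pb_0=\Pb$ and score $s(Z)=\partial_\epsilon \log d\Pb_\epsilon(Z)\big|_{\epsilon=0}$, and show that $\partial_\epsilon \psi^*(\Pb_\epsilon)\big|_{\epsilon=0} = \E[\varphi\cdot s]$, where $\varphi$ denotes the claimed influence function. Because the nonparametric tangent space is all of mean-zero $L_2(\Pb)$, exhibiting a single such gradient immediately identifies it as the unique efficient influence function, so no separate projection onto a tangent space is required.

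The key structural fact is the recursion recorded in the statement, $m_s=\int m_{s+1}\,dQ_{s+1}\,d\Pb(x_{s+1}\mid h_s,a_s,R_{s+1}=1)$, which exhibits $m_0=\psi^*$ as a nested composition of conditional expectations. Since $\overline{Q}_T$ does not depend on $\Pb$ by hypothesis, only the covariate-transition laws $d\Pb(x_k\mid h_{k-1},a_{k-1},R_k=1)$, the terminal regression $\mu$, and the conditioning events $\{R_k=1\}$ carry $\epsilon$-dependence. I would differentiate the nested integral by the product (Leibniz) rule, so that $\partial_\epsilon m_0(\Pb_\epsilon)\big|_0$ decomposes into a sum over stages $s=0,\dots,t$; the stage-$s$ term isolates the derivative acting on the transition at time $s$ (with the terminal stage $s=t$ handling $\mu$) while all deeper factors are held at their $\Pb$-values.

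Each stage-$s$ term must then be rewritten as an inner product against the full-data score $s(Z)$, and this requires two changes of measure. First, replacing the observational treatment law by the interventional one introduces the likelihood ratio $\prod_{k\le s} dQ_k/d\Pb(A_k\mid H_k,R_k=1)$; second, because every conditional expectation is taken on the event $\{R=1\}$, converting these into expectations over the full observed-data law forces the inverse-probability-of-censoring weights $\mathbbm{1}(R_{k+1}=1)/d\Pb(R_{k+1}=1\mid H_k,A_k,R_k=1)$. Together these produce exactly the cumulative product weight in the claim, and the leftover factor at stage $s$ is the centered residual $\int m_{s+1}\,dQ_{s+1}-m_s$, which is mean-zero given $(H_s,A_s,R_{s+1}=1)$ and hence pairs cleanly with $s(Z)$. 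Summing over $s$ reproduces $\varphi$, after which a final check that $\E[\varphi]=0$ confirms it is a valid gradient.

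I expect the main obstacle to be the bookkeeping in the change-of-measure step: one must track the nested conditioning events carefully so that the $\{R_k=1\}$ indicators, the histories $H_k$ versus $H_{k+1}$, and the ratios $d\Pb(\overline{X}_s,\overline{A}_s,R_s=1)/d\Pb(\overline{X}_s,\overline{A}_s,R_{s+1}=1)$ (the very factors appearing in Lemma \ref{lem:identification}) align so that the per-stage weights telescope into a single cumulative product. A clean way to organize this is induction on $t$: assuming the formula for the influence function of the subproblem started one step in (i.e.\ of $m_1$), I would derive that of $m_0$ by a single application of the product rule together with the two reweightings, which reduces the entire argument to one time step and avoids manipulating the full nested integral at once.
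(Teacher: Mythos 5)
Your proposal is correct and follows essentially the same route as the paper: the paper likewise differentiates the nested integral defining $m_0$ by the product rule over the $\Pb$-dependent factors (the terminal regression $\mu$, the covariate transitions, and the baseline law), obtains per-stage terms equal to the conditionally mean-zero residuals $\int m_{s+1}\,dQ_{s+1}-m_s$ multiplied by the cumulative density-ratio and inverse-probability-of-censoring weights, and reduces to a single time step via induction on $t$ (the paper writes out $t=2$ explicitly). The only cosmetic difference is that the paper carries out the derivative using the Gateaux/point-mass influence-function calculus for conditional means and densities rather than pairing against a generic submodel score, which the paper's own Section on influence functions treats as equivalent.
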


	\begin{lemma} \label{lem:eif_2}
		Suppose $\overline{Q}_T$ depends on $\Pb$ and let $\frac{\mathbbm{1}(H_t=h_t, R_t=1)}{d\Pb(h_t,R_t=1)} \phi_t(H_t, A_t, R_t=1;a_t)$ denote the efficient influence function for $dQ_t(a_t|h_t,R_t=1)$ defined in Lemma \ref{lem:eif_dQ} for all $t$. Then the efficient influence function for $\psi_t(\delta)$ is given as
		\begin{equation*} 
			\begin{aligned}
				&\varphi^*(\overline{Q}_t) \\
				& + \sum_{s=1}^{t}  \left(  \prod_{k=0}^{s-1} \frac{dQ_k(A_k \mid H_k, R_k=1)}{d\Pb(A_k \mid H_k, R_k=1)}  \frac{\mathbbm{1}(R_{k+1}=1)}{d\Pb(R_{k+1}=1 \mid H_k, A_k,R_k=1)}  \right) \\
				& \qquad \quad  \times \int_{ \mathcal{A}_{s}}  m_s(H_s, a_s,R_{s+1}=1) \phi_s(H_s, A_s, R_s=1;a_s) d\nu(a_s) 
			\end{aligned}
		\end{equation*} 
		where $\varphi^*(\overline{Q}_t)$ is the efficient influence function from Lemma \ref{lem:eif_1} and $\nu$ is a dominating measure for the distribution of $A_s$.
	\end{lemma}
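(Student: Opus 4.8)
The plan is to establish this via a chain-rule (product-rule) argument for pathwise derivatives, exploiting that $\psi_t(\delta)=m_0$ depends on $\Pb$ through two separate channels. The \emph{direct} channel runs through the outcome regression $\mu$, the covariate transition laws $d\Pb(x_k\mid h_{k-1},a_{k-1},R_k=1)$, and the treatment/censoring factors inside the weights; the \emph{indirect} channel runs through the interventional densities $\overline{Q}_t$, since each $dQ_s(\cdot\mid h_s,R_s=1)$ is a known function of the propensity score $\pi_s$ and hence of $\Pb$. Fixing a regular parametric submodel $\{\Pb_\epsilon\}$ with $\Pb_{\epsilon=0}=\Pb$ and score $g=\partial_\epsilon\log d\Pb_\epsilon|_{0}$, I would write $\psi_t(\Pb_\epsilon)=\Psi\big(\Pb_\epsilon,\overline{Q}(\Pb_\epsilon)\big)$, where $\Psi(\cdot,\overline{Q})$ denotes the functional with the interventional densities held as a frozen argument, and differentiate at $\epsilon=0$.

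By the multivariate chain rule the derivative splits as $\partial_\epsilon\Psi(\Pb_\epsilon,\overline{Q})|_{0}+\sum_{s=1}^{t}\partial_\epsilon\Psi(\Pb,\overline{Q}_{-s},Q_{s,\epsilon})|_{0}$. The first (frozen-$\overline{Q}$) term is exactly the pathwise derivative treated in Lemma \ref{lem:eif_1}, and so contributes the gradient $\varphi^*(\overline{Q}_t)$. For each remaining term I would use that $m_0=\int \mu\prod_k dQ_k\,d\Pb(x_k\mid\cdots)$ is \emph{multilinear} in the collection $\{dQ_k\}$: because $dQ_s$ appears as a single product factor, its Gateaux derivative in the direction $\dot Q_s:=\partial_\epsilon dQ_{s,\epsilon}|_{0}$ removes that one factor and marginalizes the remainder, producing the forward product $\prod_{k<s}dQ_k\,d\Pb(x_k\mid\cdots)\,d\Pb(x_s\mid\cdots)$ weighting the backward-marginalized regression $m_s(h_s,a_s,R_{s+1}=1)$ against $\dot Q_s(a_s\mid h_s,R_s=1)$.

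The crux is then to turn each such term into an expectation against the score $g$. I would substitute the influence function of $dQ_s$ supplied by Lemma \ref{lem:eif_dQ}, namely $\tfrac{\mathbbm{1}(H_s=h_s,R_s=1)}{d\Pb(h_s,R_s=1)}\phi_s(H_s,A_s,R_s=1;a_s)$, so that $\dot Q_s(a_s\mid h_s,R_s=1)=\E\big[\tfrac{\mathbbm{1}(H_s=h_s,R_s=1)}{d\Pb(h_s,R_s=1)}\phi_s(\cdots;a_s)\,g\big]$, then interchange the marginalizing integral (over the dummy history $(h_s,a_s)$) with the expectation over $Z$. The delta-type factor $\mathbbm{1}(H_s=h_s,R_s=1)/d\Pb(h_s,R_s=1)$ collapses the marginalizing integral onto the realized path, and factoring $d\Pb(h_s,R_s=1)$ through the monotone-dropout chain rule (so that it carries the censoring probabilities $\omega_k$ at each earlier step) cancels the transition laws in the forward product. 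What survives is precisely the inverse-probability weight $\prod_{k=0}^{s-1}\tfrac{dQ_k(A_k\mid H_k,R_k=1)}{d\Pb(A_k\mid H_k,R_k=1)}\tfrac{\mathbbm{1}(R_{k+1}=1)}{d\Pb(R_{k+1}=1\mid H_k,A_k,R_k=1)}$ multiplying $\int_{\mathcal{A}_s}m_s\,\phi_s\,d\nu(a_s)$. Reading off the coefficient of $g$ and summing over $s$ gives the stated correction, and adding $\varphi^*(\overline{Q}_t)$ from the frozen term yields the claimed influence function; since the model is nonparametric its gradient is unique, so this is the efficient influence function.

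I expect the main obstacle to be this collapse step: the change of order of integration together with the monotone-dropout factorization of $d\Pb(h_s,R_s=1)$ that converts the frozen forward factorization into the censoring-adjusted IPW weights. The accompanying bookkeeping is delicate — one must confirm that the censoring indicators and normalizing densities land with the correct indices $k=0,\dots,s-1$, and that the factor $\mathbbm{1}(H_s=h_s,R_s=1)/d\Pb(h_s,R_s=1)$ is absorbed consistently with the conditioning events $R_s=1,R_{s+1}=1$ carried by $m_s$ and by the weights. Once this accounting is pinned down the remaining manipulations are routine.
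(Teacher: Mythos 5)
Your proposal is correct and follows essentially the same route as the paper: the paper's proof (part D of the combined proof of Lemmas \ref{lem:eif_1} and \ref{lem:eif_2}, written out for $t=2$ and extended by induction) likewise applies the product rule to the factor $\prod_s dQ_s$, substitutes the influence function $\tfrac{\mathbbm{1}(H_s=h_s,R_s=1)}{d\Pb(h_s,R_s=1)}\phi_s$ from Lemma \ref{lem:eif_dQ}, and cancels $d\Pb(h_s,R_s=1)$ against the forward product of transition laws to leave the censoring-adjusted IPW weights multiplying $\int_{\mathcal{A}_s} m_s\,\phi_s\,d\nu(a_s)$. The ``collapse'' step you flag as the crux is exactly the manipulation the paper carries out explicitly in its displayed computation.
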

	
	The proof of Lemma \ref{lem:eif_dQ}, \ref{lem:eif_1} and \ref{lem:eif_2} is basically a series of chain rules, after specifying efficient influence functions for terms that repeatedly appear. We provide a brief sketch for the proof of Lemma \ref{lem:eif_1} and \ref{lem:eif_2} below, which can be easily extendable to  the full proof. This also could be useful to develop other results for more general stochastic interventions.

	\subsubsection*{Proof of Lemma \ref{lem:eif_1} and Lemma \ref{lem:eif_2}}
	Let $\mathcal{IF}: \psi \rightarrow \phi$ denote a map to the efficient influence function $\phi$ for a functional $\psi$. First, without proof, we specify efficient influence functions for mean and conditional mean which serve two basic ingredients for our proof. For mean value of a random variable $Z$, we have
	$$
	\mathcal{IF}\big(\E[Z]\big) = Z - \E[Z],
	$$
	and for conditional mean with a pair of random variables $(X,Y) \sim \Pb$ when $X$ is discrete, we have
	$$
	\mathcal{IF}\big(\E[Y\vert X=x]\big) = \frac{\mathbbm{1}(X=x)}{\Pb(X=x)}\Big\{ Y - \E[Y \mid X=x] \Big\}.
	$$
	These results can be obtained by applying (\ref{def:influence-curve}) or (\ref{def:influence-curve-Gateaux-deriv}).

	\begin{proof} 
		%First consider the case $\Pb \independent Q$
		It is sufficient to prove for $t=2$ since it is straightforward to extend the proof for arbitrary $t \leq T$ by induction. For $t=2$, it is enough to compute the following four terms. 
		\begin{itemize} 
			\item[A)]
			$
			\begin{aligned}[t]
			& \int_{ \mathcal{H}_2\times \mathcal{A}_2} \mathcal{IF}  \Big( \mu(h_2,a_2, R_3=1) \Big)  \prod_{s=1}^{2} dQ_s(a_s \mid h_s, R_s=1) d\Pb(y_{s-1},x_s|h_{s-1},a_{s-1}, R_s=1) \\
			&= \int_{ \mathcal{H}_2\times \mathcal{A}_2} \frac{\mathbbm{1}\{ (H_2,A_2,R_3)=(h_2,a_2,1) \} }{d\Pb(h_2,a_2,R_3=1)}\Big\{Y -  \mu(h_2,a_2, R_3=1)  \Big\}  \\
			& \qquad \qquad \ \times \prod_{s=1}^{2} dQ_s(a_s \mid h_s, R_s=1) d\Pb(y_{s-1},x_s|h_{s-1},a_{s-1}, R_s=1)\\
			% &= \int_{ \mathcal{H}_2\times \mathcal{A}_2} \frac{\mathbbm{1}\{ (H_2,A_2,R_3)=(h_2,a_2,1) \} d\Pb(y_1,x_2|h_{1},a_{1}, R_2=1) d\Pb(h_1)  }{d\Pb(R_3=1|h_2,a_2,R_2=1)d\Pb(a_2|h_2,R_2=1)d\Pb(y_1,x_2|h_1,a_1,R_2=1)d\Pb(R_2=1|h_1,a_1)d\Pb(a_1|h_1)d\Pb(h_1)}\\
			% &  \quad \times \Big\{Y -  \mu(h_2,a_2, R_3=1)  \Big\}   \prod_{t=1}^{2} dQ_t(a_t \mid h_t, R_t=1) \\
			&= \int_{ \mathcal{H}_2\times \mathcal{A}_2}  \mathbbm{1}\big\{ (H_2,A_2,R_3)=(h_2,a_2,1) \big\} \big\{Y -  \mu(h_2,a_2, R_3=1)  \big\} \\
			&  \qquad \qquad \ \times \prod_{s=1}^{2} \frac{dQ_s(a_s \mid h_s, R_s=1)}{d\Pb(a_s \mid h_s, R_s=1)} \frac{1}{d\Pb(R_{s+1}=1 \mid h_s, a_s,R_s=1)} \\
			&= \{ Y - \mu(H_2,A_2, R_3=1) \} \mathbbm{1}(R_3=1) \prod_{s=1}^{2} \frac{dQ_t(A_s \mid H_s, R_s=1)}{d\Pb(A_s \mid H_s, R_s=1)} \frac{1}{d\Pb(R_{s+1}=1 \mid H_s, A_s,R_s=1)} 
			\end{aligned}
			$
			\item[B)] 
			$
			\begin{aligned}[t]
			& \int_{ \mathcal{H}_2\times \mathcal{A}_2} \mu(h_2,a_2, R_3=1)    \mathcal{IF}  \Big( d\Pb(y_1,x_2 | h_1, a_1, R_2=1) \Big) d\Pb(h_1)  \prod_{s=1}^{2} dQ_s(a_s \mid h_s, R_s=1) \\
			& = \int_{ \mathcal{H}_2\times \mathcal{A}_2} \mu(h_2,a_2, R_3=1)   \frac{\mathbbm{1}\big\{ (H_1,A_1,R_2)=(h_1,a_1,1) \big\}}{d\Pb(h_1, a_1, R_2=1)}  \\
			& \qquad \qquad \ \times \Big\{\mathbbm{1}(Y_1=y_1,X_2=x_2) - d\Pb(y_1,x_2 | h_1, a_1, R_2=1) \Big\} d\Pb(h_1)  \prod_{s=1}^{2} dQ_s(a_s \mid h_s, R_s=1) \\
			& = \int_{ \mathcal{H}_2\times \mathcal{A}_2} \mu(h_2,a_2, R_3=1)   \\
			& \qquad \qquad \ \times \frac{\mathbbm{1}\big\{ (H_1,A_1,R_2)=(h_1,a_1,1) \big\} \big\{\mathbbm{1}(Y_1=y_1,X_2=x_2) - d\Pb(y_1,x_2 | h_1, a_1, R_2=1) \big\} }{d\Pb(R_2=1|h_1,a_1)d\Pb(a_1|h_1)d\Pb(h_1)}  \\		
			& \qquad \qquad \ \times d\Pb(h_1)  \prod_{s=1}^{2} dQ_s(a_s \mid h_s, R_s=1) \\
			&=  \int_{ \mathcal{H}_2\times \mathcal{A}_2} \mu(h_2,a_2, R_3=1)  dQ_2(a_2 \mid h_2, R_2=1) \mathbbm{1}\big\{ (H_1,A_1,R_2)=(h_1,a_1,1) \big \} \\
			& \qquad \qquad \ \times \big\{\mathbbm{1}(Y_1=y_1,X_2=x_2) - d\Pb(y_1,x_2 | h_1, a_1, R_2=1) \big\}  \frac{dQ_1(A_1 \mid H_1)}{d\Pb(A_1 \mid H_1)} \frac{1}{d\Pb(R_{2}=1 \mid H_1, A_1)} \\
			&= \Bigg\{  \int_{ \mathcal{H}_2\times \mathcal{A}_2 \setminus \mathcal{H}_2 } \mu(H_2,a_2, R_3=1)  dQ_2(a_2 \mid H_2, R_2=1) \\
			& \qquad	-  \int_{ \mathcal{H}_2\times \mathcal{A}_2 \setminus \mathcal{H}_1 \times \mathcal{A}_1 } \mu(h_2,a_2, R_3=1)  dQ_2(a_2 \mid h_2, R_2=1) d\Pb(y_1,x_2 | h_1, a_1, R_2=1) \Bigg\}   \\
			& \qquad \times  \mathbbm{1}(R_2=1)  \frac{dQ_1(A_1 \mid H_1)}{d\Pb(A_1 \mid H_1)} \frac{1}{d\Pb(R_{2}=1 \mid H_1, A_1)} \\
			&= \Bigg\{  \int_{ \mathcal{A}_2 } \mu(H_2,a_2, R_3=1)  dQ_2(a_2 \mid H_2, R_2=1) 	-  m_1(h_1,a_1,R_2=1) \Bigg\}   \\
			& \qquad \times  \mathbbm{1}(R_2=1)  \frac{dQ_1(A_1 \mid H_1)}{d\Pb(A_1 \mid H_1)} \frac{1}{d\Pb(R_{2}=1 \mid H_1, A_1)} \\
			\end{aligned}
			$
			\item[C)]
			$
			\begin{aligned}[t]
			& \int_{ \mathcal{H}_2\times \mathcal{A}_2} \mu(h_2,a_2, R_3=1)    d\Pb(y_1,x_2 | h_1, a_1, R_2=1)  \mathcal{IF}  \Big(d\Pb(h_1) \Big)  \prod_{s=1}^{2} dQ_s(a_s \mid h_s, R_s=1) \\
			&= \int_{ \mathcal{H}_2\times \mathcal{A}_2} \mu(h_2,a_2, R_3=1)    d\Pb(y_1,x_2 | h_1, a_1, R_2=1)  \big\{\mathbbm{1}(X_1=x_1) - d\Pb(x_1) \big\}   \prod_{s=1}^{2} dQ_s(a_s \mid h_s, R_s=1) \\
			&=  \int_{ \mathcal{H}_2\times \mathcal{A}_2 \setminus \mathcal{H}_1 } \mu(h_2,a_2, R_3=1)  dQ_2(a_2 \mid h_2, R_2=1) d\Pb(y_1,x_2 | h_1, a_1, R_2=1) dQ_1(a_1|h_1) - m_0 \\
			& =\int_{ \mathcal{A}_1} m_1(h_1,a_1,R_2=1) dQ_1(a_1|h_1) - m_0 \\ 
			\end{aligned}
			$
			\item[D)] Let $\frac{\mathbbm{1}(H_t=h_t, R_t=1)}{d\Pb(h_t,R_t=1)} \phi_t(H_t, A_t, R_t=1;a_t)$ denote the efficient influence function for $dQ_t(a_t|h_t,R_t=1)$ as given in Lemma \ref{lem:eif_dQ}. Then we have \\
			$
			\begin{aligned}
			& \int_{ \mathcal{H}_2\times \mathcal{A}_2} \mu(h_2,a_2, R_3=1)  d\Pb(h_1)  d\Pb(y_1,x_2 | h_1, a_1, R_2=1)  \mathcal{IF}  \Big( dQ_1(a_1|h_1) dQ_2(a_2 \mid h_2, R_2=1) \Big)   \\
			&= \int_{ \mathcal{H}_2\times \mathcal{A}_2} \mu(h_2,a_2, R_3=1)  d\Pb(h_1)  d\Pb(y_1,x_2 | h_1, a_1, R_2=1)    \frac{\mathbbm{1}\big\{ (H_2,R_2)=(h_2,1) \big\}}{d\Pb(h_2, R_2=1)} \phi_2  dQ_1(a_1|h_1)  \\
			& \quad + \int_{ \mathcal{H}_2\times \mathcal{A}_2} \mu(h_2,a_2, R_3=1)  d\Pb(h_1)  d\Pb(y_1,x_2 | h_1, a_1, R_2=1)    \frac{\mathbbm{1}\big\{ (H_1=h_1) \big\}}{d\Pb(h_1)} \phi_1 dQ_2(a_2 \mid h_2, R_2=1)\\
			&= \int_{ \mathcal{H}_2\times \mathcal{A}_2} \mu(h_2,a_2, R_3=1)   \frac{\mathbbm{1}\big\{ (H_2,R_2)=(h_2,1) \big\}d\Pb(h_1)  d\Pb(y_1,x_2 | h_1, a_1, R_2=1) dQ_1(a_1|h_1) }{d\Pb(y_1,x_2|h_1,a_1,R_2=1)d\Pb(R_2=1|h_1,a_1)d\Pb(a_1|h_1)d\Pb(h_1)} \phi_2    \\
			& \quad + \int_{ \mathcal{H}_2\times \mathcal{A}_2} \mu(h_2,a_2, R_3=1)    d\Pb(y_1,x_2 | h_1, a_1, R_2=1)    \mathbbm{1}\big\{ (H_1=h_1) \big\} \phi_1 dQ_2(a_2 \mid h_2, R_2=1)\\
			&= \int_{ \mathcal{H}_2\times \mathcal{A}_2 \setminus \mathcal{H}_2} \mu(H_2,a_2, R_3=1)   \mathbbm{1}(R_2=1) \phi_2   \frac{dQ_1(A_1 \mid H_1)}{d\Pb(A_1 \mid H_1)} \frac{1}{d\Pb(R_{2}=1 \mid H_1, A_1)}  \\
			& \quad + \int_{ \mathcal{H}_2\times \mathcal{A}_2 \setminus \mathcal{H}_1} \mu(h_2,a_2, R_3=1)  dQ_2(a_2 \mid h_2, R_2=1)  d\Pb(y_1,x_2 | h_1, a_1, R_2=1)  \phi_1 \\
			&= \left\{  \frac{dQ_1(A_1 \mid H_1)}{d\Pb(A_1 \mid H_1)} \frac{1}{d\Pb(R_{2}=1 \mid H_1, A_1)} \right\} \int_{ \mathcal{A}_2 } \mu(H_2,a_2, R_3=1) \phi_2  d\nu(a_2)  \mathbbm{1}(R_2=1)     \\
			& \quad + \int_{ \mathcal{A}_1 } m_1(h_1,a_1,R_2=1) \phi_1 d\nu(a_1)
			\end{aligned}			
			$
		\end{itemize}	
		Note that we have set $dQ_0(a_0|h_0)/d\Pb(a_0|h_0) = 1$, and that we have $d\Pb(R_1=1)=1$ and $ \mathbbm{1}(R_1=1)=1 $ by construction. Hence, putting part A), B), and C) together proves Lemma \ref{lem:eif_1} and part D) proves Lemma \ref{lem:eif_2}. 
		
		Note that to formally verify that the given expressions in Lemmas \ref{lem:eif_1} and \ref{lem:eif_2} are the efficient influence functions, we would need to check if the pathwise differentiability formula \eqref{def:influence-curve} holds. This essentially follows if the remainder terms are second-order, which will be verified in Lemmas \eqref{lem:remainder_1} and \eqref{lem:remainder_2} later.
	\end{proof}

	Finally, we are ready to give a proof of Theorem \ref{thm:eif}. In fact, it is nothing but rearranging terms in the given efficient influence function.
	
	\vspace*{0.15in}
	\subsubsection{Proof of Theorem \ref{thm:eif}}
	%\textit{Proof of Theorem \ref{thm:eif}}
	\begin{proof}
		First, we define following shorthand notations for the proof: for $\forall s \leq t$
		$$
		dQ_{s}(A_{s}) \equiv dQ_{s}(A_{s}|H_{s},R_{s}=1),  \qquad d\Pb_{s}(A_{s}) \equiv d\Pb(A_{s} \mid H_{s}, R_{s}=1),
		$$
		$$
		d\omega_{s} \equiv \omega_{s}( H_{s}, A_{s}) \equiv d\Pb(R_{{s}+1}=1 \mid H_{s}, A_{s},R_{s}=1),    
		$$	
		$$
		m_{s}(H_{s}, a_{s}) \equiv m_{s}(H_{s}, a_{s}, R_{{s}+1}=1)
		$$
		With these notations we can rewrite the result of Lemma \ref{lem:eif_1} as below.
		\begin{align*}
			& \sum_{s=0}^{t} \left\{  \int_{ \mathcal{A}_{{s}+1}} m_{{s}+1}(H_{{s}+1}, a_{{s}+1})dQ_{{s}+1}(a_{{s}+1}) - m_{s}(H_{s}, A_{s}) \right\} \mathbbm{1}\left(R_{{s}+1}=1\right)  \left(  \prod_{k=0}^{s} \frac{dQ_k(A_k )}{d\Pb_k(A_k)}   \frac{1}{d\omega_k} \right) \\
			& = \sum_{{s}=1}^{t} \left\{  \int_{ \mathcal{A}_{s}} m_{s}(H_{s}, a_{s})dQ_{s}(a_{s}) - m_{s}(H_{s}, A_{s}) \left[ \mathbbm{1}\left(R_{{s}+1}=1\right) \frac{dQ_{s}(A_{s})}{d\Pb_{s}(A_{s})}  \frac{1}{d\omega_{s}} \right] \right\}  \\
			& \qquad \times \mathbbm{1}\left(R_{{s}}=1\right)  \left(  \prod_{k=0}^{s-1} \frac{dQ_k(A_k )}{d\Pb_k(A_k)}   \frac{1}{d\omega_k} \right) + \mathbbm{1}\left(R_{{t}+1}=1\right)  \left(  \prod_{s=1}^{t} \frac{dQ_s(A_s )}{d\Pb_s(A_s)}   \frac{1}{d\omega_s} \right)Y_t - m_0.
		\end{align*}
		Now, by the result of Lemma \ref{lem:eif_1} and \ref{lem:eif_2}, we can represent the efficient influence function for $\psi_t(\delta)$ as
		\begin{equation*} 
			\begin{aligned}	
				& \sum_{{s}=1}^{t} \Bigg\{  \int_{ \mathcal{A}_{s}} m_{s}(H_{s}, a_{s})dQ_{s}(a_{s}) - m_{s}(H_{s}, A_{s}) \left[ \mathbbm{1}\left(R_{{s}+1}=1\right) \frac{dQ_{s}(A_{s})}{d\Pb_{s}(A_{s})}  \frac{1}{d\omega_{s}} \right] \\
				& \ \ \qquad +  \int_{ \mathcal{A}_{s}}  m_{s}(H_{s}, a_{s}) \phi_{s}(H_{s}, A_{s}, R_{s}=1;a_{s}) d\nu(a_{s})  \Bigg\}  
				\mathbbm{1}\left(R_{s}=1\right)  \left(  \prod_{k=0}^{s-1} \frac{dQ_k(A_k )}{d\Pb_k(A_k)}   \frac{1}{d\omega_k} \right) \\
				& \qquad + \mathbbm{1}\left(R_{{t}+1}=1\right)  \left(  \prod_{s=1}^{t} \frac{dQ_s(A_s )}{d\Pb_s(A_s)}   \frac{1}{d\omega_s} \right)Y_t - m_0.
			\end{aligned}
		\end{equation*} 
		On the other hand, we have
		\begin{align*}
			\int_{ \mathcal{A}_{s}} m_{s}(H_{s}, a_{s})dQ_{s}(a_{s}) &= \frac{m_{s}(H_{s},1)\delta\pi_{s}(H_{s})+m_{s}(H_{s},0)\{ 1-\pi_{s}(H_{s}) \}}{\delta\pi_{s}(H_{s}) + 1-\pi_{s}(H_{s})},
		\end{align*}
		\begin{align*}
			\frac{dQ_s(A_s )}{d\Pb_s(A_s )} &= \frac{\delta A_s + 1-A_s}{\delta\pi_s(H_s) + 1-\pi_s(H_s)},
		\end{align*}
		\begin{align*}
			m_{s}(H_{s}, A_{s})\frac{dQ_{s}(A_{s})}{d\Pb_{s}(A_{s})} &= \frac{m_{s}(H_{s}, 1, R_{{s}+1}=1)\delta A_{s} + m_{s}(H_{s}, 0, R_{{s}+1}=1)(1-A_{s})}{\delta\pi_{s}(H_{s}) + 1-\pi_{s}(H_{s})},
		\end{align*}	
		\begin{align*}
			\int_{ \mathcal{A}_{s}}   m_{s}(H_{s}, a_{s})&  \phi_{s}(H_{s},A_{s},R_{s}=1;a_{s}) d\nu(a_{s})
			= \frac{\{ m_{s}(H_{s}, 1) - m_{s}(H_{s}, 0)\}\delta(A_{s} - \pi_{s}(H_{s})) }{\left(\delta\pi_{s}(H_{s}) + 1 - \pi_{s}(H_{s}) \right)^2 },
		\end{align*}
		
		which leads to
		\begin{align*}
			& \int_{ \mathcal{A}_{s}} m_{s}(H_{s}, a_{s})dQ_{s}(a_{s}) - m_{s}(H_{s}, A_{s}) \left[ \mathbbm{1}\left(R_{{s}+1}=1\right) \frac{dQ_{s}(A_{s})}{d\Pb_{s}(A_{s})}  \frac{1}{d\omega_{s}} \right] \\
			& \quad +  \int_{ \mathcal{A}_{s}}  \mathbbm{1}\left(R_{{s}+1}=1\right) m_{s}(H_{s}, a_{s}) \phi_{s}(H_{s}, A_{s}, R_{s}^\prime=1;a_{s}) d\nu(a_{s})  \\
			&= \frac{1}{\left\{ \delta\pi_{s}(H_{s}) + 1 - \pi_{s}(H_{s}) \right\} \omega_s( H_s, A_s)} \\
			& \quad \times \Bigg[ \left\{m_{s}(H_{s}, 1) -m_{s}(H_{s}, 0)\right\}\delta(A- \pi_{s}(H_{s})) \frac{\omega_s( H_s, A_s)}{\delta\pi_{s}(H_{s}) + 1 - \pi_{s}(H_{s}) } \\
			& \quad \qquad + \delta m_{s}(H_{s}, 1)\left\{\pi_{s}(H_{s})\omega_s( H_s, A_s)-A_sR_{s+1}\right\} \\
			& \quad \qquad + m_{s}(H_{s}, 0)\left\{(1-\pi_{s}(H_{s}))\omega_s( H_s, A_s) - (1-A_s)R_{s+1} \right\} \Bigg]\\
		\end{align*}
		After some rearrangement, we finally obtain an equivalent form of the efficient influence function for $\psi_t(\delta)$ by
		\begin{equation*} 
			\begin{aligned}	
				& \sum_{s=1}^{t} \left\{ \frac{1}{\delta A_s + 1-A_s}  \right\} \Bigg[ \frac{\left\{m_{s}(H_{s}, 1) -m_{s}(H_{s}, 0)\right\}\delta(A- \pi_{s}(H_{s}))\omega_s( H_s, A_s)}{\delta\pi_{s}(H_{s}) + 1 - \pi_{s}(H_{s})} \\ 
				& \qquad + \begin{pmatrix}
                \delta m_{s}(H_{s}, 1)\left\{\pi_{s}(H_{s})\omega_s( H_s, A_s)-A_sR_{s+1}\right\}  \\
                + m_{s}(H_{s}, 0)\left\{(1-\pi_{s}(H_{s}))\omega_s( H_s, A_s) - (1-A_s)R_{s+1} \right\}
                \end{pmatrix} \Bigg]  \prod_{k=1}^{s} \left\{ \frac{\delta A_k + 1-A_k}{\delta\pi_k(H_k) + 1-\pi_k(H_k)} \cdot\frac{R_{k}}{\omega_{k}( H_{k}, A_{k})} \right\} \\
				& + \prod_{s=1}^{t} \left\{ \frac{\delta A_s + 1-A_s}{\delta\pi_s(H_s) + 1-\pi_s(H_s)}\cdot\frac{R_{s+1}}{\omega_s( H_s, A_s)}Y_t \right\} - \psi_t(\delta).
			\end{aligned}
		\end{equation*} 
		Note that we use convention that $dQ_0=d\Pb_0=d\omega_0=1$ and $R_1=1$.
	\end{proof}

	\subsection{Proof of Theorem \ref{thm:inf-time-horizon}}
	\label{proof:thm-inf-time}	
	
	Let $\widehat{\psi}_{c.ipw}(\overline{a^\prime}_{T})$ denote the standard IPW estimator of a  classical deterministic intervention effect $\E \left[ Y^{\overline{a^\prime}_{T}} \right]$ under $i.i.d$ assumption, i.e.
	$$
	\widehat{\psi}_{c.ipw}(\overline{a^\prime}_{T}) = \prod_{t=1}^{T} \left( \frac{\mathbbm{1}\left({A}_{t}={a^\prime}_{t}\right)}{ \pi_t(a^\prime_t|H_t)} \right)Y.
	$$
	Hence $\widehat{\psi}_{c.ipw}(\overline{\bm{1}})$ is equivalent to $\widehat{\psi}_{at}$ in the main text. Now by definition we have
	\begin{align*}
		Var\left(\widehat{\psi}_{c.ipw}(\overline{a^\prime}_{T})\right) &= \E\left\{ \left( \prod_{t=1}^{T} \frac{\mathbbm{1}\left({A}_{t}={a^\prime}_{t}\right)}{ \pi_t(a^\prime_t|H_t)^2} \right)Y^2 \right\} -  \left\{ \E \left[ \prod_{t=1}^{T} \frac{\mathbbm{1}\left({A}_{t}={a^\prime}_{t}\right)}{ \pi_t(a^\prime_t|H_t)} Y  \right]   \right\}^2 \\
		& \equiv \mathbb{V}_{c.ipw.1}(\overline{a^\prime}_{T}) - \mathbb{V}_{c.ipw.2}(\overline{a^\prime}_{T})
	\end{align*}
	where $\mathbb{V}_{c.ipw.1}(\overline{a^\prime}_{T})$ and $\mathbb{V}_{c.ipw.2}(\overline{a^\prime}_{T})$ are simply the first and second term in the first line of the expansion respectively.
	
	By the same procedure to derive g-formula \citep{robins1986} it is easy to see
	\begin{align*}
		\mathbb{V}_{c.ipw.1}(\overline{a^\prime}_{T}) &= \E\left\{ \prod_{t=1}^{T} \left( \frac{\mathbbm{1}\left({A}_{t}={a^\prime}_{t}\right)}{ \pi_t(a^\prime_t|H_t)^2} \right)Y^2 \right\} \\
		&= \int_{\mathcal{X}} \E\left[Y^2 \mid \overline{X}_{t}, \overline{A}_{t}=\overline{a^\prime}_{t} \right] \prod_{t=1}^{T} \frac{d\Pb(X_t \mid \overline{X}_{t-1}, \overline{A}_{t-1}=\overline{a^\prime}_{t-1})}{\pi_t(a^\prime_t|H_t)} 
	\end{align*}
	where $\mathcal{X}=\mathcal{X}_1 \times \cdots \times \mathcal{X}_{T}$. Above result simply follows by iterative expectation conditioning on $\overline{X}_{t}$ and then another iterative expectation conditioning on $H_t$ followed by the fact that $\E \left[ \frac{\mathbbm{1}\left({A}_{t}={a^\prime}_{t}\right)}{\pi_t(a^\prime_t|H_t)} \big\vert H_t \right]=1$ for all $t$. We repeat this process $T$ times, starting from $t=T$ all the way through $t=1$. 
	
	Likewise, for $\widehat{\psi}_{inc}$ we have 
	\begin{align*}
		Var(\widehat{\psi}_{inc}) &= \E\left\{ \prod_{t=1}^{T}  \left( \frac{\delta A_t + 1-A_t }{\delta{\pi}_t(H_t) + 1-{\pi}_t(H_t)} \right)^2Y^2  \right\} - \left\{  \E  \left[ \prod_{t=1}^{T}  \left( \frac{\delta A_t + 1-A_t }{\delta{\pi}_t(H_t) + 1-{\pi}_t(H_t)} \right)Y \right]  \right\}^2 \\
		& \equiv \mathbb{V}_{inc.1} - \mathbb{V}_{inc.2}
	\end{align*}
	For the first term $\mathbb{V}_{inc.1}$, observe that
	\begin{align*}
		&	\E\left\{ \prod_{t=1}^{T}  \left( \frac{\delta A_t + 1-A_t }{\delta{\pi}_t(H_t) + 1-{\pi}_t(H_t)} \right)^2Y^2  \right\} \\
		&= \E\left\{ \prod_{t=1}^{T-1}  \left( \frac{\delta A_t + 1-A_t }{\delta{\pi}_t(H_t) + 1-{\pi}_t(H_t)} \right)^2  \E\left[ \left( \frac{\delta A_{T} + 1-A_{T} }{\delta{\pi}_{T}(H_{T}) + 1-{\pi}_{T}(H_{T})} \right)^2Y^2 \Bigg\vert H_{T} \right] \right\} \\
		&= \E\left\{ \prod_{t=1}^{T-1}  \left( \frac{\delta A_t + 1-A_t }{\delta{\pi}_t(H_t) + 1-{\pi}_t(H_t)} \right)^2  \E\left[ \frac{\delta^2Y^2 }{(\delta{\pi}_{T}(H_{T}) + 1-{\pi}_{T}(H_{T}))^2} \Bigg\vert H_{T}, A_{T}=1 \right]{\pi}_{T}(H_{T}) \right\} \\
		& \quad \ + \E\left\{ \prod_{t=1}^{T-1}  \left( \frac{\delta A_t + 1-A_t }{\delta{\pi}_t(H_t) + 1-{\pi}_t(H_t)} \right)^2  \E\left[ \frac{Y^2}{(\delta{\pi}_{T} + 1-{\pi}_{T})^2}  \Bigg\vert H_{T}, A_{T}=0 \right]\left(1-{\pi}_{T}(H_{T})\right) \right\}
	\end{align*}
	where we apply the law of total expectation in the first equality and the law of total probability in the second.
	
	After repeating the same process for $T-1$ times, for $t=T-1, ... ,1$, we obtain $2^{T}$ terms in the end where each of which corresponds to the distinct treatment sequences $\overline{A}_{T}=\overline{a}_{T}$. Hence, we eventually have
	\begin{align*}
		\mathbb{V}_{inc.1}  
		&= \sum_{\overline{a}_{T} \in \overline{\mathcal{A}}_{T}} \int_{\mathcal{X}} \E\left[Y^2 \mid H_{T}, A_{T}=a_{T} \right] \prod_{t=1}^{T} \frac{\mathbbm{1}\left({a}_{t}=1\right)\delta^2{\pi}_t(H_t) + \mathbbm{1}\left({a}_{t}=0\right)\{1-{\pi}_t(H_t)\}}{(\delta{\pi}_{t}(H_t) + 1-{\pi}_{t}(H_t))^2} \\
		& \qquad \qquad \quad \times d\Pb(X_t \mid \overline{X}_{t-1}, \overline{A}_{t-1}=\overline{a}_{t-1}). 
		%    &= \left[ \frac{\delta^2p}{(\delta p + 1-p)^2} + \frac{1-p}{(\delta p + 1-p)^2} \right]^{t} 
	\end{align*}
	
	Recall that we assume $\pi_t(H_t)=p$ for all $t$ as stated in Theorem \ref{thm:inf-time-horizon}. Hence we can write $\pi_t(a_t \mid H_t)$ as $\pi_t(a_t)=  \mathbbm{1}\left({a}_{t}=1\right)p + \mathbbm{1}\left({a}_{t}=0\right)\{1-p\}$.

	We want to find an upper bound of the variance ratio $\text{VR}(\widehat{\psi}_{c.ipw}(\overline{a}_{T}), \widehat{\psi}_{inc}) \coloneqq \frac{ \mathbb{V}_{inc.1} - \mathbb{V}_{inc.2}}{\mathbb{V}_{c.ipw.1}(\overline{a}_{T}) - \mathbb{V}_{c.ipw.2}(\overline{a}_{T})} $ for always-treated unit (i.e., $\overline{a}_{T}=\overline{\bm{1}}$). This can be done by computing the quantity
	$$
	\frac{ \mathbb{V}_{inc.1}}{\mathbb{V}_{c.ipw.1}(\overline{\bm{1}}) - \mathbb{V}_{c.ipw.2}(\overline{\bm{1}})} 
	$$
	since $0 < \mathbb{V}_{inc.2} < \mathbb{V}_{inc.1}$ by Jensen's inequality. 
	
	Note that we have
	\begin{align*}
		\mathbb{V}_{c.ipw.1}(\overline{\bm{1}}) - \mathbb{V}_{c.ipw.2}(\overline{\bm{1}})&= \int_{\mathcal{X}} \E\left[Y^2 \mid \overline{X}_{T}, \overline{A}_{T}=\overline{a^\prime}_{T} \right] \prod_{t=1}^{T} \frac{d\Pb(X_t \mid \overline{X}_{t-1}, \overline{A}_{t-1}=\overline{a^\prime}_{t-1})}{p} - \left(\E[Y^{\overline{\bm{1}}}]\right)^2 \\
		& =  \left( \frac{1}{p} \right)^{T} \E\left[\left(Y^{\overline{\bm{1}}} \right)^2 \right]    - \left(\E\left[Y^{\overline{\bm{1}}}\right]\right)^2
		%&= \left( \frac{1}{p} \right)^{t} \left\{ \E\left[\left(Y^{\overline{\bm{1}}} \right)^2 \right]    - p^{t} \left(\E\left[Y^{\overline{\bm{1}}}\right]\right)^2 \right\} \\
		%&  \approxeq \left( \frac{1}{p} \right)^{t} \E\left[\left(Y^{\overline{\bm{1}}} \right)^2 \right]  = \mathbb{V}_{c.ipw.1}(\overline{\bm{1}}) %\quad \text{as } \ t \rightarrow \infty
	\end{align*}
	, and under the given boundedness assumption we see the ratio of the second term to the first term becomes quickly (at least exponentially) negligible as $t$ increases. Hence we can write
	\begin{align*}
		\frac{1}{\mathbb{V}_{c.ipw.1}(\overline{\bm{1}}) - \mathbb{V}_{c.ipw.2}(\overline{\bm{1}})} \leq \frac{1}{\mathbb{V}_{c.ipw.1}(\overline{\bm{1}})}\left( 1+ \frac{c \left(\E\left[Y^{\overline{\bm{1}}}\right]\right)^2}{\left( 1/p \right)^{T} \E\left[\left(Y^{\overline{\bm{1}}} \right)^2 \right]} \right)
	\end{align*}
	% $\frac{ \left(\E\left[Y^{\overline{\bm{1}}}\right]\right)^2}{\left( 1/p \right)^{T} \E\left[\left(Y^{\overline{\bm{1}}} \right)^2 \right]} \leq \frac{c-1}{c}$
	for some constant $c$ such that $\frac{1}{1-\mathbb{V}_{c.ipw.2}(\overline{\bm{1}})/\mathbb{V}_{c.ipw.1}(\overline{\bm{1}})}=\frac{1}{1-p^T{\left(\E\left[Y^{\overline{\bm{1}}}\right]\right)^2}\big/{\E\left[\left(Y^{\overline{\bm{1}}} \right)^2 \right]}} \leq {c}$. Note that in our setting in which we have an infinitely large value of $T$, $c$ can be almost any constant greater than one.
	
	Putting above ingredients together, for sufficiently large $t$ it follows that
	\begin{align*}
		\text{VR}(\widehat{\psi}_{c.ipw}(\overline{\bm{1}}), \widehat{\psi}_{inc})  & \leq \frac{\mathbb{V}_{inc.1}}{\mathbb{V}_{c.ipw.1}(\overline{\bm{1}})} \left( 1+ \frac{c \left(\E\left[Y^{\overline{\bm{1}}}\right]\right)^2}{\left( 1/p \right)^{T} \E\left[\left(Y^{\overline{\bm{1}}} \right)^2 \right]} \right),
	\end{align*}
	where we have
	\begin{align*}
		\frac{\mathbb{V}_{inc.1}}{\mathbb{V}_{c.ipw.1}(\overline{\bm{1}})} &= \frac{ w(\overline{\bm{1}}) \mathbb{V}_{c.ipw.1}(\overline{\bm{1}} ) + \sum_{\overline{a}_{T} \neq \overline{\bm{1}}} w(\overline{a}_{T}; \delta, p) \mathbb{V}_{c.ipw.1}(\overline{a}_{T} )}{\mathbb{V}_{c.ipw.1}(\overline{\bm{1}})} \\
		&= w(\overline{\bm{1}})  + \sum_{\overline{a}_{T} \neq \overline{\bm{1}}}  w(\overline{a}_{T}; \delta, p) \prod_{t=1}^{T} \left(\frac{p}{\pi_t(a_t)} \frac{\E\left[\left(Y^2 \right)^{\overline{a}_{T}} \right]}{\E\left[\left(Y^{\overline{\bm{1}}} \right)^2 \right]} \right) \\
		&\leq \frac{b_u^2}{\E\left[\left(Y^{\overline{\bm{1}}} \right)^2 \right]} \left\{ w(\overline{\bm{1}})  +  \sum_{\overline{a}_{T} \neq \overline{\bm{1}}} \left[ \prod_{t=1}^{T} \frac{\mathbbm{1}\left({a}_{t}=1\right)\delta^2 p^2 + \mathbbm{1}\left({a}_{t}=0\right)(1- p)p}{(\delta p + 1-p)^2} \right]  \right\} \\
		&= \frac{b_u^2}{\E\left[\left(Y^{\overline{\bm{1}}} \right)^2 \right]}\left\{ \frac{\delta^2p^2 + p(1-p)}{(\delta p + 1 - p)^2} \right\}^{T}
	\end{align*}
	where the first equality follows by the fact that $ \mathbb{V}_{inc.1} = \sum_{\overline{a}_{T} \in \overline{\mathcal{A}}_{T}} w(\overline{a}_{T}; \delta, p) \mathbb{V}_{c.ipw.1}(\overline{a}_{T} )$ derived in the proof of the first part, the second equality by the fact that $\mathbb{V}_{c.ipw.1}(\overline{a}_{T} ) = \prod_{t=1}^{T} \frac{1}{\pi_t(a_t)}  \E\left[\left(Y^2 \right)^{\overline{a}_{T}} \right] $, the first inequality by definition of $w(\overline{a}_{T}; \delta, p)$ and the given boundedness assumption, and the last equality by binomial theorem. Therefore we obtain the upper bound as
	$$
	\text{VR}(\widehat{\psi}_{c.ipw}(\overline{\bm{1}}), \widehat{\psi}_{inc})  \leq \frac{b_u^2}{\E\left[\left(Y^{\overline{\bm{1}}} \right)^2 \right]}\left\{ \frac{\delta^2p^2 + p(1-p)}{(\delta p + 1 - p)^2} \right\}^{T} \left( 1+ \frac{c \left(\E\left[Y^{\overline{\bm{1}}}\right]\right)^2}{\left( 1/p \right)^{T} \E\left[\left(Y^{\overline{\bm{1}}} \right)^2 \right]} \right).
	$$
	
	Next for the lower bound, first we note that
	\begin{align*}
		\mathbb{V}_{inc.2} &= \left\{  \E  \left[ \prod_{t=1}^{T}  \left( \frac{\delta A_t + 1-A_t }{\delta p + 1-p} \right)Y \right]  \right\}^2  \\
		&= \Bigg\{ \sum_{\overline{a}_{T} \in \overline{\mathcal{A}}_{T}} \int_{\mathcal{X}} \E\left[Y \mid H_{T}, A_{T}=a_{T} \right] \left(\prod_{t=1}^{T} \frac{\mathbbm{1}\left({a}_{t}=1\right)\delta p + \mathbbm{1}\left({a}_{t}=0\right)(1-p)}{\delta p + 1-p}\right)  \\
		& \qquad \qquad \qquad \times d\Pb(X_t \mid \overline{X}_{t-1}, \overline{A}_{t-1}=\overline{a}_{t-1}) \Bigg\}^2 \\
		& \leq b_u^2 \left[ \sum_{\overline{a}_{T} \in \overline{\mathcal{A}}_{T}}\prod_{t=1}^{T} \left( \frac{\mathbbm{1}\left({a}_{t}=1\right)\delta p + \mathbbm{1}\left({a}_{t}=0\right)(1-p)}{\delta p + 1-p}\right)  \right]^2 \\
		&= b_u^2 \left( \frac{\delta p + 1-p}{\delta p + 1-p} \right)^{2T} = b_u^2
	\end{align*}
	where the first equality follows by definition, the second equality by exactly same process used to find the expression for $\mathbb{V}_{inc.1}$, the first inequality by the boundedness assumption, and the third equality by binomial theorem.
	
	However, we already know that
	\begin{align*}
		\mathbb{V}_{c.ipw.1}(\overline{\bm{1}}) - \mathbb{V}_{c.ipw.2}(\overline{\bm{1}})\leq  \mathbb{V}_{c.ipw.1}(\overline{\bm{1}}) = \left( \frac{1}{p} \right)^{T} \E\left[\left(Y^{\overline{\bm{1}}} \right)^2 \right].
	\end{align*}
	
	Hence putting these together we conclude
	\begin{align*}
		\text{VR}(\widehat{\psi}_{c.ipw}(\overline{\bm{1}}), \widehat{\psi}_{inc}) &= \frac{ \mathbb{V}_{inc.1} - \mathbb{V}_{inc.2}}{\mathbb{V}_{c.ipw.1}(\overline{\bm{1}}) - \mathbb{V}_{c.ipw.2}(\overline{\bm{1}})} \\
		& \geq \frac{ \mathbb{V}_{inc.1} - b_u^2}{\mathbb{V}_{c.ipw.1}(\overline{\bm{1}})} \\
		&=  \frac{b_u^2}{\E\left[\left(Y^{\overline{\bm{1}}} \right)^2 \right]}\left\{ \frac{\delta^2p^2 + p(1-p)}{(\delta p + 1 - p)^2} \right\}^{T} - \frac{b_u^2}{\E\left[\left(Y^{\overline{\bm{1}}} \right)^2 \right]} p^{T}.
	\end{align*}
	
	At this point, we obtain upper and lower bound for $\text{VR}(\widehat{\psi}_{c.ipw}(\overline{\bm{1}}), \widehat{\psi}_{inc})$, which yields the result of part $ii)$ having $C_{T} = \frac{b_u^2}{\E\left[\left(Y^{\overline{\bm{1}}} \right)^2 \right]}$.
	
	Proof for the case of $\overline{a^\prime}_{T}=\overline{\bm{0}}$ (\textit{never-treated unit}) is based on the almost same steps as the case of $\overline{a^\prime}_{T}=\overline{\bm{1}}$ except for the rearragement of terms due to replacing $\left(\frac{1}{p}\right)^{T}$ by $\left(\frac{1}{1-p}\right)^{T}$ and so on. In fact, due to the generality of our proof structure, the exact same logic used for $\widehat{\psi}_{c.ipw}(\overline{\bm{1}})$ also applies to $\widehat{\psi}_{c.ipw}(\overline{\bm{0}})$ (and $\widehat{\psi}_{c.ipw}(\overline{a^\prime}_{T})$ for $\forall \overline{a^\prime}_{T} \in \overline{\mathcal{A}_{T}}$). We present the result without the proof as below.
	
	\begin{align*}
		C_{T}^\prime\left[ \left\{ \frac{\delta^2p(1-p) + (1-p)^2}{(\delta p + 1 - p)^2} \right\}^{T} - (1-p)^{T} \right] & \leq \text{VR}(\widehat{\psi}_{c.ipw}(\overline{\bm{0}}), \widehat{\psi}_{inc}) \\
		& \leq C_{T}^\prime \zeta^\prime(T;p) \left\{ \frac{\delta^2p(1-p) + (1-p)^2}{(\delta p + 1 - p)^2} \right\}^{T}
	\end{align*}
	where we define $C_{T}^\prime = \frac{b_u^2}{\E\left[\left(Y^2 \right)^{\overline{\bm{0}}} \right]}$ and $\zeta^\prime(T;p) = \left( 1+ \frac{c \left(\E\left[Y^{\overline{\bm{1}}}\right]\right)^2}{\left( 1/(1-p) \right)^{T} \E\left[\left(Y^{\overline{\bm{1}}} \right)^2 \right]} \right)$.
	\\

	\subsection{Proof of Corollary \ref{cor:inf-time-horizon}}
	\label{proof:cor-inf-time}
		
	Now we provide following Lemma \ref{lem:inf-time-decomp} which becomes a key to prove Corollary \ref{cor:inf-time-horizon}.
	
	\begin{lemma} \label{lem:inf-time-decomp}
		Assume that $\pi_t(H_t)=p$ for all $1\leq t\leq T$ for $0<p<1$. Then we have following variance decomposition :
		% 	\begin{equation*}
		% 	\begin{aligned}
		% 	& Var(\widehat{\psi}_{inc})  
		% 	=  \sum_{  \overline{a}_{T}, \overline{a^\prime}_{T} \in \overline{\mathcal{A}}_{T} } \sqrt{w(\overline{a}_{T}; \delta, p)w(\overline{a^\prime}_{T}; \delta, p)}Cov(\widehat{\psi}_{c.ipw}(\overline{a}_{T}), \widehat{\psi}_{c.ipw}(\overline{a^\prime}_{T}))
		% 	\end{aligned}
		% 	\end{equation*}
		\begin{equation*}
		\begin{aligned}
		& \var(\widehat{\psi}_{inc})  
		=  \var \left( \sum_{  \overline{a}_{T} \in \overline{\mathcal{A}}_{T} } \sqrt{w(\overline{a}_{T}; \delta, p)} \widehat{\psi}_{c.ipw}(\overline{a}_{T}) \right)
		\end{aligned}
		\end{equation*}
		where for $\forall \overline{a}_{T} \in \overline{\mathcal{A}}_{T}$ the weight $w$ is defined by
		$$ 
		w(\overline{a}_{T}; \delta, p) = \prod_{t=1}^{T} \frac{\pi_t(a_t) \left\{ \mathbbm{1}\left({a}_{t}=1\right)\delta^2p + \mathbbm{1}\left({a}_{t}=0\right)(1-p) \right\} }{(\delta{\pi}_{t}(H_t) + 1-{\pi}_{t}(H_t))^2}.
		$$
	\end{lemma}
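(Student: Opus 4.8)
The plan is to prove the stronger \emph{pointwise} identity
\[
\widehat{\psi}_{inc} = \sum_{\overline{a}_{T} \in \overline{\mathcal{A}}_{T}} \sqrt{w(\overline{a}_{T}; \delta, p)}\, \widehat{\psi}_{c.ipw}(\overline{a}_{T}),
\]
valid for every realization of the data; taking $\var(\cdot)$ of both sides then yields the stated variance decomposition immediately. Because the two estimators coincide as random variables, the entire content reduces to algebra on the weights, using the standing assumption $\pi_t(H_t)=p$ and the shorthand $\pi_t(a_t)=\mathbbm{1}(a_t=1)p+\mathbbm{1}(a_t=0)(1-p)$.

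First I would decompose the single-period incremental weight. For each $t$, since $A_t\in\{0,1\}$ only the term matching the observed value survives, so
\[
\frac{\delta A_t + 1 - A_t}{\delta p + 1 - p} = \sum_{a_t \in \{0,1\}} \mathbbm{1}(A_t = a_t)\,\frac{\delta a_t + 1 - a_t}{\delta p + 1 - p}.
\]
Taking the product over $t=1,\ldots,T$ and expanding the product of sums into a sum over sequences $\overline{a}_{T}=(a_1,\ldots,a_T)$ gives
\[
\prod_{t=1}^{T} \frac{\delta A_t + 1 - A_t}{\delta p + 1 - p} = \sum_{\overline{a}_{T} \in \overline{\mathcal{A}}_{T}} \prod_{t=1}^{T} \mathbbm{1}(A_t = a_t)\,\frac{\delta a_t + 1 - a_t}{\delta p + 1 - p}.
\]
Multiplying and dividing each factor by $\pi_t(a_t)$ and regrouping then manufactures the deterministic IPW estimators:
\[
\widehat{\psi}_{inc} = \sum_{\overline{a}_{T} \in \overline{\mathcal{A}}_{T}} \left[ \prod_{t=1}^{T} \frac{\pi_t(a_t)(\delta a_t + 1 - a_t)}{\delta p + 1 - p} \right] \underbrace{\prod_{t=1}^{T} \frac{\mathbbm{1}(A_t = a_t)}{\pi_t(a_t)}\, Y}_{=\ \widehat{\psi}_{c.ipw}(\overline{a}_{T})}.
\]

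The only step requiring genuine care is verifying that the bracketed coefficient equals $\sqrt{w(\overline{a}_{T};\delta,p)}$, and this is exactly what the form of $w$ is engineered to absorb. I would check it factor by factor: for $a_t=1$ the coefficient factor is $\delta p/(\delta p + 1 - p)$, while the matching factor of $\sqrt{w}$ is $\sqrt{p\cdot\delta^2 p}\,/(\delta p + 1 - p) = \delta p/(\delta p + 1 - p)$; for $a_t=0$ both factors equal $(1-p)/(\delta p + 1 - p)$. Since every quantity is nonnegative, the principal square root is the correct branch, so the per-period agreement multiplies up to $\prod_{t}(\cdots) = \sqrt{w(\overline{a}_{T};\delta,p)}$. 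Substituting this back establishes the pointwise identity, and applying $\var(\cdot)$ completes the proof. I expect no real obstacle beyond bookkeeping the $\delta^2 p$ versus $\delta p$ discrepancy inside the square root — the factor $\pi_t(a_t)$ in the definition of $w$ is precisely the device that makes the square root collapse to the linear coefficient appearing in $\widehat{\psi}_{inc}$.
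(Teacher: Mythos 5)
Your proof is correct, and it takes a genuinely different route from the paper's. You establish the \emph{pointwise} (almost sure) identity $\widehat{\psi}_{inc} = \sum_{\overline{a}_{T}} \sqrt{w(\overline{a}_{T};\delta,p)}\,\widehat{\psi}_{c.ipw}(\overline{a}_{T})$ by expanding $\prod_t(\delta A_t + 1 - A_t)$ over the $2^T$ treatment sequences and checking factor-by-factor that the resulting coefficient equals $\sqrt{w}$; the variance identity is then immediate. The paper instead works at the level of moments: it computes $\E[\widehat{\psi}_{inc}^2]$ and $(\E[\widehat{\psi}_{inc}])^2$ separately via iterated-expectation (g-formula) arguments, decomposes the latter into $2^T\times 2^T$ cross terms $v_{inc.2}(\overline{A}_T;\overline{a}_T)v_{inc.2}(\overline{A}_T;\overline{a'}_T)$, and identifies these with $-\sqrt{w(\overline{a}_T)w(\overline{a'}_T)}\,\mathrm{Cov}(\widehat{\psi}_{c.ipw}(\overline{a}_T),\widehat{\psi}_{c.ipw}(\overline{a'}_T))$ using the orthogonality $\E[\widehat{\psi}_{c.ipw}(\overline{a}_T)\widehat{\psi}_{c.ipw}(\overline{a'}_T)]=0$ for $\overline{a}_T\neq\overline{a'}_T$. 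Your identity subsumes all of this: the orthogonality and the relation $\mathbb{V}_{inc.1}=\sum_{\overline{a}_T} w(\overline{a}_T)\mathbb{V}_{c.ipw.1}(\overline{a}_T)$ both fall out of squaring the pointwise decomposition, since distinct sequences have disjoint indicator supports. What the paper's longer route buys is that its intermediate moment formulas are reused directly in the bounds of Theorem 5.1 and Corollary 5.1 (it needs the explicit g-formula representations of $\mathbb{V}_{inc.1}$ and $\mathbb{V}_{c.ipw.1}$ anyway), whereas your argument, while cleaner and self-contained as a proof of the lemma itself, would still require those computations elsewhere. One small point of care, which you handled correctly: the per-factor square-root collapse relies on every factor being nonnegative so the principal branch multiplies across $t$; this holds since $0<p<1$ and $\delta>0$.
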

	
	\begin{proof}
		From the last display for $\mathbb{V}_{inc.1}$, we have that
		\begin{align*}
		& \mathbb{V}_{inc.1}  \\
		& = \sum_{\overline{a}_{T} \in \overline{\mathcal{A}}_{T}} \int_{\mathcal{X}} \E\left[Y^2 \mid H_{T}, A_{T}=a_{T} \right] \prod_{t=1}^{T} \frac{\pi_t(a_t) \left( \mathbbm{1}\left({a}_{t}=1\right)\delta^2p + \mathbbm{1}\left({a}_{t}=0\right)\{1-p\} \right) }{(\delta p + 1-p)^2}  \\
		& \qquad \qquad \quad \times \prod_{t=1}^{T} \frac{d\Pb(X_t \mid \overline{X}_{t-1}, \overline{A}_{t-1}=\overline{a}_{t-1})}{\pi_t(a_t)} \\
		&=  \sum_{\overline{a}_{T} \in \overline{\mathcal{A}}_{T}} w(\overline{a}_{T}; \delta, p)  \int_{\mathcal{X}} \E\left[Y^2 \mid H_{T}, A_{T}=a_{T} \right] 
		\prod_{t=1}^{T} \frac{d\Pb(X_t \mid \overline{X}_{t-1}, \overline{A}_{t-1}=\overline{a}_{t-1})}{\pi_t(a_t)} \\
		&= \sum_{\overline{a}_{T} \in \overline{\mathcal{A}}_{T}} w(\overline{a}_{T}; \delta, p) \mathbb{V}_{c.ipw.1}(\overline{a}_{T} )
		\end{align*}
		where we let weight $w(\overline{a}_{T}; \delta, p)$ denote the product term 
		$\prod_{t=1}^{T} \frac{\pi_t(a_t) \left( \mathbbm{1}\left({a}_{t}=1\right)\delta^2p + \mathbbm{1}\left({a}_{t}=0\right)\{1-p\} \right) }{(\delta{\pi}_{t}(H_t) + 1-{\pi}_{t}(H_t))^2}$.
		
		Next, we observe that
		\begin{align*}
		\mathbb{V}_{inc.2} &= \left\{  \E  \left[ \prod_{t=1}^{T}  \left( \frac{\delta A_t + 1-A_t }{\delta p + 1-p} \right)Y \right]  \right\}^2  \\
		&= \left\{  \E  \left[ \prod_{t=1}^{T}  \left( \frac{\delta \mathbbm{1}\left({A}_{t}=1\right)  }{\delta p + 1-p} \right)Y + \ \cdots \ + \prod_{t=1}^{T}  \left( \frac{ \mathbbm{1}\left({A}_{t}=0 \right)  }{\delta p + 1-p} \right)Y \right]  \right\}^2 \\
		&= \sum_{\overline{a}_{T} \in \overline{\mathcal{A}}_{T}} v^2_{inc.2}(\overline{A}_{T}; \overline{a}_{T}) + \sum_{\overline{a^\prime}_{T} \neq \overline{a}_{T} } v_{inc.2}(\overline{A}_{T}; \overline{a}_{T}) v_{inc.2}(\overline{A}_{T}; \overline{a^\prime}_{T})
		\end{align*}
		where we have decomposed $\mathbb{V}_{inc.2}$ into $2^{T} \times 2^{T}$ terms by defining $v_{inc.2}(\overline{A}_{T}; \overline{a}_{T})$ by 
		$$
		v_{inc.2}(\overline{A}_{T}; \overline{a}_{T}) \equiv  \E  \left[ \prod_{t=1}^{T}  \left( \frac{\delta \mathbbm{1}(a_t=1) + \mathbbm{1}(a_t=0) }{\delta p + 1-p} \right)\mathbbm{1}(A_t=a_t) \cdot Y \right]. 
		$$ 
		Then for fixed $\overline{a}_{T}$ it is straightforward to see that 
		\begin{align*}
		\frac{v^2_{inc.2}(\overline{A}_{T}; \overline{a}_{T})}{w(\overline{a}_{T}; \delta, p)}  &= \left\{  \E  \left[ \prod_{t=1}^{T}  \left( \frac{ \left\{ \delta \mathbbm{1}(a_t=1) + \mathbbm{1}(a_t=0)  \right\} \mathbbm{1}(A_t=a_t) }{ \sqrt{\pi(a_t) \left(  \mathbbm{1}\left({a}_{t}=1\right)\delta^2p + \mathbbm{1}\left({a}_{t}=0\right)\{1-p\} \right) } } \right)Y \right]  \right\}^2  \\
		&= \left\{  \E  \left[ \prod_{t=1}^{T}  \left( \frac{  \mathbbm{1}(A_t=a_t) }{ \pi(a_t)  } \right)Y \right]  \right\}^2 =  \mathbb{V}_{c.ipw.2}(\overline{a}_{T})
		\end{align*}
		
		Now putting this together, we obtain
		\begin{align*}
		& \mathbb{V}_{inc.1} - \mathbb{V}_{inc.2} \\
		&= \sum_{\overline{a}_{T} \in \overline{\mathcal{A}}_{T}} w(\overline{a}_{T}; \delta, p) \left\{  \mathbb{V}_{c.ipw.1}(\overline{a}_{T} ) -  \mathbb{V}_{c.ipw.2}(\overline{a}_{T})  \right\} - \sum_{\overline{a^\prime}_{T} \neq \overline{a}_{T} } v_{inc.2}(\overline{A}_{T}; \overline{a}_{T}) v_{inc.2}(\overline{A}_{T}; \overline{a^\prime}_{T}) \\
		&= \sum_{\overline{a}_{T} \in \overline{\mathcal{A}}_{T}} w(\overline{a}_{T}; \delta, p)	 Var\left(\widehat{\psi}_{c.ipw}(\overline{a}_{T})\right)  - \sum_{\overline{a^\prime}_{T} \neq \overline{a}_{T} } v_{inc.2}(\overline{A}_{T}; \overline{a}_{T}) v_{inc.2}(\overline{A}_{T}; \overline{a^\prime}_{T}).
		\end{align*}
		
		However, from the second term in the last display one could notice that
		\begin{align*}
		\frac{v_{inc.2}(\overline{A}_{T}; \overline{a}_{T}) v_{inc.2}(\overline{A}_{T}; \overline{a^\prime}_{T})}{\sqrt{w(\overline{a}_{T}; \delta, p)w(\overline{a^\prime}_{T}; \delta, p)}} &=  \E  \left[ \prod_{t=1}^{T}  \left( \frac{  \mathbbm{1}(A_t=a_t) }{ \pi(a_t)  } \right)Y \right] \E  \left[ \prod_{t=1}^{T}  \left( \frac{  \mathbbm{1}(A_t=a^\prime_t) }{ \pi(a^\prime_t)  } \right)Y \right]  \\
		&= -Cov(\widehat{\psi}_{c.ipw}(\overline{a}_{T}), \widehat{\psi}_{c.ipw}(\overline{a^\prime}_{T}))
		\end{align*}
		where the last equality follows by the fact that
		$$
		\E\left\{ \prod_{t=1}^{T}  \left( \frac{  \mathbbm{1}(A_t=a_t) }{ \pi(a_t)  } \right) \prod_{t=1}^{T}  \left( \frac{  \mathbbm{1}(A_t=a^\prime_t) }{ \pi(a^\prime_t)  } \right) Y^2  \right\} = 0 \quad \text{for } \ \forall \overline{a^\prime}_{T} \neq \overline{a}_{T}.
		$$
		
		Hence finally we conclude that
		\begin{align*}
		& \var(\widehat{\psi}_{inc}) = \mathbb{V}_{inc.1} - \mathbb{V}_{inc.2} \\
		&= \sum_{\overline{a}_{T} \in \overline{\mathcal{A}}_{T}} w(\overline{a}_{T}; \delta, p)	 \var\left(\widehat{\psi}_{c.ipw}(\overline{a}_{T})\right) + \sum_{\substack{ \overline{a}_{T}, \overline{a^\prime}_{T} \in \overline{\mathcal{A}}_{T} \\ \overline{a^\prime}_{T} \neq \overline{a}_{T} } } \sqrt{w(\overline{a}_{T}; \delta, p)w(\overline{a^\prime}_{T}; \delta, p)}Cov(\widehat{\psi}_{c.ipw}(\overline{a}_{T}), \widehat{\psi}_{c.ipw}(\overline{a^\prime}_{T})) \\
		&= \sum_{  \overline{a}_{T}, \overline{a^\prime}_{T} \in \overline{\mathcal{A}}_{T} } \sqrt{w(\overline{a}_{T}; \delta, p)w(\overline{a^\prime}_{T}; \delta, p)}Cov(\widehat{\psi}_{c.ipw}(\overline{a}_{T}), \widehat{\psi}_{c.ipw}(\overline{a^\prime}_{T})).
		\end{align*}
		
	\end{proof}
	
	In Lemma \ref{lem:inf-time-decomp} it should be noticed that the weight $w(\overline{a}_{T}; \delta, p)$ exponentially and monotonically decays to zero for $\forall \overline{a}_{T} \in \overline{\mathcal{A}}_{T}$.
	
	Now we show that there always exists $T_{min}$ such that $\var(\widehat{\psi}_{inc}) < \var(\widehat{\psi}_{c.ipw}(\overline{\bm{1}}))$ for all $T \geq T_{min}$.  Let $\overline{\bm{1}} = [1,...,1]$. From Lemma \ref{lem:inf-time-decomp} it follows that
	\begin{align*}
	&   \var(\widehat{\psi}_{inc}) - \var(\widehat{\psi}_{c.ipw}(\overline{\bm{1}})) \\
	& = \sum_{\overline{a}_{T} \in \overline{\mathcal{A}}_{T}} w(\overline{a}_{T}; \delta, p)	 \var\left(\widehat{\psi}_{c.ipw}(\overline{a}_{T})\right) - \var(\widehat{\psi}_{c.ipw}(\overline{\bm{1}}))  \\
	& \quad + \sum_{\substack{ \overline{a}_{T}, \overline{a^\prime}_{T} \in \overline{\mathcal{A}}_{T} \\ \overline{a^\prime}_{T} \neq \overline{a}_{T} } } \sqrt{w(\overline{a}_{T}; \delta, p)w(\overline{a^\prime}_{T}; \delta, p)}Cov(\widehat{\psi}_{c.ipw}(\overline{a}_{T}), \widehat{\psi}_{c.ipw}(\overline{a^\prime}_{T})) \\
	& = \sum_{\overline{a}_{T} \in \overline{\mathcal{A}}_{T}} \prod_{t=1}^{T} \frac{\pi_t(a_t)\left\{ \mathbbm{1}\left({a}_{t}=1\right)\delta^2p + \mathbbm{1}\left({a}_{t}=0\right)(1-p)\right\}}{(\delta p + 1-p)^2}  \left( \prod_{t=1}^{T} \frac{1}{\pi_t(a_t)}  \E\left[\left(Y^2 \right)^{\overline{a}_{T}} \right]  - \left(\E\left[Y^{\overline{a}_{T}}\right]\right)^2 \right)  \\
	& \quad - \left( \frac{1}{p} \right)^{T} \E\left[\left(Y^{\overline{\bm{1}}} \right)^2 \right] + \left(\E\left[Y^{\overline{\bm{1}}}\right]\right)^2 - \sum_{\substack{ \overline{a}_{T}, \overline{a^\prime}_{T} \in \overline{\mathcal{A}}_{T} \\ \overline{a^\prime}_{T} \neq \overline{a}_{T} } } \sqrt{w(\overline{a}_{T}; \delta, p)w(\overline{a^\prime}_{T}; \delta, p)} \E\Big[Y^{\overline{a}_T}\Big] \E\Big[Y^{\overline{a^\prime}_T}\Big] \\
	& \leq b^2_u \sum_{\overline{a}_{T} \in \overline{\mathcal{A}}_{T}} \left(\prod_{t=1}^{T} \frac{ \mathbbm{1}\left({a}_{t}=1\right)\delta^2p + \mathbbm{1}\left({a}_{t}=0\right)(1-p)}{(\delta p + 1-p)^2} \right)  - \left( \frac{1}{p} \right)^{T} \E\left[\left(Y^{\overline{\bm{1}}} \right)^2 \right] + \left(\E\left[Y^{\overline{\bm{1}}}\right]\right)^2 \\
	& \quad - \sum_{\substack{ \overline{a}_{T}, \overline{a^\prime}_{T} \in \overline{\mathcal{A}}_{T} \\ \overline{a^\prime}_{T} \neq \overline{a}_{T} } } \sqrt{w(\overline{a}_{T}; \delta, p)w(\overline{a^\prime}_{T}; \delta, p)} \E\Big[Y^{\overline{a}_T}\Big] \E\Big[Y^{\overline{a^\prime}_T}\Big] + \sum_{\overline{a}_{T} \in \overline{\mathcal{A}}_{T}} w(\overline{a}_{T}; \delta, p)\left(\E\left[Y^{\overline{a}_{T}}\right]\right)^2 \\
	& = b^2_u \left\{ \left[\frac{\delta^2p+1-p}{(\delta p + 1 - p)^2}\right]^T - \left(\frac{c^{1/T}_{{\bm{1}}}}{p}\right)^T \right\} - \sum_{ \overline{a}_{T}, \overline{a^\prime}_{T} \in \overline{\mathcal{A}}_{T} } \sqrt{w(\overline{a}_{T}; \delta, p)w(\overline{a^\prime}_{T}; \delta, p)} \E\Big[Y^{\overline{a}_T}\Big] \E\Big[Y^{\overline{a^\prime}_T}\Big] + \left(\E\left[Y^{{\bm{1}}}\right]\right)^2 \\
	& = b^2_u \left\{\left[\frac{\delta^2p+1-p}{(\delta p + 1 - p)^2}\right]^T - \left(\frac{c^{1/T}_{{\bm{1}}}}{p}\right)^T - A(\delta,p) + B \right\}
	\end{align*}
	where $c_{\bm{1}}=\frac{\E\left[\left(Y^{\overline{\bm{1}}} \right)^2 \right]}{b^2_u}$, $A(\delta,p)=\sum_{ \overline{a}_{T}, \overline{a^\prime}_{T} \in \overline{\mathcal{A}}_{T} } \sqrt{w(\overline{a}_{T}; \delta, p)w(\overline{a^\prime}_{T}; \delta, p)} \frac{\E\left[Y^{\overline{a}_T}\right]}{b_u}\frac{\E\big[Y^{\overline{a^\prime}_T}\big]}{b_u}$, and $B=\frac{\left(\E\big[Y^{{\overline{\bm{1}}}}\big]\right)^2}{b^2_u}$. We note that $|A(\delta,p)|\leq1$, $0 \leq B \leq 1$, and  $c^{1/T}_{\bm{1}} \rightarrow 1$ as $T\rightarrow \infty$.

	For $\delta > 1$, $\frac{\delta^2p+1-p}{(\delta p + 1 - p)^2} < \frac{1}{p}$. Hence based on above observation, it follows that for sufficiently large $T$ the last display is strictly less than zero. Consequently we conclude $ \var(\widehat{\psi}_{inc}) - \var(\widehat{\psi}_{c.ipw}(\overline{\bm{1}})) < 0$ for all $T\geq T_{min}$, which is the result of part $i)$. Likewise, we have the same conclusion for $\overline{\bm{0}}_{T} = [0,...,0]$ such that $ \var(\widehat{\psi}_{inc}) - \var(\widehat{\psi}_{c.ipw}(\overline{\bm{0}}_{T})) < 0$.
	
	The value of $T_{min}$ is determined by $\delta, p$, and distribution of counterfactual outcome $Y^{\overline{a}_T}$. One rough upper bound of such $T_{min}$ is 
	$$
	\min \left\{T: \left[\frac{\delta^2p+1-p}{(\delta p + 1 - p)^2}\right]^T - \frac{c_{\bm{1}}}{p^T} + 2 < 0\right\}
	$$
	which could be obtained by the last display above and is always finite due to the fact $c_{\bm{1}}>0$ by given assumption in the theorem. $T_{min}$ should not be very large for moderately large value of $\delta$ unless $c_{\bm{1}}$ is unreasonably small since the difference $\frac{1}{p^T} - \left[\frac{\delta^2p+1-p}{(\delta p + 1 - p)^2}\right]^T$ also grows exponentially.  \\

	\subsection{Proof of Theorem \ref{thm:convergence}}
	\label{proof:thm-6-1}
	
	First we need to define the following notations:
	$$
	\Vert f \mid_{\mathcal{D},\mathcal{T}} \equiv \sup_{\delta\in\mathcal{D},t\in \mathcal{T}}\vert f(\delta,t) \mid
	$$
	$$
	\widehat{\Psi}_n(\delta, t) \equiv \sqrt[]{n}\{\widehat{\psi}_t (\delta) - \psi_t (\delta) \} / \widehat{\sigma}(\delta, t)
	$$
	$$
	\widetilde{\Psi}_n(\delta, t) \equiv \sqrt[]{n}\{\widehat{\psi}_t (\delta) - \psi_t (\delta) \} / \sigma(\delta, t)
	$$
	$$
	\Psi_n(\delta; t) \equiv \mathbb{G}_{n}\{\widetilde{\varphi}(Z;\bm{\eta},\delta, t) \}
	$$
	where we let $\mathcal{T} = \{1,...,T\}$, let $ \mathbb{G}_{n}$ denote the empirical process on the full sample as usual, and let $\widetilde{\varphi}(Z;\bm{\eta},\delta, t) = \{\varphi(Z;\bm{\eta},\delta, t) - \psi(t;\delta)\} / \sigma(\delta; t)$ and let $\mathbb{G}$ be a mean-zero Gaussian process with covariance $\E[\mathbb{G}(\delta_1; t_1)\mathbb{G}(\delta_2; t_2)]=\E\left[\widetilde{\varphi}(Z;\bm{\eta},\delta_1, t_1) \widetilde{\varphi}(Z;\bm{\eta},\delta_2, t_2)\right]$ as defined in Theorem \ref{thm:convergence} in the main text.
	
	The proof consists of two parts; in the first part we will show $\Psi_n(\cdot) \leadsto \mathbb{G}(\cdot)$ in $l^{\infty}(\mathcal{D},\mathcal{T})$ and in the second we will show $\Vert \widehat{\Psi}_n-\Psi_n \mid_{\mathcal{D},\mathcal{T}} = o_\Pb(1)$.
	
	\textbf{Part 1.} 
	A proof of the first statement immediately follows from the proof of Theorem 3 in \citet{Kennedy17} who showed the function class $\mathcal{F}_{\bar{\bm{\eta}}}=\{\varphi(\cdot; \bar{\bm{\eta}},\delta): \delta \in \mathcal{D} \}$ is Lipschitz and thus has a finite bracketing integral for any fixed set of nuisance functions. Then Theorem 2.5.6 in \citet{van1996weak} gives the result. In our case, the function class $\mathcal{F}_{\bar{\bm{\eta}}}=\{\varphi(\cdot; \bar{\bm{\eta}},\delta, t): \delta \in \mathcal{D},  t\leq T \}$ is still Lipschitz, since for $\forall t \in \{1,..., T\}$ we have
	%\begin{align*}
	$$
	\left| \frac{\partial }{\partial\delta}\left[ \frac{ \{a_t - \pi_t(h_t)\}(1-\delta)}{\delta a_t + 1-a_t}  \right] \right| \leq  \frac{1}{\delta_l} +  \frac{1}{4\delta_l^2}
	$$
	%\end{align*}
	$$
	\left| \frac{\partial }{\partial\delta}\left[ \frac{m_t(h_t,1,1)\delta\pi_t(h_t)+m_t(h_t,0,1)\{ 1-\pi_t(h_t) \}}{\delta\pi_t(h_t) + 1 - \pi_t(h_t)} \cdot \omega_t(h_t, a_t) \right] \right| \leq  \frac{2C}{\delta_l^2}
	$$
	$$
	\frac{\partial }{\partial\delta}\left[ \frac{\delta a_t + 1-a_t}{\delta\pi_t(h_t) + 1-\pi_t(h_t)}\cdot\frac{1}{\omega_t( h_t, a_t)}  \right] \leq \frac{1}{\epsilon_\omega\delta_l^2 }
	$$
	where we use assumption 1) and 2) in the Theorem, and the identification assumption (\ref{assumption:A3}) that there exist a constant $\epsilon_\omega$ such that $0<\epsilon_\omega<\omega_t( h_t, a_t) \leq 1$ and thus $\frac{1}{\omega_t( h_t, a_t)} \leq \frac{1}{\epsilon_\omega}$ a.e. [$\Pb$]. Therefore, every $\varphi(\cdot; \bar{\bm{\eta}},\delta, t)$ is basically a finite sum of products of Lipschitz functions with bounded $\mathcal{D}$ and we thus conclude $\mathcal{F}_{\bar{\bm{\eta}}}$ is Lipschitz. Hence our function class still has a finite bracketing integral for fixed $\bar{\bm{\eta}}$ and $t$, which completes the first part of our proof.
	
	\textbf{Part 2.} 
	Let $N = n/K$ be the sample size in any group $k = 1, ..., K$, and denote the empirical process over group k units by $\mathbb{G}^k_n=\sqrt[]{N}(\Pb^k_n - \Pb)$. From the result of Part 1 and the proof of Theorem 3 in \citet{Kennedy17} we have
	\begin{align*}
		& \widetilde{\Psi}_n(\delta; t) - \Psi_n(\delta; t) \\
		& = \frac{\sqrt[]{n}}{K\sigma(\delta; t)}\sum_{k=1}^K \left[ \frac{1}{\sqrt[]{N}}\mathbb{G}^k_n\left\{\varphi(Z;\hat{\bm{\eta}}_{-k},\delta, t) - \varphi(Z;\bm{\eta},\delta, t) \right\} + \Pb\left\{ \varphi(Z;\hat{\bm{\eta}}_{-k},\delta, t) - \varphi(Z;\bm{\eta},\delta, t) \right\} \right] \\
		& \equiv B_{n,1}(\delta;t) + B_{n,2}(\delta;t).
	\end{align*}
	
	Now we analyze two pieces $B_{n,1}(\delta;t)$ and $B_{n,2}(\delta;t)$ in the last display.  $B_{n,1}(\delta;t)=o_\Pb(1)$ follows by the exact same steps done by \citet{Kennedy17}. However, analysis on $B_{n,2}(\delta;t)$ requires extra work.
	
	To analyze $B_{n,2}(\delta;t)$, we use the same notation used in \citet{Kennedy17}. First let $\psi(\Pb;Q)$ denote the mean outcome under intervention $Q$ for a population corresponding to observed data distribution $\Pb$. Next, let  $\varphi^*(z;{\bm{\eta}, t})$ denote its \textit{centered} efficient influence function when $Q$ does not depend on $\Pb$, as given in Lemma \ref{lem:eif_1} and let  $\zeta^*(z;{\bm{\eta}}, t)$ denote the contribution to the efficient influence function $\varphi^*(z;{\bm{\eta}, t})$ due to estimating $Q$ when it depends on $\Pb$, as given in Lemma \ref{lem:eif_2}. Now by definition,
	$$
	\varphi(Z;{\bm{\eta}, \delta, t}) = \varphi^*(Z;{\bm{\eta}, t}) + \psi(\Pb;Q) + \zeta^*(Z;{\bm{\eta}}, t),
	$$
	and after some rearrangement we obtain
	\begin{align*}
		\frac{1}{\sqrt[]{n}}B_{n,2}(\delta;t) & = \Pb\left\{ \varphi(Z;\overline{\bm{\eta}},\delta, t) - \varphi(Z;{\bm{\eta}},\delta, t)  \right\} \\
		& =\int \varphi^*(z;\overline{\bm{\eta}},t) d\Pb(z) + \psi(\overline{\Pb};\overline{Q}) -  \psi({\Pb};\overline{Q}) \\
		& \quad +  \int  \zeta^*(z;\overline{\bm{\eta}},t)  d\Pb(z) + \psi({\Pb};\overline{Q}) -  \psi({\Pb};Q).
	\end{align*}
	Although one can relate $\overline{\bm{\eta}}$ to $\widehat{\bm{\eta}}_{-k}$ in above equation, it can be anything associated with new $\overline{\Pb}$ and $\overline{Q}$.
	
	Hence, by analyzing the second order remainder terms of von Mises expansion for the efficient influence functions given in Lemma \ref{lem:eif_1} and \ref{lem:eif_2}, we can evaluate the convergence rate of $B_{n,2}(\delta;t)$.
	The following two lemmas analyze those second order remainder terms in the presence of dropout process.
	
	\begin{lemma} \label{lem:remainder_1}
		Let $\psi({\Pb};Q)$ be a mean outcome under intervention $Q$ for a for a population
		corresponding to observed data distribution $\Pb$, and let $\varphi^*(z;{\bm{\eta}},t)$ denote its efficient influence
		function when $Q$ does not depend on $\Pb$ for given $t$, as given in Lemma \ref{lem:eif_1}. For another data distribution $\overline{\Pb}$, let $\overline{\bm{\eta}}$ denote the corresponding nuisance functions. Then we have the 1st-order von Mises expansion
		\begin{equation*} 
			\begin{aligned}
				\psi(\overline{\Pb};Q) -  \psi({\Pb};Q) &= \int \varphi^*(z;\overline{\bm{\eta}},t) d\Pb(z) \\
				& +\sum_{s=1}^{t} \sum_{r=1}^{s} \int  \left(m^*_s- \overline{m}_s \right) \left( \prod_{k=1}^{s} dQ_k d\Pb_k \frac{d\omega_{k}}{d\overline{\omega}_k} \right) \left( \frac{d\pi_{r} - d\overline{\pi}_r}{d\overline{\pi}_r} \right)  \prod_{k=1}^{r-1} \left\{ \left(\frac{d\pi_k}{d\overline{\pi}_k}  \frac{d\omega_{k}}{d\overline{\omega}_k}  \right)  \right\}  \\
    			& + \sum_{s=1}^{t} \sum_{r=1}^{s} \int  \left(m^*_s- \overline{m}_s \right) \left( \prod_{k=1}^{s}dQ_kd\Pb_k \right) \left( \frac{d\omega_{r} - d\overline{\omega}_r}{d\overline{\omega}_r} \right)  \prod_{k=1}^{r-1} \left\{ \left(\frac{d\pi_k}{d\overline{\pi}_k}  \frac{d\omega_{k}}{d\overline{\omega}_k}  \right) \right\}
			\end{aligned}
		\end{equation*} 
		where we define 
		$$
		\overline{m}_s = \int \overline{m}_{s+1}dQ_{s+1}d\overline{\Pb}_{s+1}, \qquad {m}^*_s = \int \overline{m}_{s+1}dQ_{s+1}d{\Pb}_{s+1}, 
		$$
		$$
		dQ_t = dQ_t(A_t \mid H_t), \qquad d\pi_t = d\Pb(A_t \mid H_t), \qquad d\Pb_t = d\Pb(X_t \mid H_{t-1}, A_{t-1}),
		$$
		$$
		d\omega_s=d\Pb(R_{s+1}=1 \mid H_s, A_s,R_s=1), \qquad d\overline{\omega}_s=d\overline{\Pb}(R_{s+1}=1 \mid H_s, A_s,R_s=1).
		$$	
	\end{lemma}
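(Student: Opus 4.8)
The plan is to read this as the von Mises expansion (\ref{def:von-Mises-expansion}) of the \emph{fixed-intervention} functional $\Pb \mapsto \psi(\Pb; Q)$ along the path joining $\overline{\Pb}$ and $\Pb$, and to pin down the second-order remainder by evaluating $\int \varphi^*(z; \overline{\bm{\eta}}, t)\,d\Pb(z)$ in closed form. Because $\varphi^*(\cdot; \bm{\eta}, t)$ from Lemma \ref{lem:eif_1} is the (mean-zero) efficient influence function of $\psi(\cdot; Q)$ with $Q$ held fixed, pathwise differentiability already guarantees that $\psi(\overline{\Pb}; Q) - \psi(\Pb; Q)$ equals $\int \varphi^*(z; \overline{\bm{\eta}}, t)\,d\Pb(z)$ (with the sign fixed by the convention in (\ref{def:von-Mises-expansion})) plus a remainder that is a sum of products of nuisance errors. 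The entire content of the lemma is the explicit bilinear form of that remainder, so the proof reduces to computing the integrated influence function and matching it against the target difference; I expect to run this by induction on $t$ exactly as in the proof of Lemma \ref{lem:eif_1}, checking the base case $t=1$ directly.

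First I would integrate $\varphi^*(\cdot; \overline{\bm{\eta}})$ against the true law term by term. The $s$-th summand is the innovation $\{\int \overline{m}_{s+1}\,dQ_{s+1} - \overline{m}_s\}$ times $\mathbbm{1}(R_{s+1}=1)$ times the cumulative weight $\prod_{k=0}^s \frac{dQ_k}{d\overline{\pi}_k}\frac{1}{d\overline{\omega}_k}$. Applying the tower property from the innermost coordinate outward converts these factors one by one: by monotone censoring, $\mathbbm{1}(R_{s+1}=1)=\prod_{k=0}^s \mathbbm{1}(R_{k+1}=1)$, and since $\E_\Pb[\mathbbm{1}(R_{k+1}=1)\mid H_k, A_k, R_k=1]=d\omega_k$, the factor $\frac{\mathbbm{1}(R_{k+1}=1)}{d\overline{\omega}_k}$ becomes the ratio $\frac{d\omega_k}{d\overline{\omega}_k}$; integrating $A_k$ against the true $d\pi_k$ turns $\frac{dQ_k}{d\overline{\pi}_k}$ into integration against $dQ_k$ weighted by $\frac{d\pi_k}{d\overline{\pi}_k}$; and, crucially, integrating $X_{s+1}$ against the true transition collapses the innovation's leading piece $\int \overline{m}_{s+1}\,dQ_{s+1}$ into $m^*_s = \int \overline{m}_{s+1}\,dQ_{s+1}\,d\Pb_{s+1}$, so the whole innovation reduces to the transition error $m^*_s - \overline{m}_s$. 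This identifies $\int \varphi^*(\cdot; \overline{\bm{\eta}})\,d\Pb = \sum_{s} \int (m^*_s - \overline{m}_s)\prod_{k=1}^s \frac{d\pi_k}{d\overline{\pi}_k}\frac{d\omega_k}{d\overline{\omega}_k}\,dQ_k\,d\Pb_k$.

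Next I would telescope the target difference $\psi(\overline{\Pb}; Q) - \psi(\Pb; Q) = \overline{m}_0 - m_0$ using the recursions $m_s = \int m_{s+1}\,dQ_{s+1}\,d\Pb_{s+1}$, $\overline{m}_s = \int \overline{m}_{s+1}\,dQ_{s+1}\,d\overline{\Pb}_{s+1}$, and $m^*_s = \int \overline{m}_{s+1}\,dQ_{s+1}\,d\Pb_{s+1}$, which writes it as $\sum_s \int (\overline{m}_s - m^*_s)\prod_{k=1}^s dQ_k\,d\Pb_k$, i.e.\ the same single-coordinate errors but with \emph{no} weight ratios. Combining this telescoped difference with the integrated influence function, the single-error (first-order) pieces cancel and what remains is $\sum_s \int (m^*_s - \overline{m}_s)\big[\prod_{k=1}^s \frac{d\pi_k}{d\overline{\pi}_k}\frac{d\omega_k}{d\overline{\omega}_k} - 1\big]\,dQ_k\,d\Pb_k$. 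Applying the product-telescoping identity $\prod_{k=1}^s a_k b_k - 1 = \sum_{r=1}^s \big[(a_r-1)b_r + (b_r-1)\big]\prod_{k=1}^{r-1} a_k b_k$ with $a_r = \frac{d\pi_r}{d\overline{\pi}_r}$ and $b_r = \frac{d\omega_r}{d\overline{\omega}_r}$, and noting $a_r - 1 = \frac{d\pi_r - d\overline{\pi}_r}{d\overline{\pi}_r}$, $b_r - 1 = \frac{d\omega_r - d\overline{\omega}_r}{d\overline{\omega}_r}$, splits the remainder into the $\pi$-error and $\omega$-error contributions; regrouping the accumulated weight factors yields the two double sums displayed in the statement, each summand being the genuinely bilinear product of a transition error $m^*_s - \overline{m}_s$ with a single propensity or censoring error at level $r \le s$.

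The main obstacle is the bookkeeping in the iterated-expectation step: carrying the monotone survival indicators and the accumulated inverse weights correctly through every conditioning level, and verifying that each coordinate is integrated against the right true conditional law in the right order so the innovations collapse cleanly to $m^*_s - \overline{m}_s$ and the surviving ratios accumulate into exactly $\prod_{k\le s}\frac{d\pi_k}{d\overline{\pi}_k}\frac{d\omega_k}{d\overline{\omega}_k}$. This is precisely where dropout makes the computation new relative to \citet{Kennedy17}: the censoring indicator must be handled jointly with the treatment ratio at each step, so that both a $\pi$-difference and an $\omega$-difference can surface at level $r$. Once the integrated influence function is in the clean product form above, the split into the two sums is purely the algebraic identity, so I would front-load the effort into the inductive conditioning argument and treat the final regrouping as routine.
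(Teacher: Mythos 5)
Your proposal is correct and follows essentially the same route as the paper's proof: both compute $\int\varphi^*(z;\overline{\bm{\eta}},t)\,d\Pb(z)$ by iterated expectation (absorbing the monotone censoring indicators into the ratios $d\omega_k/d\overline{\omega}_k$ and collapsing each innovation to $m^*_s-\overline{m}_s$), telescope the resulting product of density ratios to surface one $\pi$-error or $\omega$-error per level $r\le s$, and identify the unweighted leftover sum with $\psi(\overline{\Pb};Q)-\psi(\Pb;Q)$ via the recursion $m^*_s=\int\overline{m}_{s+1}\,dQ_{s+1}\,d\Pb_{s+1}$ (the paper invokes Lemma 5 of \citet{Kennedy17} for this step). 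The only difference is organizational---you telescope the target first and then subtract, whereas the paper peels the weighted integral and recognizes the residue afterward---and your identity $\prod_k a_kb_k-1=\sum_r[(a_r-1)b_r+(b_r-1)]\prod_{k<r}a_kb_k$ is exactly the paper's inline add-and-subtract computation.
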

	
	\begin{proof}
		From Lemma \ref{lem:eif_1}, we have
		\begin{align*}
			\E\{ \varphi^*(Z;\overline{\bm{\eta}}) \} &=  \sum_{s=0}^{t} \E \left\{ \left(\int \overline{m}_{s+1}dQ_{s+1} - \overline{m}_s \right) \mathbbm{1}(R_{s+1}=1)    \prod_{r=0}^s\left(\frac{dQ_r}{d\overline{\pi}_r}  \frac{1}{d\overline{\omega}_r}  \right) \right\} \\
			&=  \sum_{s=0}^{t} \E \left\{ \E \left[ \left(\int \overline{m}_{s+1}dQ_{s+1} - \overline{m}_s \right) \mathbbm{1}(R_{s+1}=1) \mathbbm{1}(R_{s}=1)  \prod_{r=0}^s\left(\frac{dQ_r}{d\overline{\pi}_r}  \frac{1}{d\overline{\omega}_r}  \right) \Bigg\vert H_s, A_s, R_s \right] \right\} \\
			&=  \sum_{s=0}^{t} \E \Bigg\{  \E \left[ \left( \int \overline{m}_{s+1}dQ_{s+1} - \overline{m}_s \right) \mathbbm{1}(R_{s}=1) \prod_{r=0}^s\left(\frac{dQ_r}{d\overline{\pi}_r}  \frac{1}{d\overline{\omega}_r}  \right) \Bigg\vert H_s, A_s, R_s = 1,  R_{s+1}=1 \right] \\
			& \qquad  \qquad \times   d\Pb(R_{s+1}=1 \mid H_s, A_s, R_s=1)  \Bigg\} \\
			%&=  \sum_{t=0}^{2} \E \Bigg\{   \left(\int_{\mathcal{X}_{t+1}} \int \overline{m}_{t+1}dQ_{t+1}d\Pb(X_{t+1} \mid H_t, A_t, R_{t+1}=1) - \overline{m}_t \right) \mathbbm{1}(R_{s}=1) \\
			%& \qquad  \qquad \times  \prod_{r=0}^s\left(\frac{dQ_r}{d\overline{\pi}_r}  \frac{1}{d\overline{\omega}_r}  \right) d\Pb(R_{t+1}=1 \mid H_t, A_t, R_{t}=1)  \Bigg\} \\
			&=  \sum_{s=0}^{t} \E \left\{   \left(\int \int \overline{m}_{s+1}dQ_{s+1}d\Pb_{s+1} - \overline{m}_s \right) \mathbbm{1}(R_{s}=1) d\omega_s   \prod_{r=0}^s\left(\frac{dQ_r}{d\overline{\pi}_r}  \frac{1}{d\overline{\omega}_r}  \right)  \right\} \\
			%&= \sum_{t=0}^{2} \E \left\{ \left(\int \int \overline{m}_{t+1}dQ_{t+1}d\Pb_{t+1} - \overline{m}_t \right)  \prod_{r=0}^s\left(\frac{dQ_r}{d\overline{\pi}_r}  \frac{1}{d\overline{\omega}_r}  \right) \E \left[\mathbbm{1}(R_{t+1}=1)  \mid H_t, A_t, R_t=1 \right]  \right\} \\
			&= \sum_{s=0}^{t} \E \left\{ \left(m^*_s- \overline{m}_s \right) d\omega_s  \mathbbm{1}(R_{s}=1) \prod_{r=0}^s\left(\frac{dQ_r}{d\overline{\pi}_r}  \frac{1}{d\overline{\omega}_r}  \right)   \right\} \\
			&= \sum_{s=0}^{t} \int \left(m^*_s- \overline{m}_s \right)  \prod_{r=0}^s \left\{ \left(\frac{dQ_r}{d\overline{\pi}_r}  \frac{1}{d\overline{\omega}_r}  \right) d\pi_r d\Pb_r d\omega_{r} \right\}
		\end{align*}	
		where the first equality follows by the definition and linearity of expectation, the second by iterated expectation and the equivalence between $\mathbbm{1}(R_{s+1}=1)$ and $\mathbbm{1}(R_{s+1}=1, R_{s}=1)$ \footnote{For $\forall s \leq t$ the event $\{R_{s}=1\}$ implies $\{R_{s'}=1$ for all $s' \leq s\}$ by construction. }, the third by the law of total probability on conditional expectation \footnote{For random variables $X,Y,Z$, when $Z$ is discrete it follows $\E[X|Y] = \sum_z \E[X|Y, Z=z]\Pb(Z=z|Y)$.}, the fourth by the result of Lemma \ref{lem:identification} (i.e. $d\Pb_{s+1} = d\Pb(X_{s+1} \mid H_s, A_s, R_{s+1}=1)$). To obtain the last equality, we first apply iterated expectation conditioning on $(H_s, R_s)$, then do another iterated expectation conditioning on $(H_{s-1}, A_{s-1}, R_{s-1})$ followed by same steps from the second, the third and the fourth equalities, and repeat these processes for $s-2, ... , 1$. 
		
		From the last expression, now we have
		\begin{align*}
			\sum_{s=0}^{t} \int & \left(m^*_s- \overline{m}_s \right)  \prod_{r=0}^s \left\{ \left(\frac{dQ_r}{d\overline{\pi}_r}  \frac{d\omega_{r}}{d\overline{\omega}_r}  \right) d\pi_r d\Pb_r  \right\} \\
			&= \sum_{s=0}^{t} \int  \left(m^*_s- \overline{m}_s \right)  \frac{d\pi_{s}}{d\overline{\pi}_s} \frac{d\omega_{s}}{d\overline{\omega}_s} dQ_sd\Pb_s  \prod_{r=0}^{s-1} \left\{ \left(\frac{dQ_r}{d\overline{\pi}_r}  \frac{d\omega_{r}}{d\overline{\omega}_r}  \right) d\pi_r d\Pb_r  \right\} \\
			&= \sum_{s=0}^{t} \int  \left(m^*_s- \overline{m}_s \right) \left( \frac{d\pi_{s} - d\overline{\pi}_s}{d\overline{\pi}_s} \right) \frac{d\omega_{s}}{d\overline{\omega}_s} dQ_sd\Pb_s  \prod_{r=0}^{s-1} \left\{ \left(\frac{dQ_r}{d\overline{\pi}_r}  \frac{d\omega_{r}}{d\overline{\omega}_r}  \right) d\pi_r d\Pb_r  \right\}  \\
			& \quad + \sum_{s=0}^{t} \int  \left(m^*_s- \overline{m}_s \right)  \frac{d\omega_{s}}{d\overline{\omega}_s} dQ_sd\Pb_s  \prod_{r=0}^{s-1} \left\{ \left(\frac{dQ_r}{d\overline{\pi}_r}  \frac{d\omega_{r}}{d\overline{\omega}_r}  \right) d\pi_r d\Pb_r  \right\} \\
			& = \sum_{s=1}^{t} \int  \left(m^*_s- \overline{m}_s \right) \left( \frac{d\pi_{s} - d\overline{\pi}_s}{d\overline{\pi}_s} \right) \frac{d\omega_{s}}{d\overline{\omega}_s} dQ_sd\Pb_s  \prod_{r=0}^{s-1} \left\{ \left(\frac{dQ_r}{d\overline{\pi}_r}  \frac{d\omega_{r}}{d\overline{\omega}_r}  \right) d\pi_r d\Pb_r  \right\}  \\
			& \quad + \sum_{s=1}^{t} \int  \left(m^*_s- \overline{m}_s \right) \left( \frac{d\omega_{s} - d\overline{\omega}_s}{d\overline{\omega}_s} \right) dQ_sd\Pb_s  \prod_{r=0}^{s-1} \left\{ \left(\frac{dQ_r}{d\overline{\pi}_r}  \frac{d\omega_{r}}{d\overline{\omega}_r}  \right) d\pi_r d\Pb_r  \right\} \\
			& \quad + \sum_{s=1}^{t} \int  \left(m^*_s- \overline{m}_s \right)  dQ_sd\Pb_s  \prod_{r=0}^{s-1} \left\{ \left(\frac{dQ_r}{d\overline{\pi}_r}  \frac{d\omega_{r}}{d\overline{\omega}_r}  \right) d\pi_r d\Pb_r  \right\}  + \left(m^*_0- \overline{m}_0 \right).
		\end{align*}
		 Note that we use the convention from earlier lemmas that all the quantities with negative times (e.g., $dQ_{-1}$) are set to one. After repeating above process $t-1$ times to the second last term in the last display, we obtain that
		\begin{align*}
			\sum_{s=0}^{t} \int & \left(m^*_s- \overline{m}_s \right)  \prod_{r=0}^s \left\{ \left(\frac{dQ_r}{d\overline{\pi}_r}  \frac{d\omega_{r}}{d\overline{\omega}_r}  \right) d\pi_r d\Pb_r  \right\} \\
			& = \sum_{s=1}^{t} \sum_{r=1}^{s} \int  \left(m^*_s- \overline{m}_s \right) \left( \prod_{k=r}^{s} dQ_k d\Pb_k \frac{d\omega_{k}}{d\overline{\omega}_k} \right) \left( \frac{d\pi_{r} - d\overline{\pi}_r}{d\overline{\pi}_r} \right)  \prod_{k=1}^{r-1} \left\{ \left(\frac{dQ_k}{d\overline{\pi}_k}  \frac{d\omega_{k}}{d\overline{\omega}_k}  \right) d\pi_k d\Pb_k  \right\}  \\
			& \quad + \sum_{s=1}^{t} \sum_{r=1}^{s} \int  \left(m^*_s- \overline{m}_s \right) \left( \prod_{k=r}^{s}dQ_kd\Pb_k \right) \left( \frac{d\omega_{r} - d\overline{\omega}_r}{d\overline{\omega}_r} \right)  \prod_{k=1}^{r-1} \left\{ \left(\frac{dQ_k}{d\overline{\pi}_k}  \frac{d\omega_{k}}{d\overline{\omega}_k}  \right) d\pi_k d\Pb_k  \right\}  \\
			& \quad + \sum_{s=1}^{t} \int  \left(m^*_s- \overline{m}_s \right)   \left( \prod_{r=1}^{s} dQ_r d\Pb_r \right) + \left(m^*_0- \overline{m}_0 \right).
		\end{align*}
		By Lemma 5 in \citet{Kennedy17} it follows
		$$
		\sum_{s=1}^{t} \int  \left(m^*_s- \overline{m}_s \right)   \left( \prod_{r=1}^{s} dQ_r d\Pb_r \right) = m_0 - m^*_0.
		$$
		Putting all these together, we have
		\begin{align*}
			\E\{ \varphi^*(Z;\overline{\bm{\eta}}) \} &= m_0 -\overline{m}_0 \\
			& +\sum_{s=1}^{t} \sum_{r=1}^{s} \int  \left(m^*_s- \overline{m}_s \right) \left( \prod_{k=1}^{s} dQ_k d\Pb_k \frac{d\omega_{k}}{d\overline{\omega}_k} \right) \left( \frac{d\pi_{r} - d\overline{\pi}_r}{d\overline{\pi}_r} \right)  \prod_{k=1}^{r-1} \left\{ \left(\frac{d\pi_k}{d\overline{\pi}_k}  \frac{d\omega_{k}}{d\overline{\omega}_k}  \right)  \right\}  \\
			& \quad + \sum_{s=1}^{t} \sum_{r=1}^{s} \int  \left(m^*_s- \overline{m}_s \right) \left( \prod_{k=1}^{s}dQ_kd\Pb_k \right) \left( \frac{d\omega_{r} - d\overline{\omega}_r}{d\overline{\omega}_r} \right)  \prod_{k=1}^{r-1} \left\{ \left(\frac{d\pi_k}{d\overline{\pi}_k}  \frac{d\omega_{k}}{d\overline{\omega}_k}  \right) \right\}
		\end{align*}
		, which yields the formula we have in Lemma \ref{lem:remainder_1}.
	\end{proof}
	
	\begin{lemma} \label{lem:remainder_2}
		Let $\zeta^*(z;\overline{\bm{\eta}},t)$ denote the contribution to
		the efficient influence function $\varphi^*(z;{\bm{\eta}},t)$ due to dependence between $\Pb$ and $Q$ as given in Lemma \ref{lem:eif_2}. Then for two different intervention distributions $Q$ and $\overline{Q}$ whose corresponding densities are $dQ_t$ and $d\overline{Q}_t$ respectively with respect to some dominating measure for $t = 1, ..., t$, we have the 1st-order Von Mises expansion
		\begin{equation*} 
			\begin{aligned}
				\psi({\Pb};\overline{Q}) -  \psi & ({\Pb};Q)  = \int  \zeta^*(z;\overline{\bm{\eta}},t)  d\Pb(z) \\
				& + \sum_{s=1}^{t}  \int \overline{\phi}_s d\pi_s (m_s - \overline{m}_s) d\nu d\Pb_s   \prod_{r=0}^{s-1}  \left( \frac{d\overline{Q}_r}{d\overline{\pi}_r}  \frac{d{\omega}_r}{d\overline{\omega}_r} d{\pi}_r d{\Pb}_r  \right) \\
    			&  + \sum_{s=1}^{t} \sum_{r=1}^{s} \int \overline{\phi}_s d\pi_s m_s d\nu d\Pb_s  \left( \prod_{k=0}^{s-1} d\overline{Q}_kd\Pb_k\frac{d\omega_k}{d\overline{\omega}_k} \right) \left( \frac{d\pi_r - d\overline{\pi}_r}{d\overline{\pi}_r} \right) \prod_{k=1}^{s-1}  \left(\frac{d\pi_k}{d\overline{\pi}_k}  \frac{d\omega_{k}}{d\overline{\omega}_k}  \right) \\
    			&  + \sum_{s=1}^{t} \sum_{r=1}^{s} \int \overline{\phi}_s d\pi_s m_s d\nu d\Pb_s  \left( \prod_{r=0}^{t-1} d\overline{Q}_kd\Pb_k \right) \left( \frac{d\omega_r - d\overline{\omega}_r}{d\overline{\omega}_r} \right)  \prod_{k=1}^{s-1}  \left(\frac{d\pi_k}{d\overline{\pi}_k}  \frac{d\omega_{k}}{d\overline{\omega}_k}  \right) \\
    			& + \sum_{s=1}^{t} \int m_s\left( d\overline{Q}_s - d{Q}_s - \overline{\phi}_s d\pi_s d\nu \right)d\Pb_s  \left( \prod_{r=0}^{s-1} d\overline{Q}_r d\Pb_r \right)
			\end{aligned}
		\end{equation*} 
		where we define all the notation in the same way in Lemma \ref{lem:remainder_1}.
	\end{lemma}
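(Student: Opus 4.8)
The plan is to follow the architecture of the proof of Lemma~\ref{lem:remainder_1} almost verbatim, now differentiating in the intervention direction rather than in the base measure. I would start from the closed form of $\zeta^*$ supplied by Lemma~\ref{lem:eif_2}, so that
\[
\int \zeta^*(z;\overline{\bm{\eta}},t)\,d\mathbb{P}(z)
= \sum_{s=1}^{t} \mathbb{E}\!\left[ \left(\prod_{k=0}^{s-1}\frac{d\overline{Q}_k}{d\overline{\pi}_k}\,\frac{\mathbbm{1}(R_{k+1}=1)}{d\overline{\omega}_k}\right)\int_{\mathcal{A}_s}\overline{m}_s\,\overline{\phi}_s\,d\nu(a_s)\right],
\]
where $\overline{\phi}_s$ is the influence function of $d\overline{Q}_s$ from Lemma~\ref{lem:eif_dQ}. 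Peeling off one time index at a time by iterated expectation---conditioning successively on $(H_s,A_s,R_s)$, then $(H_{s-1},A_{s-1},R_{s-1})$, and so on---I would use the equivalence $\mathbbm{1}(R_{k+1}=1)\equiv\mathbbm{1}(R_{k+1}=1,R_k=1)$, the law of total probability, and the missingness identities of Lemma~\ref{lem:identification} (so that each $d\mathbb{P}_{k+1}$ may be read as conditioning on $R_{k+1}=1$) to integrate out the censoring and treatment weights. This turns every censoring factor $\mathbbm{1}(R_{k+1}=1)/d\overline{\omega}_k$ into the ratio $d\omega_k/d\overline{\omega}_k$ and every treatment factor into $(d\overline{Q}_k/d\overline{\pi}_k)\,d\pi_k$, exactly as in the first lemma.

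After the weights are integrated out, the expansion is a sum over $s$ of terms $\int \overline{\phi}_s\,d\pi_s\,\overline{m}_s\,d\nu\,d\mathbb{P}_s$ carrying accumulated ratio products $\prod_{r}(d\pi_r/d\overline{\pi}_r)(d\omega_r/d\overline{\omega}_r)$. The core manipulation is then the same telescoping add-and-subtract used in Lemma~\ref{lem:remainder_1}: writing $d\pi_r/d\overline{\pi}_r = 1 + (d\pi_r-d\overline{\pi}_r)/d\overline{\pi}_r$ and $d\omega_r/d\overline{\omega}_r = 1 + (d\omega_r-d\overline{\omega}_r)/d\overline{\omega}_r$ and peeling one stage at a time, I would extract the second and third remainder sums (those carrying $(d\pi_r-d\overline{\pi}_r)/d\overline{\pi}_r$ and $(d\omega_r-d\overline{\omega}_r)/d\overline{\omega}_r$). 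Splitting $\overline{m}_s = m_s + (\overline{m}_s - m_s)$ then peels off the first remainder sum, leaving a leftover that still carries the factor $\overline{\phi}_s\,d\pi_s\,d\nu$ in place of a genuine change of intervention density.

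The step I expect to be the real obstacle---and the feature absent from Lemma~\ref{lem:remainder_1}, where $Q$ was held fixed---is converting that score contribution $\overline{\phi}_s\,d\pi_s\,d\nu$ into the actual finite difference $d\overline{Q}_s - dQ_s$ of the two intervention densities. Since the incremental density in~\eqref{eqn:incr-intv-ps} is an explicit smooth function of the propensity score, I would Taylor-expand $dQ_s$ about $d\overline{Q}_s$: by the defining property of the influence function in Lemma~\ref{lem:eif_dQ}, its linear part is precisely $\overline{\phi}_s\,d\pi_s\,d\nu$, so the discrepancy $d\overline{Q}_s - dQ_s - \overline{\phi}_s\,d\pi_s\,d\nu$ is a genuine second-order quantity in $\pi_s - \overline{\pi}_s$. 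Gathering this discrepancy across stages, weighted by $m_s$ and the intervention product $\prod_{r=0}^{s-1}d\overline{Q}_r\,d\mathbb{P}_r$, yields the fourth remainder sum, while the residual linear piece telescopes---via Lemma~5 of \citet{Kennedy17}, as in Lemma~\ref{lem:remainder_1}---to the leading difference $\psi(\mathbb{P};\overline{Q}) - \psi(\mathbb{P};Q)$, giving the stated identity after reconciling signs. The only delicate bookkeeping is to order the three simultaneous expansions (treatment weights innermost, then censoring, then the $Q$-linearization) so that each cross term is attributed to first or second order exactly once and none is double-counted.
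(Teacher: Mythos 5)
Your proposal follows essentially the same route as the paper's proof: compute $\int \zeta^*\,d\Pb$ by iterated expectation (using the missingness identities to turn the censoring indicators into $d\omega_k/d\overline{\omega}_k$ ratios), peel off the $\pi$- and $\omega$-ratio discrepancies by the same add-and-subtract telescoping as in Lemma~\ref{lem:remainder_1}, split $\overline{m}_s = m_s + (\overline{m}_s - m_s)$, and finally combine the leftover linear score term with the telescoped product difference $\psi(\Pb;\overline{Q})-\psi(\Pb;Q)=\sum_s\int m_s(d\overline{Q}_s-dQ_s)\,d\Pb_s\prod_{r<s}d\overline{Q}_r\,d\Pb_r$ to produce the fourth remainder $\sum_s\int m_s(d\overline{Q}_s-dQ_s-\overline{\phi}_s\,d\pi_s\,d\nu)\,d\Pb_s\prod_{r<s}d\overline{Q}_r\,d\Pb_r$. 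The only cosmetic differences are that the paper invokes Lemma 6 (rather than Lemma 5) of \citet{Kennedy17} for the product telescoping here, and that your Taylor-expansion framing of the last step is motivation for why that term is second order (which the paper only needs later, in Lemma~\ref{lem:upperbound_remainder}) rather than part of the identity itself.
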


	\begin{proof}
		From Lemma 6 in \citet{Kennedy17} and by Lemma \ref{lem:identification}, we have
		\begin{align*}
			\psi({\Pb};\overline{Q}) -  \psi({\Pb};Q) &= \int m_t \left( \prod_{s=1}^{t} d\overline{Q}_sd\Pb_s  - \prod_{s=1}^{t} dQ_sd\Pb_s  \right)	\\
			&=  \sum_{s=1}^{t} \int  m_t \left( d\overline{Q}_t - d{Q}_t \right) d\Pb_t  \prod_{r=0}^{s-1} d\overline{Q}_s d\Pb_s.
		\end{align*}
		Next, for the expected contribution to the influence function due to estimating $Q$ when it depends on $\Pb$, we have that
		\begin{align*}
			\E[\zeta^*(Z;\overline{\bm{\eta}})] &= \E \left[ \sum_{s=1}^{t} \int \overline{\phi}_s \overline{m}_s d\nu  \left( \prod_{r=0}^{s-1}  \frac{d\overline{Q}_r}{d\overline{\pi}_r} \frac{1}{d\overline{\omega}_r} \right) \mathbbm{1}(R_{s}=1)  \right] \\
			&=\sum_{s=1}^{t}   \E \left[ \int \overline{\phi}_s d\pi_s \overline{m}_s d\nu \left( \prod_{r=0}^{s-1} \frac{d\overline{Q}_r}{d\overline{\pi}_r} \frac{1}{d\overline{\omega}_r} \right) \mathbbm{1}(R_{s}=1) \mathbbm{1}(R_{s-1}=1)  \right] \\
			&=  \sum_{s=1}^{t} \E \left\{ \left[ \int \overline{\phi}_s d\pi_s \overline{m}_s d\nu d\Pb_s  \left( \prod_{r=0}^{s-1} \frac{d\overline{Q}_r}{d\overline{\pi}_r} \frac{1}{d\overline{\omega}_r} \right)  \mathbbm{1}(R_{s-1}=1) \right] d\Pb(R_s=1 \mid H_{s-1}, A_{s-1}, R_{s-1}=1) \right\} \\
			&=  \sum_{s=1}^{t} \E \left\{  \int \overline{\phi}_s d\pi_s \overline{m}_s d\nu d\Pb_s  \prod_{r=0}^{s-1} \left( \frac{d\overline{Q}_r}{d\overline{\pi}_r} \frac{1}{d\overline{\omega}_r} \right) d{\omega}_{s-1} \mathbbm{1}(R_{s-1}=1)  \right\} \\
			&= \sum_{s=1}^{t} \int \overline{\phi}_s d\pi_s \overline{m}_s d\nu d\Pb_s  \prod_{r=0}^{s-1}  \left( \frac{d\overline{Q}_r}{d\overline{\pi}_r}  \frac{1}{d\overline{\omega}_r} d{\pi}_r d{\Pb}_r d{\omega}_r \right)   
		\end{align*}
		where the first equality by definition, the second by iterated expectation conditioning on $(H_s, R_s)$ and averaging over $A_s$, the third by iterated expectation conditioning on $(H_{s-1}, A_{s-1}, R_{s-1})$ and law of total probability, and the fifth by repeating the process $t$ times. 
		
		Now, we further expand our last expression as
		\begin{align*}
			\sum_{s=1}^{t} & \int \overline{\phi}_s d\pi_s \overline{m}_s d\nu d\Pb_s   \prod_{r=0}^{s-1}  \left( \frac{d\overline{Q}_r}{d\overline{\pi}_r}  \frac{1}{d\overline{\omega}_r} d{\pi}_r d{\Pb}_r d{\omega}_r \right) \\
			&= \sum_{s=1}^{t}  \int \overline{\phi}_s d\pi_s (\overline{m}_s - m_s) d\nu d\Pb_s   \prod_{r=0}^{s-1}  \left( \frac{d\overline{Q}_r}{d\overline{\pi}_r}  \frac{1}{d\overline{\omega}_r} d{\pi}_r d{\Pb}_r d{\omega}_r \right) \\
			& \qquad + \sum_{s=1}^{t}  \int \overline{\phi}_s d\pi_s m_s d\nu d\Pb_s   \prod_{r=0}^{s-1}  \left( \frac{d\overline{Q}_r}{d\overline{\pi}_r}  \frac{1}{d\overline{\omega}_r} d{\pi}_r d{\Pb}_r d{\omega}_r \right)  \\
			&= \sum_{s=1}^{t}  \int \overline{\phi}_s d\pi_s (\overline{m}_s - m_s) d\nu d\Pb_s   \prod_{r=0}^{s-1}  \left( \frac{d\overline{Q}_r}{d\overline{\pi}_r}  \frac{1}{d\overline{\omega}_r} d{\pi}_r d{\Pb}_r d{\omega}_r \right) \\
			& \quad +\sum_{s=1}^{t} \sum_{r=1}^{s} \int \overline{\phi}_s d\pi_s m_s d\nu d\Pb_s  \left( \prod_{k=0}^{s-1} d\overline{Q}_kd\Pb_k\frac{d\omega_k}{d\overline{\omega}_k} \right) \left( \frac{d\pi_r - d\overline{\pi}_r}{d\overline{\pi}_r} \right) \prod_{k=1}^{s-1}  \left(\frac{d\pi_k}{d\overline{\pi}_k}  \frac{d\omega_{k}}{d\overline{\omega}_k}  \right) \\
			& \quad + \sum_{s=1}^{t} \sum_{r=1}^{s} \int \overline{\phi}_s d\pi_s m_s d\nu d\Pb_s  \left( \prod_{r=0}^{t-1} d\overline{Q}_kd\Pb_k \right) \left( \frac{d\omega_r - d\overline{\omega}_r}{d\overline{\omega}_r} \right)  \prod_{k=1}^{s-1}  \left(\frac{d\pi_k}{d\overline{\pi}_k}  \frac{d\omega_{k}}{d\overline{\omega}_k}  \right)  \\
			& \quad + \sum_{s=1}^{t} \int  \overline{\phi}_s d\pi_s m_s d\nu d\Pb_s   \left( \prod_{r=0}^{s-1} d\overline{Q}_s d\Pb_s \right)
		\end{align*}
		where the first equality follows by adding and subtracting the second term, an the second by the same steps used in Lemma \ref{lem:remainder_1}.
		
		With the last term in the last expression above, it follows
		\begin{align*}
			\psi({\Pb};\overline{Q}) -  \psi({\Pb};Q) & - \sum_{s=1}^{t} \int  \overline{\phi}_s d\pi_s m_s d\nu d\Pb_s   \left( \prod_{r=0}^{s-1} d\overline{Q}_r d\Pb_r \right) \\	
			&= \sum_{s=1}^{t} \int m_s\left( d\overline{Q}_s - d{Q}_s - \overline{\phi}_s d\pi_s d\nu \right)d\Pb_s  \left( \prod_{r=0}^{s-1} d\overline{Q}_r d\Pb_r \right).
		\end{align*}
		
		Putting these all together, finally we have
		\begin{align*}
			\Psi({\Pb};\overline{Q}) -  \Psi & ({\Pb};Q)  = \E[ \zeta^*(Z;\overline{\bm{\eta}})] \\
			& + \sum_{s=1}^{t}  \int \overline{\phi}_s d\pi_s (m_s - \overline{m}_s) d\nu d\Pb_s   \prod_{r=0}^{s-1}  \left( \frac{d\overline{Q}_r}{d\overline{\pi}_r}  \frac{d{\omega}_r}{d\overline{\omega}_r} d{\pi}_r d{\Pb}_r  \right) \\
			&  + \sum_{s=1}^{t} \sum_{r=1}^{s} \int \overline{\phi}_s d\pi_s m_s d\nu d\Pb_s  \left( \prod_{k=0}^{s-1} d\overline{Q}_kd\Pb_k\frac{d\omega_k}{d\overline{\omega}_k} \right) \left( \frac{d\pi_r - d\overline{\pi}_r}{d\overline{\pi}_r} \right) \prod_{k=1}^{s-1}  \left(\frac{d\pi_k}{d\overline{\pi}_k}  \frac{d\omega_{k}}{d\overline{\omega}_k}  \right) \\
			&  + \sum_{s=1}^{t} \sum_{r=1}^{s} \int \overline{\phi}_s d\pi_s m_s d\nu d\Pb_s  \left( \prod_{r=0}^{t-1} d\overline{Q}_kd\Pb_k \right) \left( \frac{d\omega_r - d\overline{\omega}_r}{d\overline{\omega}_r} \right)  \prod_{k=1}^{s-1}  \left(\frac{d\pi_k}{d\overline{\pi}_k}  \frac{d\omega_{k}}{d\overline{\omega}_k}  \right) \\
			& + \sum_{s=1}^{t} \int m_s\left( d\overline{Q}_s - d{Q}_s - \overline{\phi}_s d\pi_s d\nu \right)d\Pb_s  \left( \prod_{r=0}^{s-1} d\overline{Q}_r d\Pb_r \right).
		\end{align*}
	\end{proof}
	
	Finally, the next Lemma completes the proof of the Theorem \ref{thm:convergence}. 
	
	\begin{lemma} \label{lem:upperbound_remainder}
		Remainders of the von Mises expansion from Lemma \ref{lem:remainder_1} and \ref{lem:remainder_2} are both diminishing at rate of $n^{-\frac{1}{2}}$ uniformly in $\delta$, if 
		$$
		\left( \underset{\delta\in \mathcal{D}}{sup}\Vert m_{\delta,s} -  \widehat{m}_{\delta,s}  \Vert + \Vert \pi_s -  \widehat{\pi}_{s} \Vert \right) \Big(  \Vert \pi_r - \overline{\pi}_r\Vert + \Vert \omega_r - \overline{\omega}_r\Vert \Big)  = o_\Pb(\frac{1}{\sqrt{n}}),
		$$
		for $\forall r \leq s \leq t$.
		
	\end{lemma}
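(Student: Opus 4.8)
The plan is to bound each term appearing in the two remainder expansions directly, combining a uniform control of the weight factors with the Cauchy--Schwarz inequality and the hypothesized product rate. Throughout I would identify the generic distribution $\overline{\Pb}$ and its nuisances $\overline{\bm{\eta}}$ with the cross-fitted estimator $\widehat{\bm{\eta}}_{-k}$, so that $\Vert \pi_r - \overline{\pi}_r \Vert = \Vert \pi_r - \widehat{\pi}_r \Vert$ and $\Vert \omega_r - \overline{\omega}_r \Vert = \Vert \omega_r - \widehat{\omega}_r \Vert$. Both remainders are finite sums (over $s \le t$ and $r \le s$, with $t \le T$ fixed) of integrals of the schematic form $\int (m^*_s - \overline{m}_s)\, W\, \big( \tfrac{d\pi_r - d\overline{\pi}_r}{d\overline{\pi}_r} \big)\, d\Pb$ and its $\omega$-analogue with $\big( \tfrac{d\omega_r - d\overline{\omega}_r}{d\overline{\omega}_r} \big)$, where $W$ collects all the ratio/weight factors. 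The first step is to show that every such weight factor is uniformly bounded by a finite constant. This is where the incremental structure is essential: since $\delta \in [\delta_l, \delta_u]$ with $0 < \delta_l \le \delta_u < \infty$, the ratios $dQ_k/d\overline{\pi}_k = (\delta A_k + 1 - A_k)/(\delta\overline{\pi}_k + 1 - \overline{\pi}_k)$ and $d\pi_k/d\overline{\pi}_k$ are bounded above and below \emph{without} any treatment positivity assumption, because the denominator $1 + (\delta-1)\overline{\pi}_k \ge \min(1,\delta_l) > 0$. The missingness ratios $d\omega_k/d\overline{\omega}_k$ are controlled by Assumption \ref{assumption:A3} (together with the analogous lower bound on the estimate $\overline{\omega}_k$), while $|m_s|, |\widehat{m}_s| \le C$ by Assumption 2) of Theorem \ref{thm:convergence}; the $\overline{\phi}_s$ factors appearing in Lemma \ref{lem:remainder_2} are bounded by the same incremental argument applied through Lemma \ref{lem:eif_dQ}.

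After absorbing these weights into a universal constant, I would apply Cauchy--Schwarz to each integral, obtaining a bound of the form $\lesssim \Vert m^*_s - \overline{m}_s \Vert \big( \Vert \pi_r - \widehat{\pi}_r \Vert + \Vert \omega_r - \widehat{\omega}_r \Vert \big)$ for each pair $r \le s$. It then remains to relate the pseudo-regression error $\Vert m^*_s - \overline{m}_s \Vert$ to the primitive nuisance errors in the statement. Using the recursive identities $m^*_s = \int \overline{m}_{s+1}\, dQ_{s+1}\, d\Pb_{s+1}$ and $\overline{m}_s = \int \overline{m}_{s+1}\, dQ_{s+1}\, d\overline{\Pb}_{s+1}$, I would write $m^*_s - \overline{m}_s = (m^*_s - m_s) + (m_s - \overline{m}_s)$ and unwind the recursion downward from $t$: the term $m_s - \overline{m}_s$ is the direct regression estimation error of $m_s$, and $m^*_s - m_s$ propagates the future error $\overline{m}_{s+1} - m_{s+1}$ through a bounded weight. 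This yields $\Vert m^*_s - \overline{m}_s \Vert \lesssim \sum_{s' \ge s} \big( \sup_{\delta} \Vert m_{\delta,s'} - \widehat{m}_{\delta,s'} \Vert + \Vert \pi_{s'} - \widehat{\pi}_{s'} \Vert \big)$, the $\pi$-errors entering because $dQ_{s'}$ depends on $\pi_{s'}$ (as in Lemma \ref{lem:remainder_2}) and the bounded weights make the associated Lipschitz contributions harmless. Consequently each term is dominated by a product matching exactly the left-hand side of the hypothesis, namely $\big( \sup_\delta \Vert m_{\delta,s} - \widehat{m}_{\delta,s} \Vert + \Vert \pi_s - \widehat{\pi}_s \Vert \big)\big( \Vert \pi_r - \widehat{\pi}_r \Vert + \Vert \omega_r - \widehat{\omega}_r \Vert \big)$, which is $o_\Pb(n^{-1/2})$ for every $r \le s \le t$.

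Finally, because $t \le T$ is fixed, both remainders are finite sums of such terms, so summing the $o_\Pb(n^{-1/2})$ bounds preserves the rate. Uniformity in $\delta$ is immediate: the $\sup_\delta$ is already carried inside the $m$-error factor of the hypothesis, and every weight factor is bounded and Lipschitz in $\delta$ on the compact set $\mathcal{D}$ (as established in Part 1 of the proof of Theorem \ref{thm:convergence}), so taking $\sup_{\delta \in \mathcal{D}}$ commutes with the finite-sum bounding above. I expect the main obstacle to be the bookkeeping in the two steps just described --- establishing the uniform boundedness of the long products of density ratios without invoking treatment positivity, and correctly unwinding the recursion for $\Vert m^*_s - \overline{m}_s \Vert$ so that only the stated $m$- and $\pi$-errors (and no covariate-density errors) survive. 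Once these are in place, the Cauchy--Schwarz step and the finiteness of $T$ make the conclusion routine.
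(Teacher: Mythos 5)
Your treatment of the remainder from Lemma \ref{lem:remainder_1} is essentially the paper's argument: decompose $m^*_s - \overline{m}_s = (m^*_s - m_s) + (m_s - \overline{m}_s)$ via the recursion (the paper does exactly this by adding and subtracting $m_s$, so that $m^*_s - m_s = \int(\overline{m}_{s+1}-m_{s+1})dQ_{s+1}d\Pb_{s+1}$ is itself a consecutive $m$-estimation error and no further unwinding is needed), bound the density-ratio weights uniformly using the incremental structure and Assumption (\ref{assumption:A3}), and pair each $m$-error with the explicit $(d\pi_r - d\overline{\pi}_r)/d\overline{\pi}_r$ or $(d\omega_r - d\overline{\omega}_r)/d\overline{\omega}_r$ factor via Cauchy--Schwarz. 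That part is fine.

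The gap is in your handling of the remainder from Lemma \ref{lem:remainder_2}. You propose to treat the $\overline{\phi}_s$ factors as just another uniformly bounded weight to be absorbed into a constant, and then to apply the same schematic bound $\Vert m^*_s - \overline{m}_s\Vert\,(\Vert\pi_r-\widehat{\pi}_r\Vert + \Vert\omega_r-\widehat{\omega}_r\Vert)$. But the terms of that remainder do not all carry an explicit difference factor of the form $(d\pi_r - d\overline{\pi}_r)/d\overline{\pi}_r$: the first term is $\sum_s\int\overline{\phi}_s\,d\pi_s\,(m_s-\overline{m}_s)\,d\nu\,d\Pb_s\prod(\cdots)$ and the last is $\sum_s\int m_s(d\overline{Q}_s - dQ_s - \overline{\phi}_s d\pi_s d\nu)d\Pb_s\prod(\cdots)$. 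If $\overline{\phi}_s$ is merely bounded, the first of these is controlled only by the single error $\Vert m_s - \overline{m}_s\Vert$ and the last only by $O(\Vert\pi_s-\overline{\pi}_s\Vert)$ (or worse), neither of which is $o_\Pb(n^{-1/2})$ under the stated product hypothesis. The missing ingredient is the pair of exact identities from Lemma \ref{lem:eif_dQ} (and Lemma 6 of \citealp{Kennedy17}): $\int\overline{\phi}_s\,d\pi_s\,d\nu = \delta(2a_s-1)(\pi_s-\overline{\pi}_s)/(\delta\overline{\pi}_s+1-\overline{\pi}_s)^2$, which shows every $\overline{\phi}_s$-weighted term carries an explicit factor $\pi_s - \overline{\pi}_s$ that then pairs with $m_s-\overline{m}_s$, $\pi_r-\overline{\pi}_r$, or $\omega_r-\overline{\omega}_r$; and $d\overline{Q}_s - dQ_s - \int\overline{\phi}_s d\pi_s d\nu = \delta(\delta-1)(2a_s-1)(\overline{\pi}_s-\pi_s)^2/\{(\delta\overline{\pi}_s+1-\overline{\pi}_s)^2(\delta\pi_s+1-\pi_s)\}$, which shows the $Q$-linearization error is genuinely quadratic in $\pi_s-\overline{\pi}_s$ (covered by the hypothesis with $r=s$). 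Without invoking these identities the product structure required by the hypothesis is lost, and the second remainder cannot be shown to vanish at the claimed rate.
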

	% \subsection*{Proof of Lemma \ref{lem:upperbound_remainder}}
	\begin{proof}
		The remainder term of the Von Mises expansion from Lemma \ref{lem:remainder_1} equals
		\begin{align*}
			& \sum_{s=1}^{t} \sum_{r=1}^{s} \int  \left(m^*_s- \overline{m}_s \right) \left( \prod_{k=1}^{s} dQ_k d\Pb_k \frac{d\omega_{k}}{d\overline{\omega}_k} \right) \left( \frac{d\pi_{r} - d\overline{\pi}_r}{d\overline{\pi}_r} \right)  \prod_{k=1}^{r-1} \left\{ \left(\frac{d\pi_k}{d\overline{\pi}_k}  \frac{d\omega_{k}}{d\overline{\omega}_k}  \right)  \right\}  \\
			& \quad + \sum_{s=1}^{t} \sum_{r=1}^{s} \int  \left(m^*_s- \overline{m}_s \right) \left( \prod_{k=1}^{s}dQ_kd\Pb_k \right) \left( \frac{d\omega_{r} - d\overline{\omega}_r}{d\overline{\omega}_r} \right)  \prod_{k=1}^{r-1} \left\{ \left(\frac{d\pi_k}{d\overline{\pi}_k}  \frac{d\omega_{k}}{d\overline{\omega}_k}  \right) \right\} \\
			& = \sum_{s=1}^{t} \sum_{r=1}^{s} \int \Big\{ (\overline{m}_{s+1} - {m}_{s+1})dQ_{s+1}d\Pb_{s+1}  + (m_s - \overline{m}_s)\Big\} \left( \prod_{k=1}^{s} dQ_k d\Pb_k \frac{d\omega_{k}}{d\overline{\omega}_k} \right) \left( \frac{d\pi_r - d\overline{\pi}_r}{d\overline{\pi}_r} \right)  \prod_{k=1}^{r-1}  \left(\frac{d\pi_k}{d\overline{\pi}_k}  \frac{d\omega_{k}}{d\overline{\omega}_k}  \right) \\
			& \quad + \sum_{s=1}^{t} \sum_{r=1}^{s} \int \Big\{ (\overline{m}_{s+1} - {m}_{s+1})dQ_{s+1}d\Pb_{s+1}  + (m_s - \overline{m}_s)\Big\} \left( \prod_{k=1}^{s}dQ_kd\Pb_k \right) \left( \frac{d\omega_r - d\overline{\omega}_r}{d\overline{\omega}_r} \right)  \prod_{k=1}^{r-1}  \left(\frac{d\pi_k}{d\overline{\pi}_k}  \frac{d\omega_{k}}{d\overline{\omega}_k}  \right) \\
			& \lesssim \sum_{s=1}^{t} \Big(  \Vert \overline{m}_{s+1} - {m}_{s+1} \Vert +  \Vert m_s -  \overline{m}_{s}  \Vert \Big) \sum_{r=1}^{s} \Big(  \Vert \pi_r - \overline{\pi}_r \Vert + \Vert \omega_r - \overline{\omega}_r\Vert \Big)
		\end{align*}
		where we obtain the first inequality simply by adding and subtracting $m_s$.
		
		For the remainder term from Lemma \ref{lem:remainder_2}, first note that by Lemma \ref{lem:identification} and Lemma 6 of \citet{Kennedy17},
		$$
		\int \overline{\phi}_s d\pi_s  = \frac{\delta(2a_s-1)(\pi_s - \overline{\pi}_s  )}{(\delta\overline{\pi}_s+1-\overline{\pi}_s)^2},
		$$
		$$
		d\overline{Q}_s - d{Q}_s - \int \overline{\phi}_s d\pi_s  = \frac{\delta(\delta-1)(2a_s-1)(\overline{\pi}_s - \pi_s)^2}{(\delta\overline{\pi}_s+1-\overline{\pi}_s)^2(\delta{\pi}_s+1-{\pi}_s)}.
		$$
		
		Hence, it immediately follows that the remainder term in Lemma \ref{lem:remainder_2} can be bounded by
		\begin{align*}
			& \sum_{s=1}^{t}  \int \overline{\phi}_s d\pi_s (m_s - \overline{m}_s) d\nu d\Pb_s   \prod_{r=0}^{s-1}  \left( \frac{d\overline{Q}_r}{d\overline{\pi}_r}  \frac{d{\omega}_r}{d\overline{\omega}_r} d{\pi}_r d{\Pb}_r  \right) \\
			&  + \sum_{s=1}^{t} \sum_{r=1}^{s} \int \overline{\phi}_s d\pi_s m_s d\nu d\Pb_s  \left( \prod_{k=0}^{s-1} d\overline{Q}_kd\Pb_k\frac{d\omega_k}{d\overline{\omega}_k} \right) \left( \frac{d\pi_r - d\overline{\pi}_r}{d\overline{\pi}_r} \right) \prod_{k=1}^{s-1}  \left(\frac{d\pi_k}{d\overline{\pi}_k}  \frac{d\omega_{k}}{d\overline{\omega}_k}  \right) \\
			&  + \sum_{s=1}^{t} \sum_{r=1}^{s} \int \overline{\phi}_s d\pi_s m_s d\nu d\Pb_s  \left( \prod_{r=0}^{t-1} d\overline{Q}_kd\Pb_k \right) \left( \frac{d\omega_r - d\overline{\omega}_r}{d\overline{\omega}_r} \right)  \prod_{k=1}^{s-1}  \left(\frac{d\pi_k}{d\overline{\pi}_k}  \frac{d\omega_{k}}{d\overline{\omega}_k}  \right) \\
			& + \sum_{s=1}^{t} \int m_s\left( d\overline{Q}_s - d{Q}_s - \overline{\phi}_s d\pi_s d\nu \right)d\Pb_s  \left( \prod_{r=0}^{s-1} d\overline{Q}_r d\Pb_r \right) \\
			& \lesssim  \sum_{s=1}^{t} \Vert \pi_s -  \overline{\pi}_{s} \Vert \left\{ \Vert m_s -  \overline{m}_{s}  \Vert +  \sum_{r=1}^{s} \Big(  \Vert \pi_r - \overline{\pi}_r\Vert + \Vert \omega_r - \overline{\omega}_r\Vert \Big) + \Vert \pi_s -  \overline{\pi}_{s} \Vert\right\}.
		\end{align*}
		
		Therefore, if we have
		$$
		\Big( \Vert m_{s} -  \widehat{m}_{s}  \Vert + \Vert \pi_s -  \widehat{\pi}_{s} \Vert \Big) \Big(  \Vert \pi_r - \overline{\pi}_r\Vert + \Vert \omega_r - \overline{\omega}_r\Vert \Big)  = o_\Pb(\frac{1}{\sqrt{n}}), \quad \forall r \leq s \leq t,
		$$
		then both of the remainders from Lemma \ref{lem:remainder_1} and \ref{lem:remainder_2} are diminishing at rate of $n^{-\frac{1}{2}}$ uniformly in $\delta$.
	\end{proof}

% 	Lemma \ref{lem:upperbound_remainder} substantiates that having all nuisance functions estimated at rate of $n^{-1/4}$ is a sufficient condition.

\end{document}